\documentclass[12pt]{article}

\usepackage{a4wide}
\usepackage{amsmath,amsfonts,amssymb,
amsthm,color, mathabx}
\usepackage{hyperref}
\usepackage{csquotes}

\usepackage[active]{srcltx}

\usepackage{mathrsfs}
\usepackage{amscd,graphicx}

\usepackage{enumitem}
\setlist{leftmargin=*}

\let\OLDthebibliography\thebibliography
\renewcommand\thebibliography[1]{
  \OLDthebibliography{#1}
  \setlength{\parskip}{0pt}
  \setlength{\itemsep}{0pt plus 0.3ex}
}

\newtheorem{theorem}{Theorem}[section]
\newtheorem{definition}[theorem]{Definition}
\newtheorem{proposition}[theorem]{Proposition}

\newtheorem{lemma}[theorem]{Lemma}

\theoremstyle{definition}
\newtheorem{remark}[theorem]{Remark}

\numberwithin{equation}{section}

\def\C{{\mathbb C}}
\def\N{{\mathbb N}}
\def\Z{{\mathbb Z}}
\def\R{{\mathbb R}}
\def\supp{\mathop{\mathrm{supp}} \nolimits}
\newcommand{\tr}{\mathrm{Tr}}

\title{Magnetic Weyl Super Calculus:\\ Schatten-class properties, commutator criterion, and complete positivity\par}

\author{H.~D. Cornean\footnote{Department of Mathematical Sciences, Aalborg University, Thomas Manns Vej 23, 9220 Aalborg, Denmark; e-mail: cornean@math.aau.dk}\, and M.~H. Thorn\footnote{Department of Mathematical Sciences, Aalborg University, Thomas Manns Vej 23, 9220 Aalborg, Denmark; e-mail: mikkelht@math.aau.dk}}

\begin{document}

\maketitle

 \begin{abstract}
    We combine our previous results  on magnetic pseudo-differential operators for Hörmander symbols dominated by tempered weights \cite{Thorn2026} with the magnetic Weyl super calculus of Lee and Lein \cite{LeeLein2022,LeeLein2025}. This allows us to extend some previous results on the semi-super and super Moyal algebra, as well as to prove boundedness, compactness, and Schatten-class properties of super operators. 
    
    Moreover, we prove a Beals-type commutator criterion for super operators and we also formulate sufficient conditions on super symbols in order to give rise to completely positive and trace preserving maps.

    For most of the proofs we use decompositions of operators and super operators based on Parseval frames of smoothing operators.
 \end{abstract}

\tableofcontents

\section{Introduction}

A systematic study of the gauge-covariant magnetic pseudo-differential super calculus was initiated by Lee and Lein \cite{LeeLein2022}, inspired by the magnetic pseudo-differential calculus developed in the 2000s by Iftimie, M{\u a}ntoiu, and Purice \cite{Mantoiu2004,Iftimie2007}.  

It turned out from the seminal work of Feichtinger and Gröchenig \cite{Feichtinger1997,Grochenig2006} that using Gabor frames in the study of ``usual" non-magnetic pseudo-differential operators can shed a completely different light on the theory. Frame decompositions adapted to the magnetic case have also played a crucial role in the study of magnetic pseudo-differential operators \cite{CorneanHelfferPurice2018,CorneanGarde2019,CorneanHelfferPurice2024,Bachmann2025,Thorn2026,Cornean2026}, as well as topics related to exponentially localized magnetic Wannier functions \cite{CorneanMonacoMoscolari2019}. Some of these ideas were later on lifted to the magnetic pseudo-differential super calculus again by Lee and Lein in \cite{LeeLein2025}. 

In our present work we further develop the matrix representation for super operators and obtain new results. A partial motivation for studying this super calculus is to see how it can be extended in order to fit into the theory of  completely positive and trace preserving super operators \cite{Kraus1971,Choi1975} and Lindblad-type super operators \cite{Lindblad1976,GKS1976,Holevo1998,Alazzawi2015}, both of which being indispensable for quantum information theory \cite{Chuang2010}. Studying both through a pseudo-differential lens will hopefully lead to new results on open quantum systems with infinite dimensional Hilbert spaces.

\paragraph{Outline:} In Section \ref{sec:defn} we give a short and compact introduction to the fundamental objects in magnetic pseudo-differential theory, which, for those already well-acquainted with it, should be enough to read the rest of the paper. For those unfamiliar with the subject we also give a more detailed account of certain objects in Subsection \ref{sec:detour}, to be read after the short introduction.

We introduce our main tools in Section \ref{sec:decompositions}, which consists of frame characterizations and decompositions of some operator spaces, along with an infinite matrix representation for super operators given in Theorem \ref{thm:matrix_rep}. These decompositions are used to prove results on: 

\begin{itemize} 
\item The algebra of super symbols in Section \ref{sec:calc}, see Proposition \ref{prop:calc_super_semi} and Proposition \ref{prop:calc_super_moyal}. 
\item Boundedness of super operators in Section \ref{sec:bound}, see Proposition \ref{prop:opspaces}, Proposition \ref{prop:continuoustensor}, and Theorem \ref{thm:caldvail}.
\item Beals-type super commutator criterion in Section \ref{sec:commutator}, see Theorem \ref{thm:beals}.
\item Sufficient conditions on super symbols so that their associated super operators are completely positive and trace preserving in Section \ref{sec:positivity}, see Theorem \ref{thm:cptp}.
\end{itemize}

\section{Preliminaries}\label{sec:defn}

\subsection{The Short Story}
Before going into details, we summarize here the minimal amount of information needed to understand the objects we will work with.  

A regular magnetic field is a smooth closed 2-form $B$ on $\R^d$, $d\in\N$, with components in $BC^\infty(\R^d)$ \cite{Mantoiu2004,CorneanHelfferPurice2024}. We fix such a $B$ and define a magnetic potential, $dA=B$, in the transversal gauge:
\begin{equation*}
    A_k(x)=\sum_{j=1}^d\int_0^1\mathrm{d}s\,s x_j B_{jk}(sx).
\end{equation*}
Since $B$ has components in $BC^\infty(\R^d)$, $A$ has components in $C_{\text{pol}}^\infty(\R^d)$, the space of polynomially bounded smooth functions. We note the important fact that for a fixed $y\in \R^d$, the family of vector potentials 
\begin{equation*}
    A_k(x;y)=\sum_{j=1}^d\int_0^1\mathrm{d}s\,s (x_j-y_j) B_{jk}(y+s(x-y))
\end{equation*}
obeys $dA(\cdot;y)=B$, and it can grow at most linearly in $|x-y|$ together with all its derivatives. 

We denote the flux of $B$ through the triangle $\langle x,y,z\rangle$ with vertices $x,y,z\in\R^d$ by $\Gamma(x,y,z)=\int_{\langle x,y,z\rangle}B$ and the circulation of $A$ along the line segment $[y,x]$ from $y$ to $x$ by $\varphi(x,y)=\int_{[y,x]}A$. We have the important identity:
\begin{equation}\label{eq:magpotential}
    \partial_{j}\phi(\cdot,y)-A_j(\cdot )=-A_j(\cdot; y)
\end{equation}
for all $y\in\R^d$.

We now can define the magnetic Weyl quantization $\mathfrak{op}^A(\phi)$ of a symbol $\phi\in\mathscr{S}'(\R^{2d})$ to be the unique operator with distributional kernel given by \cite{Mantoiu2004}:
\begin{equation*}
    \big(k^A\phi\big)(x,y)=\frac{1}{(2\pi)^d}\int \mathrm{d}\xi\, e^{i\xi\cdot(x-y)}e^{i\varphi(x,y)}\phi\left(\frac{x+y}{2},\xi\right)
\end{equation*}
The map $k^A$, which sends the symbol $\phi$ into the kernel $k^A\phi$ of $\mathfrak{op}^A(\phi)$, is called the magnetic Weyl transform. 

The magnetic super Weyl quantization is then defined by the logic that a tensor product of symbols $\phi\otimes\psi$ should be quantized into the super operator \cite{LeeLein2022}:
\begin{equation*}
    \mathfrak{Op}^A(\phi\otimes\psi)\colon T\mapsto\mathfrak{op}^A(\phi) T\mathfrak{op}^A(\psi)
\end{equation*}
Such a super operator will be denoted by $\mathfrak{op}^A(\phi)\odot\mathfrak{op}^A(\psi)$, indicating that it ``sandwiches" its operator-variable \cite{LeeLein2025}. Tensor products $\phi\otimes\psi$ of symbols in $\mathscr{S}'(\R^{2d})$ are dense in $\mathscr{S}'(\R^{4d})$ allowing us to define a super quantization for any super symbol $\Phi\in\mathscr{S}'(\R^{4d})$. We do this by describing how the super operator $\mathfrak{Op}^A(\Phi)$ changes the kernel of the operator on which it acts. Namely, by  introducing the magnetic super Weyl transform:
\begin{equation*}
    \begin{aligned}
        \big(K^A\Phi\big)(x_L,x_R,y_L,y_R)&=\frac{1}{(2\pi)^{2d}}\int \mathrm{d}\xi_L\,\mathrm{d}\xi_R\, e^{i\left[\xi_L\cdot(x_L-y_L)-\xi_R\cdot(x_R-y_R)\right]}e^{i\left[\varphi(x_L,y_L)-\varphi(x_R,y_R)\right]}\\
        &\quad\times\Phi\left(\frac{x_L+y_L}{2},\xi_L,\frac{x_R+y_R}{2},\xi_R\right),
    \end{aligned}
\end{equation*}
the integral kernel of $\mathfrak{Op}^A(\Phi)T$ should be at least formally given by (here $\mathfrak{int}^{-1}(T)$ denotes the integral kernel of $T$):
$$\big (\mathfrak{int}^{-1}(\mathfrak{Op}^A(\Phi)T)\big )(x_L,x_R)=\int \mathrm{d}y_L\,\mathrm{d}y_R\, \big(K^A\Phi\big)(x_L,x_R,y_L,y_R) \big (\mathfrak{int}^{-1}(T)\big )(y_L,y_R).$$
We will also write 
\begin{equation*}
    \mathfrak{Op}^A(\Phi)=\mathfrak{int}\circ\mathfrak{Int}(K^A\Phi)\circ\mathfrak{int}^{-1},
\end{equation*}
where the map ``$\mathfrak{int}$" takes an integral kernel into an operator, and ``$\mathfrak{Int}$" takes a super integral kernel into a super operator. 

In terms of symbols, we primarily work with the Hörmander classes $s_0(m)$ and super classes $S_0(M)$ of smooth functions dominated by a tempered weight $m$ or $M$ \cite{Nicola2010,Thorn2026}. A tempered weight $m$ is a positive function on $\R^{2d}$, or respectively $\R^{4d}$ for $M$, such that there exists $a,C>0$ for which
\begin{equation*}
    m(X+Y)\leq Cm(X)\langle Y\rangle^a
\end{equation*}
for all $X,Y\in \R^{2d}$.

\subsection{The Detailed Detour}\label{sec:detour}

We introduce almost all definitions and notations used in this article in the following subsections. Note, non-super objects are denoted by small letters, e.g. the Weyl quantization is $\mathfrak{op}$ while the super Weyl quantization is $\mathfrak{Op}$. This is in conflict with notation in most other articles \cite{Mantoiu2004,CorneanHelfferPurice2024,Thorn2026} where the Weyl quantization is denoted by a capital letter $\mathfrak{Op}$, but consisting with the papers on super quantizations \cite{LeeLein2022,LeeLein2025}.

\paragraph{Variable names:} We will use the phase space variables $X,Y,Z\in\R^{2d}$, which will always be split into position and momentum variables as follows $X=(x,\xi)$, $Y=(y,\eta)$, and $Z=(z,\zeta)$. Likewise, we will use variables for super objects $\mathbf{X},\mathbf{Y},\mathbf{Z}\in\R^{4d}$ and decompose them into left and right phase space variables, i.e. $\mathbf{X}=(X_L,X_R)=(x_L,\xi_L,x_R,\xi_R)$. This follows \cite{Iftimie2007,LeeLein2022}.

A similar convention will be used for indices, where $\tilde{\alpha}=(\alpha,\alpha')\in\Z^{2d}$ is an index containing both a position and momentum index. This follows \cite{CorneanHelfferPurice2024,LeeLein2025,Thorn2026} to some degree.

\paragraph{Operator spaces:} For any two topological vector spaces $V,W$, we let $\mathcal{B}(V,W)$ denote the bounded operators between $V$ and $W$. For the spaces considered in this paper, boundedness will always be equivalent to continuity \cite{Osborne2014,Treves2006}. If $V,W$ are Hilbert spaces we let $\mathcal{B}_p(V,W)$ denote the $p$-Schatten-class with $p\in(0,\infty]$ \cite{Conway2000,Treves2006}. Note this means that $\mathcal{B}_\infty(V,W)$ is the space of compact operators.

\paragraph{Magnetic Weyl quantization:} Initially the magnetic Weyl quantization is defined for Schwartz functions using the symplectic Fourier transform and magnetic Weyl system \cite{Mantoiu2004}. The symplectic Fourier transform for a Schwartz function $\phi\in\mathscr{S}(\R^{2d})$ is given by
\begin{equation*}
    \big(\mathcal{F}_\sigma \phi\big)(X)=\frac{1}{(2\pi)^d}\int \mathrm{d}Y\,e^{i\sigma(X,Y)}\phi(Y),
\end{equation*}
where $\sigma(X,Y)=y\cdot\xi-x\cdot\eta$ is the standard symplectic form \cite{Zworski2012}, and the magnetic Weyl system is defined as a family of $L^2(\R^d)$-unitaries $(w^A(X))_{X\in\R^{2d}}$ given by $w^A(X)=e^{-i\sigma\big(X,(Q,P^A)\big)}$, where $Q,P^A$ are respectively the position operators $Q=(Q_1,\dots,Q_d)$ and magnetic momentum operators $P^A=(P_1^A,\dots,P_d^A)$. Without going into to much depth we note that $\mathcal{F}_\sigma^{-1}=\mathcal{F}_\sigma$, and that $Q_j$, $P_j^A$, and $w^A(X)$ act on suitable $L^2(\R^d)$-functions $f$ in the following manner:
\begin{equation*}
    \big(Q_jf\big)(y)=y_jf(y),\qquad\big(P_j^Af\big)(y)=\big(-i\partial_j-A_j(y)\big)f(y),
\end{equation*}
\begin{equation*}
    \big (w^A(X)f\big ) (y)=e^{-i\left(y+\frac{x}{2}\right)\cdot\xi}e^{i\varphi(x+y,y)}f(x+y)
\end{equation*}
Then the magnetic Weyl quantization $\mathfrak{op}^A$ of a Schwartz function $\phi\in\mathscr{S}(\R^{2d})$ is given by
\begin{equation*}
    \langle g,\mathfrak{op}^A(\phi)f\rangle_{\mathscr{S}',\mathscr{S}}=\frac{1}{(2\pi)^d}\int \mathrm{d}X\,\big (\mathcal{F}_\sigma^{-1}\phi\big )(X)\, \langle g,w^A(X)f\rangle_{\mathscr{S}',\mathscr{S}}
\end{equation*}
for any Schwartz function $f\in\mathscr{S}(\R^d)$ and tempered distribution $g\in\mathscr{S}'(\R^d)$. We call $\phi\in\mathscr{S}(\R^{2d})$ a symbol and $\mathfrak{op}^A(\phi)$ a magnetic pseudo-differential operator.

Often one works with the kernel of $\mathfrak{op}^A(\phi)$ \cite{Mantoiu2004}:
\begin{equation}\label{hc4}
    \big(k^A\phi\big)(x,y)=\frac{1}{(2\pi)^d}\int \mathrm{d}\xi\, e^{i\xi\cdot(x-y)}e^{i\varphi(x,y)}\phi\bigg(\frac{x+y}{2},\xi\bigg)
\end{equation}
When the spaces involved are given their canonical topologies, which for spaces continuous operators is the topology of uniform convergence on bounded sets \cite{Osborne2014}, then the magnetic Weyl transform $k^A$ becomes a linear homeomorphism on $\mathscr{S}(\R^{2d})$, and $k^A$ is extendable to a linear homeomorphism of $\mathscr{S}'(\R^{2d})$. Thus the magnetic Weyl quantization is extendable to every tempered distribution $\phi\in\mathscr{S}'(\R^{2d})$, where $\mathfrak{op}^A(\phi)$ is then the unique map $\mathcal{B}(\mathscr{S}(\R^d),\mathscr{S}'(\R^d))$ with distributional kernel $k^A\phi$. With the kernel mapping\footnote{For $\phi\in\mathscr{S}'(\R^{2d})$ the map $\mathfrak{int}$ is defined by $\langle\mathfrak{int}(\phi)f,g\rangle_{\mathscr{S}',\mathscr{S}}=\langle\phi,g\otimes f\rangle_{\mathscr{S}',\mathscr{S}}$ for $f,g\in\mathscr{S}(\R^d)$.} $\mathfrak{int}\colon\mathscr{S}'(\R^{2d})\rightarrow\mathcal{B}(\mathscr{S}(\R^d),\mathscr{S}'(\R^d))$, which is a linear homeomorphism, the extension can be written as a composition of maps:
\begin{equation*}
    \mathfrak{op}^A=\mathfrak{int}\circ k^A
\end{equation*}

Notably, $\mathfrak{op}^A$ is a linear homeomorphism between several spaces of symbols and operators: $\mathfrak{op}^A$ is a linear homeomorphism of $\mathscr{S}'(\R^{2d})$ and $\mathcal{B}(\mathscr{S}(\R^d),\mathscr{S}'(\R^d))$, $\mathscr{S}(\R^{2d})$ and smoothing operators $\mathcal{B}(\mathscr{S}'(\R^d),\mathscr{S}(\R^d))$, and $L^2(\R^{2d})$ and Hilbert-Schmidt operators $\mathcal{B}_2(L^2(\R^d))$. See \cite{Mantoiu2004,Iftimie2007} for details.

\paragraph{Magnetic super Weyl quantization:} The magnetic super Weyl quantization is defined in analogy with the Weyl quantization \cite{LeeLein2022}: We define the symplectic Fourier transform $\mathcal{F}_\Sigma=\mathcal{F}_\sigma\otimes\mathcal{F}_\sigma$ and the super magnetic Weyl system $(W^A(\mathbf{X}))_{\mathbf{X}\in\R^{4d}}$ through
\begin{equation*}
    W^A(\mathbf{X})(T)=\big(w^A(X_L)\odot w^A(X_R)\big)(T)\coloneq w^A(X_L)Tw^A(X_R),
\end{equation*}
where $T$ is a suitable operator, e.g. take $T\in\mathcal{B}(L^2(\R^d))$. Then, if we define the duality bracket between $\mathcal{B}(\mathscr{S}(\R^d),\mathscr{S}'(\R^d))$ and $\mathcal{B}(\mathscr{S}'(\R^d),\mathscr{S}(\R^d))$ as
\begin{equation*}
    \langle R,T\rangle_{\mathcal{B}(\mathscr{S},\mathscr{S}'),\mathcal{B}(\mathscr{S}',\mathscr{S})}=\langle R,T\rangle_{\mathcal{B}',\mathcal{B}}\coloneq\langle\mathfrak{int}^{-1}(R),\mathfrak{int}^{-1}(T)\rangle_{\mathscr{S}',\mathscr{S}},
\end{equation*}
the magnetic super Weyl quantization $\mathfrak{Op}^A$ of a Schwartz function $\Phi\in\mathscr{S}(\R^{4d})$ is given by
\begin{equation}\label{eq:defn_super}
    \langle R,\mathfrak{Op}^A(\Phi)T\rangle_{\mathcal{B}',\mathcal{B}}=\frac{1}{(2\pi)^{2d}}\int \mathrm{d}\mathbf{X}\,\mathcal{F}_\Sigma^{-1}\Phi(\mathbf{X})\,\langle R,W^A(\mathbf{X})T\rangle_{\mathcal{B}',\mathcal{B}}
\end{equation}
for $T\in\mathcal{B}(\mathscr{S}'(\R^d),\mathscr{S}(\R^d))$ and $R\in\mathcal{B}(\mathscr{S}(\R^d),\mathscr{S}'(\R^d))$. This definition is such that for $\phi,\psi\in\mathscr{S}(\R^{2d})$ and $T\in\mathcal{B}(\mathscr{S}'(\R^d),\mathscr{S}(\R^d))$ we have:
\begin{equation}\label{eq:odot_identity}
    \mathfrak{Op}^A(\phi\otimes\psi)T=\mathfrak{op}^A(\phi)T\mathfrak{op}^A(\psi)=\big(\mathfrak{op}^A(\phi)\odot\mathfrak{op}^A(\psi)\big)(T)
\end{equation} 
We call $\Phi\in\mathscr{S}(\R^{2d})$ a super symbol and $\mathfrak{Op}^A(\Phi)$ a magnetic pseudo-differential super operator.

Similarly to before, we may extend the magnetic super Weyl quantization to tempered distributions through use of distributional kernels. In \cite{LeeLein2022} they find that
\begin{equation*}
    \mathfrak{Op}^A(\Phi)=\mathfrak{op}^A\circ\mathfrak{Int}(\tilde{K}^B\Phi)\circ\big ({\mathfrak{op}^A}\big )^{-1}
\end{equation*}
for some linear homeomorphism $\tilde{K}^B$ on $\mathscr{S}(\R^{4d})$ and $\mathfrak{Int}\colon\mathscr{S}'(\R^{4d})\rightarrow\mathcal{B}(\mathscr{S}(\R^{2d}),\mathscr{S}'(\R^{2d}))$ being the kernel map in a higher dimension. Note $\tilde{K}^B$ only depends on the magnetic field $B$, not the potential $A$. The map $\tilde{K}^B$ can then be extended and through the extension $\mathfrak{Op}^A$ is defined for every tempered distribution.

We will take another, but equivalent, approach. Instead of finding the symbol to symbol map $\mathfrak{Int}(\tilde{K}^B\Phi)$ we use the kernel to kernel map:
\begin{equation}\label{eq:kerneltokernel}
    \mathfrak{Op}^A(\Phi)=\mathfrak{int}\circ\mathfrak{Int}(K^A\Phi)\circ\mathfrak{int}^{-1},
\end{equation}
where $K^A$ is a magnetic super Weyl transform:
\begin{align}\label{eq:superweyltransform}
        \big(K^A\Phi\big)(x_L,x_R,y_L,y_R)&=\frac{1}{(2\pi)^{2d}}\int \mathrm{d}\xi_L\,\mathrm{d}\xi_R\, e^{i\Big[\xi_L\cdot(x_L-y_L)-\xi_R\cdot(x_R-y_R)\Big]}e^{i\Big[\varphi(x_L,y_L)-\varphi(x_R,y_R)\Big]}\nonumber \\
    &\quad\times\Phi\bigg(\frac{x_L+y_L}{2},\xi_L,\frac{x_R+y_R}{2},\xi_R\bigg)
\end{align}
Note $\tilde{K}^B$ and $K^A$ are closely related:
\begin{equation*}
    \mathfrak{Int}(K^A\Phi)=k^A\circ\mathfrak{Int}(\tilde{K}^B\Phi)\circ \big ({k^A}\big )^{-1}
\end{equation*}
The magnetic super Weyl transform $K^A$ is a linear homeomorphism of $\mathscr{S}(\R^{4d})$ and extendable to a linear homeomorphism of $\mathscr{S}'(\R^{4d})$. Using this, we may define the magnetic super Weyl quantization $\mathfrak{Op}^A(\Phi)$ of a tempered distribution $\Phi\in\mathscr{S}'(\R^{4d})$ as the map
\begin{equation*}
    \mathfrak{Op}^A(\Phi)=\mathfrak{int}\circ\mathfrak{Int}(K^A\Phi)\circ\mathfrak{int}^{-1}
\end{equation*}
in the space:
\begin{equation*}
    \mathcal{B}\Big(\mathcal{B}\big (\mathscr{S}'(\R^d),\mathscr{S}(\R^d)\big ),\mathcal{B}\big (\mathscr{S}(\R^d),\mathscr{S}'(\R^d)\big )\Big)
\end{equation*}
Note the identity \eqref{eq:odot_identity} still holds which can be proven by approximating a tensor product of tempered distribution with tensor products of Schwartz functions.

Similarly to $\mathfrak{op}^A$, $\mathfrak{Op}^A$ is a linear homeomorphism between several spaces of super symbols and super operators. Three of these pairs are: 
\begin{itemize}
\item The tempered distributions $\mathscr{S}'(\R^{4d})$ and
\begin{equation*}
    \mathcal{B}\Big(\mathcal{B}\big (\mathscr{S}'(\R^d),\mathscr{S}(\R^d)\big ),\mathcal{B}\big (\mathscr{S}(\R^d),\mathscr{S}'(\R^d)\big )\Big),
\end{equation*}
\item The Schwartz functions $\mathscr{S}(\R^{4d})$ and
\begin{equation*}
    \mathcal{B}\Big(\mathcal{B}\big (\mathscr{S}(\R^d),\mathscr{S}'(\R^d)\big ),\mathcal{B}\big (\mathscr{S}'(\R^d),\mathscr{S}(\R^d)\big )\Big),
\end{equation*}
\item $L^2(\R^{4d})$ and $\mathcal{B}_2\Big(\mathcal{B}_2\big (L^2(\R^d)\big )\Big)$.
\end{itemize}
Let us shortly explain the last pair: $L^2(\R^{4d})$ is isomorphic to $\mathcal{B}_2(L^2(\R^{2d}))$ which in turn is isomorphic to $\mathcal{B}_2\big(\mathcal{B}_2(L^2(\R^d))\big)$. The simplest example of an isomorphism $\mathcal{I}$ between $L^2(\R^{4d})$ and $\mathcal{B}_2\big(\mathcal{B}_2(L^2(\R^d))\big)$, maps $\Phi\in L^2(\R^{4d})$ to the super operator $\mathfrak{int}\circ\mathfrak{Int}(\Phi)\circ\mathfrak{int}^{-1}$. The magnetic super Weyl transform $K^A$ can be shown to be a linear homeomorphism of $L^2(\R^{4d})$, so $\mathfrak{Op}^A=\mathcal{I}\circ K^A$ is an isomorphism between $L^2(\R^{4d})$ and $\mathcal{B}_2\big(\mathcal{B}_2(L^2(\R^d))\big)$.

\begin{remark}
    Our definition in \eqref{eq:defn_super} does not entirely correspond to \cite[Definition IV.5]{LeeLein2022}, but the two are equal on smoothing operators. We also extend certain super operators to all $\mathcal{B}(\mathscr{S}(\R^d),\mathscr{S}'(\R^d))$-maps in Subsection \ref{sec:calc_semi}, which is equivalent to the extension in \cite{LeeLein2022}. Thus, in the end, both definitions will give the same magnetic pseudo-differential super operators for e.g. symbols in $\mathscr{S}(\R^{4d})$.
\end{remark}

\paragraph{Hörmander classes:} We call a function $m\colon\R^{n}\rightarrow(0,\infty)$, $n\in\N$, a tempered weight \cite{Nicola2010,Zworski2012,Thorn2026} if there exists $a,C>0$ such that
\begin{equation}\label{eq:peetre}
    m(u+v)\leq Cm(u)\langle v\rangle^a,\quad\forall u,v\in\R^n,
\end{equation}
where $\langle v\rangle=\sqrt{1+\Vert v\Vert^2}$. The ``usual" weight function on $\R^{2d}$ is $m(x,\xi)=\langle \xi\rangle^p$ with $x,\xi\in \R^d$ and $p\in \R$. 

Given a tempered weight $m$ on $\R^{2d}$ we define the Hörmander class
\begin{equation*}
    s_0(m)\coloneq\bigg\{\phi\in C^\infty(\R^{2d})\mid\sup_{X\in\R^{2d}}m(X)^{-1}|\partial^\gamma\phi(X)|<\infty,\forall\gamma\in\N_0^{2d}\bigg\}.
\end{equation*}
This is a Fréchet space with semi-norms given by:
\begin{equation*}
    \Vert\phi\Vert_{s_0(m),n}=\sum_{\gamma\in\N_0^{2d},|\gamma|\leq n}\sup_{X\in\R^{2d}}m(X)^{-1}|\partial^\gamma\phi(X)|
\end{equation*}
for $n\in\N_0$. Similarly we define the Hörmander super classes $S_0(M)\subseteq C^\infty(\R^{4d})$ for tempered weights $M$ over $\R^{4d}$. Lastly, we define the collection of such classes as $s_0(\infty)=\bigcup_ms_0(m)$ and $S_0(\infty)=\bigcup_MS_0(M)$. Note the intersections are just the Schwartz spaces over $\R^{2d}$ and $\R^{4d}$ respectively.

We collect some results on tempered weights in Appendix \ref{app:tempered_weights}. One of these, Lemma \ref{lem:app_smoorth_tempered}, implies that we may always assume that the tempered weights we are working with are smooth, since every Hörmander class $s_0(m)$ is isomorphic to another class $s_0(\tilde{m})$ where $\tilde{m}$ is smooth.

We refer to \cite{Thorn2026} for results on the magnetic Weyl quantizations of the Hörmander classes $s_0(m)$.

\section{Frame Decompositions}\label{sec:decompositions}

Originally from \cite{CorneanHelfferPurice2018,CorneanHelfferPurice2024} we consider the following frame: Let $\mathfrak{g}\colon\R^d\rightarrow\R$ be an element of $C_0^\infty(\R^d)$ such that
\begin{equation*}
    \supp(\mathfrak{g})\subseteq(-1,1)^d,\quad\sum_{\alpha\in\Z^d}(\tau_\alpha\mathfrak{g})^2\equiv1,
\end{equation*}
where $\big(\tau_yf\big)(x)=f(x-y)$ is a translation. Then for $\tilde{\alpha}=(\alpha,\alpha')\in\Z^{2d}$ we define the functions
\begin{equation*}
    \mathcal{G}_{\tilde{\alpha}}^A\colon\R^d\ni x\mapsto (2\pi)^{-\frac{d}{2}}e^{i\varphi(x,\alpha)}\mathfrak{g}(x-\alpha)e^{i\alpha'\cdot (x-\alpha)}.
\end{equation*}
Moreover, for $\tilde{\alpha},\tilde{\beta}\in\Z^{2d}$ we define the operators
\begin{equation*}
    \mathcal{T}_{\tilde{\alpha},\tilde{\beta}}^A\coloneq\mathfrak{int}(\mathcal{G}_{\tilde{\alpha}}^A\otimes \overline{\mathcal{G}_{\tilde{\beta}}^A}).
\end{equation*}
Note these operators were also used in \cite{LeeLein2025}.

The results \cite[Lemma 2.1, 2.2, and 2.3]{Thorn2026} show that an element $f$ from one of the spaces $L^2(\R^d)$, $\mathscr{S}(\R^d)$, or $\mathscr{S}'(\R^d)$ have a decomposition in $(\mathcal{G}_{\tilde{\alpha}}^A)_{\tilde{\alpha}\in\Z^d}$, i.e.:
\begin{equation*}
    f=\sum_{\tilde{\alpha}\in\Z^{2d}}\langle f,\overline{\mathcal{G}_{\tilde{\alpha}}^A}\rangle_{\mathscr{S}',\mathscr{S}}\mathcal{G}_{\tilde{\alpha}}^A
\end{equation*}
Conversely, if $(a_{\tilde{\alpha}})_{\tilde{\alpha}\in\Z^{2d}}$ are complex numbers satisfying intuitive conditions (given in the aforementioned lemmata), then
\begin{equation*}
    \sum_{\tilde{\alpha}\in\Z^{2d}}a_{\tilde{\alpha}}\mathcal{G}_{\tilde{\alpha}}^A
\end{equation*}
converges to an element of $L^2(\R^d)$, $\mathscr{S}(\R^d)$, or $\mathscr{S}'(\R^d)$.

Similar results can be obtained for the family $(\mathcal{T}_{\tilde{\alpha},\tilde{\beta}}^A)_{\tilde{\alpha},\tilde{\beta}\in\Z^{2d}}$ concerning the operator spaces $\mathcal{B}_2(L^2(\R^d))$, $\mathcal{B}(\mathscr{S}'(\R^d),\mathscr{S}(\R^d))$, and $\mathcal{B}(\mathscr{S}(\R^d),\mathscr{S}'(\R^d))$. To obtain these results we essentially only have to note that $\mathfrak{int}^{-1}$ is a linear homeomorphism between the above operator spaces and $L^2(\R^{2d})$, $\mathscr{S}(\R^{2d})$, and $\mathscr{S}'(\R^{2d})$, respectively, in addition to recalling \cite[Lemma 2.1, 2.2, and 2.3]{Thorn2026}:

\begin{lemma}\label{lem:convergence_l2}\cite[Proposition 3.2]{LeeLein2025}
    The family $(\mathcal{T}_{\tilde{\alpha},\tilde{\beta}}^A)_{\tilde{\alpha},\tilde{\beta}\in\Z^{2d}}$ defines a Parseval frame in $\mathcal{B}_2(L^2(\R^d))$ and\footnote{We define inner products to be antilinear in the first entry and linear in the second.}
    \begin{equation*}
        S=\sum_{\tilde{\alpha},\tilde{\beta}\in\Z^{2d}}\Big\langle\mathcal{T}_{\tilde{\alpha},\tilde{\beta}}^A,S\Big\rangle_{\mathcal{B}_2(L^2)}\mathcal{T}_{\tilde{\alpha},\tilde{\beta}}^A
    \end{equation*}
    holds for every $S\in\mathcal{B}_2(L^2(\R^d))$ with unconditional convergence.
\end{lemma}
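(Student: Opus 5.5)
The plan is to lift the one-dimensional frame result to kernels. Recall from \cite[Lemma 2.1]{Thorn2026} that $(\mathcal{G}_{\tilde{\alpha}}^A)_{\tilde{\alpha}\in\Z^{2d}}$ is a Parseval frame for $L^2(\R^d)$. Equivalently, the analysis map
\begin{equation*}
    U\colon L^2(\R^d)\to\ell^2(\Z^{2d}),\qquad Uf=\big(\langle\mathcal{G}_{\tilde{\alpha}}^A,f\rangle\big)_{\tilde{\alpha}},
\end{equation*}
is an isometry. I would check this directly if needed. Fix $\alpha$ and sum over $\alpha'$; by Parseval for the Fourier series on the cube $\alpha+[-\pi,\pi]^d$, which contains $\supp\mathfrak{g}(\cdot-\alpha)\subseteq \alpha+(-1,1)^d$, the sum equals $\int|\mathfrak{g}(x-\alpha)|^2|f(x)|^2\,\mathrm{d}x$. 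The unimodular magnetic phase $e^{i\varphi(x,\alpha)}$ is simply absorbed into $f$ here. Summing over $\alpha$ and using $\sum_\alpha(\tau_\alpha\mathfrak{g})^2\equiv1$ gives $\Vert f\Vert^2$.

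Next I pass to operators. Since $\mathfrak{int}^{-1}\colon\mathcal{B}_2(L^2(\R^d))\to L^2(\R^{2d})$ is unitary, it suffices to show that $(\mathcal{G}_{\tilde{\alpha}}^A\otimes\overline{\mathcal{G}_{\tilde{\beta}}^A})_{\tilde{\alpha},\tilde{\beta}}$ is a Parseval frame for $L^2(\R^{2d})$. The family $(\overline{\mathcal{G}_{\tilde{\beta}}^A})$ is again a Parseval frame, because complex conjugation is an antiunitary map. The analysis operator of the tensor family is then $U\otimes\overline{U}$ acting $L^2(\R^d)\otimes L^2(\R^d)\to\ell^2(\Z^{2d})\otimes\ell^2(\Z^{2d})$. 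A tensor product of isometries is an isometry, which gives
\begin{equation*}
    \sum_{\tilde{\alpha},\tilde{\beta}}\big|\langle \mathcal{T}^A_{\tilde{\alpha},\tilde{\beta}},S\rangle_{\mathcal{B}_2}\big|^2=\Vert S\Vert_{\mathcal{B}_2}^2 .
\end{equation*}
To make this concrete, verify it on finite-rank tensors and extend by density. Alternatively, write $S=\sum_j s_j\,|u_j\rangle\langle v_j|$ and compute. Note that $\langle\mathcal{T}_{\tilde{\alpha},\tilde{\beta}}^A,S\rangle_{\mathcal{B}_2}=\langle\mathcal{G}_{\tilde{\alpha}}^A,S\mathcal{G}_{\tilde{\beta}}^A\rangle$, which follows from the kernel identity.

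For the reconstruction formula, the synthesis operator is $(U\otimes\overline{U})^*$. Since $(U\otimes\overline U)^*(U\otimes\overline U)=\mathbf{1}$, we get $S=\sum\langle\mathcal{T}^A_{\tilde{\alpha},\tilde{\beta}},S\rangle\mathcal{T}^A_{\tilde{\alpha},\tilde{\beta}}$. Unconditional convergence follows from a standard property of bounded synthesis operators: $c\mapsto\sum c_{\tilde{\alpha}\tilde{\beta}}\mathcal{T}^A_{\tilde{\alpha},\tilde{\beta}}$ is bounded on $\ell^2$, and the coefficient sequence lies in $\ell^2$. Hence tails over any exhausting family of finite sets tend to zero, in an order-independent way.

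The only mildly delicate point is the first step. One has to check that the Fourier-series Parseval identity applies, which needs the support in $(-1,1)^d$ to sit inside a period cell of length $2\pi$. One also has to confirm that the factor $(2\pi)^{-d/2}$ exactly normalizes it. Everything after that is abstract Hilbert space tensor-product bookkeeping.
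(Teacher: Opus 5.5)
Your argument is correct and rests on the same two facts as the paper's proof: the Parseval frame property of $(\mathcal{G}^A_{\tilde\alpha})_{\tilde\alpha\in\Z^{2d}}$ in $L^2(\R^d)$ from \cite[Lemma 2.1]{Thorn2026}, and the unitarity of $\mathfrak{int}\colon L^2(\R^{2d})\to\mathcal{B}_2(L^2(\R^d))$. What differs is the direction of the logic.

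The paper first reads off the reconstruction formula $S=\sum\langle\mathcal{T}^A_{\tilde\alpha,\tilde\beta},S\rangle_{\mathcal{B}_2}\mathcal{T}^A_{\tilde\alpha,\tilde\beta}$, with convergence in $\mathcal{B}_2$-norm, directly from the cited lemma. It then obtains the Parseval identity by multiplying by $S^*$ and taking the trace. You go the other way. You first prove the Parseval identity by identifying the analysis operator of the tensor family with the isometry $U\otimes\overline{U}$. You then get reconstruction from $(U\otimes\overline{U})^*(U\otimes\overline{U})=\mathbf{1}$, and unconditional convergence from the boundedness of the synthesis operator.

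The paper's version is shorter. However, it leaves implicit how the one-variable lemma transfers to the family $\mathcal{G}^A_{\tilde\alpha}\otimes\overline{\mathcal{G}^A_{\tilde\beta}}$, and why the convergence is unconditional. Your version makes both points explicit. Your Fourier-series check on the cell $\alpha+[-\pi,\pi]^d$, with the phase $e^{i\varphi(x,\alpha)}$ absorbed into $f$, also makes the argument independent of the citation.
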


\begin{proof}
    Using \cite[Lemma 2.1]{Thorn2026} we see that
    \begin{equation*}
        S=\sum_{\tilde{\alpha},\tilde{\beta}\in\Z^{2d}}\Big\langle\mathcal{T}_{\tilde{\alpha},\tilde{\beta}}^A,S\Big\rangle_{\mathcal{B}_2(L^2)}\mathcal{T}_{\tilde{\alpha},\tilde{\beta}}^A
    \end{equation*}
    with convergence in $\mathcal{B}_2(L^2(\R^d))$. Multiplying the above by $S^*$ and taking the trace proves that $(\mathcal{T}_{\tilde{\alpha},\tilde{\beta}}^A)_{\tilde{\alpha},\tilde{\beta}\in\Z^{2d}}$ is a Parseval frame.
\end{proof}

To state the frame decompositions in the cases of the operator spaces $\mathcal{B}(\mathscr{S}'(\R^d),\mathscr{S}(\R^d))$ and $\mathcal{B}(\mathscr{S}(\R^d),\mathscr{S}'(\R^d))$ we use the duality bracket introduced earlier:
\begin{equation*}
    \langle S, R\rangle_{\mathcal{B}',\mathcal{B}}=\langle \mathfrak{int}^{-1}(S), \mathfrak{int}^{-1}(R)\rangle_{\mathscr{S}',\mathscr{S}}
\end{equation*}
Note when $S,R$ are both smoothing operators, then by defining 
\begin{equation*}
    \mathfrak{int}^{-1}(\overline{S})\coloneq\overline{\mathfrak{int}^{-1}(S)}.
\end{equation*}
we get
\begin{equation*}
    \langle S, R\rangle_{\mathcal{B}_2(L^2)}=\langle R, \overline{S}\rangle_{\mathcal{B}',\mathcal{B}}.
\end{equation*}

\begin{lemma}\label{lem:convergence_schwartz}
    {\color{white}=}
    \begin{enumerate}[label={\rm(\roman*)}, ref={\rm(\roman*)}]
        \item\label{i_convergence_schwartz} If $S\in\mathcal{B}(\mathscr{S}'(\R^d),\mathscr{S}(\R^d))$, then for any $n\in\N_0$ we have
        \begin{equation*}
            \sup_{\tilde{\alpha},\tilde{\beta}\in\Z^{2d}}\langle(\tilde{\alpha},\tilde{\beta})\rangle^{n}\Big|\Big\langle S,\overline{\mathcal{T}_{\tilde{\alpha},\tilde{\beta}}^A}\Big\rangle_{\mathcal{B}',\mathcal{B}}\Big|<\infty.
        \end{equation*}
        \item\label{ii_convergence_schwartz} If $(N_{\tilde{\alpha},\tilde{\beta}})_{\tilde{\alpha},\tilde{\beta}\in\Z^{2d}}$ are complex numbers such that
        \begin{equation*}
            \sup_{\tilde{\alpha},\tilde{\beta}\in\Z^{2d}}\langle(\tilde{\alpha},\tilde{\beta})\rangle^{n}|N_{\tilde{\alpha},\tilde{\beta}}|<\infty
        \end{equation*}
        for all $n\in\N_0$, then $\sum_{\tilde{\alpha},\tilde{\beta}\in\Z^{2d}}N_{\tilde{\alpha},\tilde{\beta}}\mathcal{T}_{\tilde{\alpha},\tilde{\beta}}^A$ converges absolutely in $\mathcal{B}(\mathscr{S}'(\R^d),\mathscr{S}(\R^d))$.
    \end{enumerate}
\end{lemma}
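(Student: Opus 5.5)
The plan is to reduce both parts to the corresponding statements for functions, namely \cite[Lemma 2.2]{Thorn2026}, applied on $\R^{2d}$ rather than $\R^d$. We pass through the kernel map $\mathfrak{int}^{-1}$, which is a linear homeomorphism from $\mathcal{B}(\mathscr{S}'(\R^d),\mathscr{S}(\R^d))$ onto $\mathscr{S}(\R^{2d})$.

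The key observation is that the kernel of $\mathcal{T}_{\tilde{\alpha},\tilde{\beta}}^A$ is the tensor product $\mathcal{G}_{\tilde{\alpha}}^A\otimes\overline{\mathcal{G}_{\tilde{\beta}}^A}$. Hence, for $S$ smoothing with kernel $s=\mathfrak{int}^{-1}(S)\in\mathscr{S}(\R^{2d})$, we have
\begin{equation*}
\Big\langle S,\overline{\mathcal{T}_{\tilde{\alpha},\tilde{\beta}}^A}\Big\rangle_{\mathcal{B}',\mathcal{B}}=\int \mathrm{d}x\,\mathrm{d}y\, s(x,y)\,\overline{\mathcal{G}_{\tilde{\alpha}}^A(x)}\,\mathcal{G}_{\tilde{\beta}}^A(y).
\end{equation*}
The function $(x,y)\mapsto \mathcal{G}_{\tilde{\alpha}}^A(x)\overline{\mathcal{G}_{\tilde{\beta}}^A(y)}$ is, up to the conjugation in the second variable (which only flips the sign of $\beta'$ and the magnetic phase), a frame element of the same type on $\R^{2d}$. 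Its window is $\mathfrak{g}\otimes\mathfrak{g}$, and its magnetic phase is $\varphi(x,\alpha)-\varphi(y,\beta)$. This phase is the circulation of the potential $(A(x),-A(y))$ on $\R^{2d}$, whose field has components in $BC^\infty$.

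If one prefers not to invoke the lemma on $\R^{2d}$ with this product magnetic field, the direct argument for (i) goes as follows. Write $\alpha'^{\gamma}e^{i\alpha'\cdot(x-\alpha)}=(-i\partial_x)^{\gamma}e^{i\alpha'\cdot(x-\alpha)}$ and integrate by parts; do the same in $y$ for $\beta'$. The derivatives then fall on $s$, on $\mathfrak{g}(\cdot-\alpha)$, and on the phases $e^{-i\varphi(x,\alpha)}$. Derivatives of $\varphi(\cdot,\alpha)$ are controlled by \eqref{eq:magpotential}: $\partial_j\varphi(x,\alpha)=A_j(x)-A_j(x;\alpha)$. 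On $\supp\mathfrak{g}(\cdot-\alpha)$ we have $|x-\alpha|<\sqrt d$, so $A(\cdot;\alpha)$ and its derivatives are bounded uniformly in $\alpha$. The remaining term $A(x)$ is polynomially bounded, i.e. $\lesssim\langle x\rangle^{N}$. This growth in $x$ is absorbed by the Schwartz decay of $s$, since $\langle x\rangle^{N}\langle y\rangle^{N}|\partial^\mu s(x,y)|$ is bounded. Finally, the localization $|x-\alpha|,|y-\beta|<\sqrt d$ converts the decay of $s$ in $(x,y)$ into decay $\langle(\alpha,\beta)\rangle^{-n}$. Combining both gains gives arbitrary polynomial decay in $(\tilde\alpha,\tilde\beta)$, bounded by a Schwartz seminorm of $s$.

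For (ii), it suffices to show that every Schwartz seminorm of the kernel $\mathcal{G}_{\tilde{\alpha}}^A\otimes\overline{\mathcal{G}_{\tilde{\beta}}^A}$ grows at most polynomially in $(\tilde\alpha,\tilde\beta)$. Consider $\sup\langle (x,y)\rangle^{k}|\partial^\mu(\mathcal{G}_{\tilde{\alpha}}^A\otimes\overline{\mathcal{G}_{\tilde{\beta}}^A})|$. Each derivative produces factors $\alpha'$ or $\beta'$, derivatives of $\mathfrak g$, or derivatives of the phase, which are polynomial in $\alpha$ by the estimate above. The weight $\langle(x,y)\rangle^{k}$ is bounded by $C\langle(\alpha,\beta)\rangle^{k}$ on the support. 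The rapid decay of $N_{\tilde\alpha,\tilde\beta}$ then makes $\sum|N_{\tilde\alpha,\tilde\beta}|\,\|\mathfrak{int}^{-1}\mathcal{T}^A_{\tilde\alpha,\tilde\beta}\|_{\mathscr{S},k}$ finite for every seminorm. This is absolute convergence in $\mathscr{S}(\R^{2d})$, and transporting it by $\mathfrak{int}$ gives absolute convergence in $\mathcal{B}(\mathscr{S}'(\R^d),\mathscr{S}(\R^d))$.

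The main obstacle is the bookkeeping of the magnetic phase derivatives. One must check that the polynomial growth of $A(x)$ is harmless: in (i) it is absorbed by the decay of $s$, and in (ii) it becomes polynomial growth in $\alpha$ thanks to the support localization. The splitting $A(x)=\nabla\varphi(x,\alpha)+A(x;\alpha)$ from \eqref{eq:magpotential} is precisely what makes both work.
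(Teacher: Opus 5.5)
Your proposal is correct and follows the paper's route: the paper justifies this lemma in one line by transporting the problem through the kernel-theorem homeomorphism $\mathfrak{int}^{-1}\colon\mathcal{B}(\mathscr{S}'(\R^d),\mathscr{S}(\R^d))\to\mathscr{S}(\R^{2d})$ and invoking \cite[Lemma 2.2]{Thorn2026}, which is exactly your reduction to the product frame on $\R^{2d}$. That lemma applies verbatim, since $(A(x),-A(y))$ is precisely the transversal gauge of $B\oplus(-B)$. Your direct integration-by-parts argument is also correct: it uses $\partial_j\varphi(x,\alpha)=A_j(x)-A_j(x;\alpha)$ together with the support localization, and it simply unpacks what the cited lemma does.
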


\begin{lemma}\label{lem:convergence_tempered}
    {\color{white}=}
    \begin{enumerate}[label={\rm(\roman*)}, ref={\rm(\roman*)}]
        \item\label{i_convergence_tempered} If $S\in\mathcal{B}(\mathscr{S}(\R^d),\mathscr{S}'(\R^d))$, then for some $n\in\N_0$ we have
        \begin{equation*}
            \sup_{\tilde{\alpha},\tilde{\beta}\in\Z^{2d}}\langle(\tilde{\alpha},\tilde{\beta})\rangle^{-n}\Big|\Big\langle S,\overline{\mathcal{T}_{\tilde{\alpha},\tilde{\beta}}^A}\Big\rangle_{\mathcal{B}',\mathcal{B}}\Big|<\infty.
        \end{equation*}
        \item\label{ii_convergence_tempered} If $(N_{\tilde{\alpha},\tilde{\beta}})_{\tilde{\alpha},\tilde{\beta}\in\Z^{2d}}$ are complex numbers for which there exists $n\in\N_0$ such that
        \begin{equation*}
            \sup_{\tilde{\alpha},\tilde{\beta}\in\Z^{2d}}\langle(\tilde{\alpha},\tilde{\beta})\rangle^{-n}|N_{\tilde{\alpha},\tilde{\beta}}|<\infty,
        \end{equation*}
        then $\sum_{\tilde{\alpha},\tilde{\beta}\in\Z^{2d}}N_{\tilde{\alpha},\tilde{\beta}}\mathcal{T}_{\tilde{\alpha},\tilde{\beta}}^A$ converges absolutely in $\mathcal{B}(\mathscr{S}(\R^d),\mathscr{S}'(\R^d))$.
    \end{enumerate}
\end{lemma}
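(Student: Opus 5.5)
The plan is to move everything to the kernel level and then apply the one-particle result \cite[Lemma 2.3]{Thorn2026} in doubled dimension. The map $\mathfrak{int}^{-1}$ is a linear homeomorphism from $\mathcal{B}(\mathscr{S}(\R^d),\mathscr{S}'(\R^d))$ onto $\mathscr{S}'(\R^{2d})$. By definition of the bracket,
\begin{equation*}
    \Big\langle S,\overline{\mathcal{T}_{\tilde{\alpha},\tilde{\beta}}^A}\Big\rangle_{\mathcal{B}',\mathcal{B}}=\Big\langle \mathfrak{int}^{-1}(S),\overline{\mathcal{G}_{\tilde{\alpha}}^A}\otimes\mathcal{G}_{\tilde{\beta}}^A\Big\rangle_{\mathscr{S}',\mathscr{S}},
\end{equation*}
so both parts reduce to statements about the tensor family $(\mathcal{G}_{\tilde{\alpha}}^A\otimes\overline{\mathcal{G}_{\tilde{\beta}}^A})_{\tilde{\alpha},\tilde{\beta}}$ acting on distributions on $\R^{2d}$.

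\textbf{Part (i).} Write $u=\mathfrak{int}^{-1}(S)\in\mathscr{S}'(\R^{2d})$. By continuity there are $C$ and $N$ with $|\langle u,\Psi\rangle|\le C\sum_{|a|,|b|\le N}\sup_{z}|z^a\partial^b\Psi(z)|$. I then estimate the Schwartz seminorms of $\Psi=\overline{\mathcal{G}_{\tilde{\alpha}}^A}\otimes\mathcal{G}_{\tilde{\beta}}^A$.
\begin{itemize}
    \item The support lies in $(\alpha+(-1,1)^d)\times(\beta+(-1,1)^d)$, so the factor $z^a$ contributes at most $\langle(\alpha,\beta)\rangle^{N}$.
    \item Derivatives falling on $e^{i\alpha'\cdot(x-\alpha)}$ give powers of $|\alpha'|$, and similarly for $\beta'$.
    \item Derivatives falling on $\mathfrak{g}$ are bounded.
    \item Derivatives falling on the phase $e^{i\varphi(x,\alpha)}$ are the delicate point. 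By \eqref{eq:magpotential}, $\partial_j\varphi(x,\alpha)=A_j(x)-A_j(x;\alpha)$. For $|x-\alpha|<\sqrt d$, the term $A_j(x;\alpha)$ and its derivatives are bounded, while $A_j(x)$ and its derivatives are polynomially bounded in $x$, hence in $\alpha$.
\end{itemize}
Altogether $|\langle u,\Psi\rangle|\le C'\langle(\tilde{\alpha},\tilde{\beta})\rangle^{n}$ for some $n$, which proves (i). Equivalently, I would cite \cite[Lemma 2.3]{Thorn2026}, whose proof is exactly this estimate, and apply it in each variable.

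\textbf{Part (ii).} It suffices to show that $\sum|N_{\tilde{\alpha},\tilde{\beta}}|\,p(\mathcal{T}_{\tilde{\alpha},\tilde{\beta}}^A)<\infty$ for every continuous seminorm $p$ on $\mathcal{B}(\mathscr{S},\mathscr{S}')$. Since $\mathfrak{int}$ is a homeomorphism from $\mathscr{S}'(\R^{2d})$, it is enough to do this for seminorms on $\mathscr{S}'(\R^{2d})$ with the strong topology, i.e. $p_\mathcal{K}(u)=\sup_{\Psi\in\mathcal{K}}|\langle u,\Psi\rangle|$ with $\mathcal{K}\subset\mathscr{S}(\R^{2d})$ bounded. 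For such $\Psi$ I need to show that $|\langle\mathcal{G}_{\tilde{\alpha}}^A\otimes\overline{\mathcal{G}_{\tilde{\beta}}^A},\Psi\rangle|\le C_k\langle(\tilde{\alpha},\tilde{\beta})\rangle^{-k}$ uniformly over $\mathcal{K}$, for every $k$.
\begin{itemize}
    \item Decay in $(\alpha,\beta)$ follows from the localized support together with the decay of $\Psi$, uniformly on bounded sets.
    \item Decay in $(\alpha',\beta')$ comes from integrating by parts against the oscillation $e^{i\alpha'\cdot x}$. Each step costs derivatives of $\mathfrak{g}$, of $\Psi$, and of $e^{i\varphi(x,\alpha)}$; the last are polynomial in $\alpha$ as in part (i), and this polynomial growth is absorbed by the decay of $\Psi$ on the support.
\end{itemize}
Choosing $k>n+4d$ makes $\sum\langle(\tilde{\alpha},\tilde{\beta})\rangle^{n-k}$ finite, which gives absolute convergence. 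This is again the content of \cite[Lemma 2.3]{Thorn2026} transported by $\mathfrak{int}$.

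The main obstacle is controlling the derivatives of the magnetic phases uniformly. This is resolved by the identity \eqref{eq:magpotential} together with the at most linear growth of $A(\cdot;\alpha)$ in $|x-\alpha|$, combined with the compact support of $\mathfrak{g}$.
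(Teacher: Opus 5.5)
Your proposal is correct and follows the paper's route: pass to kernels via the Schwartz kernel theorem ($\mathfrak{int}^{-1}$ being a homeomorphism onto $\mathscr{S}'(\R^{2d})$) and apply the one-particle estimates of \cite[Lemma 2.3]{Thorn2026}. The paper only cites these estimates, while you also sketch them, including the magnetic phase via \eqref{eq:magpotential}. Your doubled-dimension citation is also literally legitimate: $\mathcal{G}_{\tilde{\alpha}}^A\otimes\overline{\mathcal{G}_{\tilde{\beta}}^A}$ is exactly the $2d$-dimensional frame element with index $(\alpha,\beta,\alpha',-\beta')$ for the regular field $B\oplus(-B)$, whose transversal gauge is $(A,-A)$.
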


As we stated before both Lemma \ref{lem:convergence_schwartz} and \ref{lem:convergence_tempered} follow from \cite[Lemma 2.2]{Thorn2026} and \cite[Lemma 2.3]{Thorn2026}, in addition to Schwartz' kernel theorem \cite{Treves2006}.

Using Lemma \ref{lem:convergence_l2} and Lemma \ref{lem:convergence_schwartz} we can conclude that
\begin{equation}\label{eq:matrix_decomp}
    S=\sum_{\tilde{\alpha},\tilde{\beta}\in\Z^{2d}}\Big\langle S,\overline{\mathcal{T}_{\tilde{\alpha},\tilde{\beta}}^A}\Big\rangle_{\mathcal{B}',\mathcal{B}}\mathcal{T}_{\tilde{\alpha},\tilde{\beta}}^A
\end{equation}
holds for every smoothing operator $S$ with convergence in $\mathcal{B}(\mathscr{S}'(\R^d),\mathscr{S}(\R^d))$, and using Lemma \ref{lem:convergence_tempered} together with duality we conclude that this expansion holds for every $S\in\mathcal{B}(\mathscr{S}(\R^d),\mathscr{S}'(\R^d))$.

\subsection{Matrix Representation of Super Operators}

Using Lemma \ref{lem:convergence_schwartz} and Lemma \ref{lem:convergence_tempered} we can make a matrix representation of every super operator:
\begin{equation}\label{eq:supermatrix}
    \begin{aligned}
        \mathfrak{Op}^A(\Phi)S&=\sum_{(\tilde{\alpha},\tilde{\beta}),(\tilde{\gamma},\tilde{\delta})\in\Z^{4d}}\Big\langle \mathfrak{Op}^A(\Phi)\mathcal{T}_{\tilde{\gamma},\tilde{\delta}}^A,\overline{\mathcal{T}_{\tilde{\alpha},\tilde{\beta}}^A}\Big\rangle_{\mathcal{B}',\mathcal{B}}\,\Big\langle S,\overline{\mathcal{T}_{\tilde{\gamma},\tilde{\delta}}^A}\Big\rangle_{\mathcal{B}',\mathcal{B}}\mathcal{T}_{\tilde{\alpha},\tilde{\beta}}^A\\
        &=\sum_{(\tilde{\alpha},\tilde{\beta}),(\tilde{\gamma},\tilde{\delta})\in\Z^{4d}}\mathbb{M}_{(\tilde{\alpha},\tilde{\beta}),(\tilde{\gamma},\tilde{\delta})}^A[ \mathfrak{Op}^A(\Phi)]\,\big(\mathcal{T}_{\tilde{\alpha},\tilde{\beta}}^A\bowtie\overline{\mathcal{T}_{\tilde{\gamma},\tilde{\delta}}^A}\big) S
    \end{aligned}
\end{equation}
where
\begin{equation*}
    \mathbb{M}^A[\mathfrak{Op}^A(\Phi)]=\Big(\Big\langle \mathfrak{Op}^A(\Phi)\mathcal{T}_{\tilde{\gamma},\tilde{\delta}}^A,\overline{\mathcal{T}_{\tilde{\alpha},\tilde{\beta}}^A}\Big\rangle_{\mathcal{B}',\mathcal{B}}\Big)_{(\tilde{\alpha},\tilde{\beta}),(\tilde{\gamma},\tilde{\delta})\in\Z^{4d}}
\end{equation*}
and $\mathcal{T}_{\tilde{\alpha},\tilde{\beta}}^A\bowtie\overline{\mathcal{T}_{\tilde{\gamma},\tilde{\delta}}^A}$ is the map
\begin{equation*}
    \mathcal{B}\big(\mathscr{S}(\R^d),\mathscr{S}'(\R^d)\big)\ni S\mapsto \Big\langle S,\overline{\mathcal{T}_{\tilde{\gamma},\tilde{\delta}}^A}\Big\rangle_{\mathcal{B}',\mathcal{B}}\mathcal{T}_{\tilde{\alpha},\tilde{\beta}}^A\in \mathcal{B}\big(\mathscr{S}'(\R^d),\mathscr{S}(\R^d)\big).
\end{equation*}
Lee and Lein also presented such an expansion in \cite[Section 3.2]{LeeLein2025}, but for super operators on $\mathcal{B}_2(L^2(\R^d))$. Note that they used the notation $\mathcal{T}_{\tilde{\alpha},\tilde{\gamma}}^A\odot\mathcal{T}_{\tilde{\delta},\tilde{\beta}}^A=\mathcal{T}_{\tilde{\alpha},\tilde{\beta}}^A\bowtie\overline{\mathcal{T}_{\tilde{\gamma},\tilde{\delta}}^A}$ in their expansion.

Now we present a matrix characterization for our super symbol classes $S_0(M)$, an extension of \cite[Theorem 4.1 and Corollary 4.2]{LeeLein2025}.

\begin{theorem}\label{thm:matrix_rep}
    Fix a tempered weight $M$. Then:
    \begin{enumerate}[label={\rm(\roman*)}, ref={\rm(\roman*)}]
        \item\label{i_matrix_rep} For any $n\in\N_0$ there exists $C>0,k\in\N_0$ such that for all $\Phi\in S_0(M)$:
        \begin{equation}\label{eq:matrix_specific}
        \begin{aligned}
            \sup_{(\tilde{\alpha},\tilde{\beta}),(\tilde{\gamma},\tilde{\delta})\in\Z^{4d}}&\langle(\tilde{\alpha},\tilde{\beta})-(\tilde{\gamma},\tilde{\delta})\rangle^{n}M\Bigg(\frac{(\tilde{\alpha},\tilde{\beta})+(\tilde{\gamma},\tilde{\delta})}{2}\Bigg)^{-1}\Big|\mathbb{M}_{(\tilde{\alpha},\tilde{\beta}),(\tilde{\gamma},\tilde{\delta})}^A[\mathfrak{Op}^A(\Phi)]\Big|\\
            &< C\Vert\Phi\Vert_{S_0(M),k}
        \end{aligned}
        \end{equation}
        \item\label{ii_matrix_rep} If $(N_{(\tilde{\alpha},\tilde{\beta}),(\tilde{\gamma},\tilde{\delta})})_{(\tilde{\alpha},\tilde{\beta}),(\tilde{\gamma},\tilde{\delta})\in\Z^{4d}}$ are complex numbers such that
        \begin{equation}\label{eq:matrix}
            \begin{aligned}
                \sup_{(\tilde{\alpha},\tilde{\beta}),(\tilde{\gamma},\tilde{\delta})\in\Z^{4d}}&\langle(\tilde{\alpha},\tilde{\beta})-(\tilde{\gamma},\tilde{\delta})\rangle^{n}M\Bigg(\frac{(\tilde{\alpha},\tilde{\beta})+(\tilde{\gamma},\tilde{\delta})}{2}\Bigg)^{-1}\Big|N_{(\tilde{\alpha},\tilde{\beta}),(\tilde{\gamma},\tilde{\delta})}\Big|< \infty
            \end{aligned}
        \end{equation}
        for all $n\in\N_0$, then
        \begin{equation}\label{eq:dequant_sum}
            \sum_{(\tilde{\alpha},\tilde{\beta}),(\tilde{\gamma},\tilde{\delta})\in\Z^{4d}}N_{(\tilde{\alpha},\tilde{\beta}),(\tilde{\gamma},\tilde{\delta})}\big ({\mathfrak{Op}^A}\big )^{-1}\Big(\mathcal{T}_{\tilde{\alpha},\tilde{\beta}}^A\bowtie\overline{\mathcal{T}_{\tilde{\gamma},\tilde{\delta}}^A}\Big)
        \end{equation}
        converges uniformly on compacts towards a member of $S_0(M)$ and each semi-norm of the limit is bounded by a constant times \eqref{eq:matrix} for some $n$.
    \end{enumerate}
\end{theorem}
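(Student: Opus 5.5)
The plan is to reduce both parts to a tensor-product structure, so that the non-super matrix characterization for $s_0(m)$ from \cite{Thorn2026} (in the form of \cite[Theorem 4.1]{LeeLein2025}) can be reused almost verbatim in doubled dimension.

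\textbf{Part (i).} First compute the matrix entries explicitly. From \eqref{eq:kerneltokernel}, \eqref{eq:superweyltransform} and $\mathfrak{int}^{-1}(\mathcal{T}^A_{\tilde\gamma,\tilde\delta})=\mathcal{G}^A_{\tilde\gamma}\otimes\overline{\mathcal{G}^A_{\tilde\delta}}$ we get
\begin{equation*}
\mathbb{M}^A_{(\tilde\alpha,\tilde\beta),(\tilde\gamma,\tilde\delta)}[\mathfrak{Op}^A(\Phi)]=\int \overline{\mathcal{G}^A_{\tilde\alpha}(x_L)}\,\mathcal{G}^A_{\tilde\beta}(x_R)\,(K^A\Phi)(x_L,x_R,y_L,y_R)\,\mathcal{G}^A_{\tilde\gamma}(y_L)\,\overline{\mathcal{G}^A_{\tilde\delta}(y_R)}\,\mathrm{d}x\,\mathrm{d}y.
\end{equation*}
This integral factorizes into a ``left'' part, which is exactly the non-magnetic-super matrix element of the magnetic kernel $k^A$ in the variables $(x_L,\xi_L,y_L)$ with frame indices $\tilde\alpha,\tilde\gamma$, and a complex-conjugated ``right'' part with indices $\tilde\beta,\tilde\delta$. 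The sign flip on $\xi_R$ and $\varphi$ is absorbed by conjugation, so the roles of $\tilde\beta$ and $\tilde\delta$ are swapped relative to the left part.

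Next, on the supports ($x_L,y_L$ near $\alpha,\gamma$) use Stokes to combine the phases:
\begin{equation*}
\varphi(\alpha,x_L)+\varphi(x_L,y_L)+\varphi(y_L,\gamma)=\varphi(\alpha,\gamma)+\Gamma(\alpha,x_L,y_L)+\Gamma(\alpha,y_L,\gamma).
\end{equation*}
The constant phases $\varphi(\alpha,\gamma)$ have modulus one and are discarded. The flux terms are smooth with derivatives bounded polynomially in $|x_L-y_L|$ and $|\alpha-\gamma|$, so they can be absorbed into the amplitude.

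Then perform the standard integrations by parts, using the same scheme as in \cite{Thorn2026}.
\begin{itemize}
\item In $\xi_L$, with the operator $\langle x_L-y_L\rangle^{-2}(1-\Delta_{\xi_L})$, this produces decay in $|x_L-y_L|$, hence in $|\alpha-\gamma|$.
\item In $x_L,y_L$, against $e^{i(\xi_L-\alpha')\cdot x_L}$ and $e^{-i(\xi_L-\gamma')\cdot y_L}$, this produces factors $\langle\xi_L-\alpha'\rangle^{-N}\langle\xi_L-\gamma'\rangle^{-N}$.
\end{itemize}
Each derivative falling on $\Phi$ costs a seminorm $\Vert\Phi\Vert_{S_0(M),k}$. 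The weight $M$ evaluated at the integration point is then compared with $M$ at the midpoint index $\frac{(\tilde\alpha,\tilde\beta)+(\tilde\gamma,\tilde\delta)}{2}$ using \eqref{eq:peetre}. This costs polynomial factors in the distances, which are absorbed by choosing $N$ large. Doing the same on the right part gives the bound \eqref{eq:matrix_specific}.

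\textbf{Part (ii).} First identify each summand. By \eqref{eq:odot_identity}, the map $\mathcal{T}^A_{\tilde\alpha,\tilde\beta}\bowtie\overline{\mathcal{T}^A_{\tilde\gamma,\tilde\delta}}$ equals $\mathcal{T}^A_{\tilde\alpha,\tilde\gamma}\odot\mathcal{T}^A_{\tilde\delta,\tilde\beta}$, so
\begin{equation*}
(\mathfrak{Op}^A)^{-1}\big(\mathcal{T}^A_{\tilde\alpha,\tilde\beta}\bowtie\overline{\mathcal{T}^A_{\tilde\gamma,\tilde\delta}}\big)=(\mathfrak{op}^A)^{-1}(\mathcal{T}^A_{\tilde\alpha,\tilde\gamma})\otimes(\mathfrak{op}^A)^{-1}(\mathcal{T}^A_{\tilde\delta,\tilde\beta}).
\end{equation*}
The building block $(\mathfrak{op}^A)^{-1}(\mathcal{T}^A_{\tilde\alpha,\tilde\gamma})=(k^A)^{-1}(\mathcal{G}^A_{\tilde\alpha}\otimes\overline{\mathcal{G}^A_{\tilde\gamma}})$ is computed explicitly. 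It is a Schwartz function concentrated near $(\frac{\alpha+\gamma}{2},\frac{\alpha'+\gamma'}{2})$, with bounds of the form
\begin{equation*}
|\partial^\mu\psi_{\tilde\alpha,\tilde\gamma}(X)|\le C_{\mu,N}\,\langle\tilde\alpha-\tilde\gamma\rangle^{c_\mu}\,\Big\langle X-\tfrac{\tilde\alpha+\tilde\gamma}{2}\Big\rangle^{-N},
\end{equation*}
where $\langle\tilde\alpha-\tilde\gamma\rangle^{c_\mu}$ is a polynomial loss coming from the magnetic phase $e^{-i\varphi}$ composed with $e^{i\varphi(\cdot,\alpha)}$. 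The point is again to reduce to flux terms via Stokes, so that the loss is only polynomial in $|\alpha-\gamma|$ and not in $|\alpha|$ or $|\gamma|$.

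Then estimate the derivatives of the sum pointwise. Combine the bound above with \eqref{eq:matrix} and use
\begin{equation*}
M\Big(\tfrac{(\tilde\alpha,\tilde\beta)+(\tilde\gamma,\tilde\delta)}{2}\Big)\le C\,M(\mathbf{X})\,\Big\langle \mathbf{X}-\tfrac{(\tilde\alpha,\tilde\beta)+(\tilde\gamma,\tilde\delta)}{2}\Big\rangle^{a}.
\end{equation*}
The sum over the ``midpoint'' index converges because of the $\langle\cdot\rangle^{-N}$ localization, and the sum over the ``difference'' index converges because of the $\langle\cdot\rangle^{n}$ decay in \eqref{eq:matrix}. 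This gives locally uniform convergence of all derivatives, hence a smooth limit in $S_0(M)$ with the stated seminorm bounds.

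The main obstacle is controlling the magnetic phases so that all losses depend only on index differences. This is exactly where the Stokes/flux reduction and the linear growth of $A(\cdot;y)$ from \eqref{eq:magpotential} are essential.
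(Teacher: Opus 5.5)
Your proposal is correct and follows essentially the paper's route. In (i) you write the matrix entry as the pairing of $K^A\Phi$ with a tensor product of four frame functions and estimate it by the Thorn2026-type integration by parts plus Peetre's inequality. In (ii) you identify each summand as $(\mathfrak{op}^A)^{-1}(\mathcal{T}^A_{\tilde\alpha,\tilde\gamma})\otimes(\mathfrak{op}^A)^{-1}(\mathcal{T}^A_{\tilde\delta,\tilde\beta})$, localized near the midpoint with only polynomial losses in index differences, and sum using the decay in \eqref{eq:matrix}.

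One wording fix: in (i) the integral does not literally factorize into a left and a right part for general $\Phi$, since $\Phi$ couples $X_L$ and $X_R$. Only the phases and frame functions factor, and that is all your joint integration-by-parts and Peetre argument actually uses.
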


We keep the proof short since versions of this result has already appeared: \cite[Theorem 3.1]{CorneanHelfferPurice2024}, \cite[Theorem 4.1]{LeeLein2025}, and \cite[Theorem 2.4]{Thorn2026}.

\begin{proof}[Proof of \ref{i_matrix_rep}]
    Using the kernel to kernel mapping \eqref{eq:kerneltokernel}, we see that an element of the matrix of $\mathfrak{Op}^A(\Phi)$ is given by:
    \begin{align*}
        \mathbb{M}_{(\tilde{\alpha},\tilde{\beta}),(\tilde{\gamma},\tilde{\delta})}^A[\mathfrak{Op}^A(\Phi)]
        &=\big\langle \mathfrak{Int}(K^A\Phi)\mathcal{G}_{\tilde{\gamma}}^A\otimes\overline{\mathcal{G}_{\tilde{\delta}}^A},\overline{\mathcal{G}_{\tilde{\alpha}}^A}\otimes\mathcal{G}_{\tilde{\beta}}^A\big\rangle_{\mathscr{S}',\mathscr{S}}\\
        &=\big\langle K^A\Phi,\mathcal{G}_{\tilde{\gamma}}^A\otimes\overline{\mathcal{G}_{\tilde{\delta}}^A}\otimes\overline{\mathcal{G}_{\tilde{\alpha}}^A}\otimes\mathcal{G}_{\tilde{\beta}}^A\big\rangle_{\mathscr{S}',\mathscr{S}}
    \end{align*}
    This last integral can be estimated in exactly the same manner as $\big\langle k^A\phi,\overline{\mathcal{G}_{\tilde{\alpha}}^A}\otimes\mathcal{G}_{\tilde{\beta}}^A\big\rangle_{\mathscr{S}',\mathscr{S}}$ with $\phi\in s_0(m)$ for some tempered weight $m$ on $\R^{2d}$, which was already done in the proof of \cite[Theorem 2.4]{Thorn2026}. The essential steps are as follows:
    \begin{enumerate}
        \item Perform a coordinate change in the integral
        \begin{equation*}
            \big\langle K^A\Phi,\mathcal{G}_{\tilde{\gamma}}^A\otimes\overline{\mathcal{G}_{\tilde{\delta}}^A}\otimes\overline{\mathcal{G}_{\tilde{\alpha}}^A}\otimes\mathcal{G}_{\tilde{\beta}}^A\big\rangle_{\mathscr{S}',\mathscr{S}}.
        \end{equation*}
        \item Apply partial integration using the exponential factors in the aforementioned integral.
        \item Take the absolute value, and repeatedly use Peetre's inequality and that $\Phi$ and its derivatives are bounded when multiplied by $M^{-1}$.\qedhere
    \end{enumerate}
\end{proof}

\begin{proof}[Proof of \ref{ii_matrix_rep}]
    Let us consider a single term in the sum first. By the dequantization results of \cite{LeeLein2022}, or see \eqref{eq:odot_identity}, we have:
    \begin{equation*}
        \big ({\mathfrak{Op}^A}\big )^{-1}\Big(\mathcal{T}_{\tilde{\alpha},\tilde{\beta}}^A\bowtie\overline{\mathcal{T}_{\tilde{\gamma},\tilde{\delta}}^A}\Big)=\big ({\mathfrak{Op}^A}\big )^{-1}\Big(\mathcal{T}_{\tilde{\alpha},\tilde{\gamma}}^A\odot\mathcal{T}_{\tilde{\delta},\tilde{\beta}}^A\Big)=\big ({\mathfrak{op}^A}\big )^{-1}(\mathcal{T}_{\tilde{\alpha},\tilde{\gamma}}^A)\otimes\big ({\mathfrak{op}^A}\big )^{-1}(\mathcal{T}_{\tilde{\delta},\tilde{\beta}}^A)
    \end{equation*}
    In the proof of \cite[Theorem 2.4]{Thorn2026} the function $\big ({\mathfrak{0p}^A}\big )^{-1}(\mathcal{T}_{\tilde{\alpha},\tilde{\gamma}}^A)$ was analyzed and we got the upper bound:
    \begin{equation*}
        \big|\partial^{\omega_1}\big ({\mathfrak{op}^A}\big )^{-1}(\mathcal{T}_{\tilde{\alpha},\tilde{\gamma}}^A)(X)\big|\leq C_11_{\frac{\alpha+\gamma}{2}+(-1,1)^d}(x)\bigg\langle\xi-\frac{\alpha'+\gamma'}{2}\bigg\rangle^{-k_1}\langle\alpha-\gamma\rangle^{k_2}\langle\alpha'-\gamma'\rangle^{k_3}
    \end{equation*}
    for all $x,\xi\in\R^d$ with $C_1,k_1,k_2,k_3>0$ where $C_1,k_2,k_3$ depend on $\omega_1\in\N_0^{2d}$ and $C_1$ also depends on $k_1$, which can be chosen arbitrary large. Combining this with \eqref{eq:matrix} and using Peetre's inequality we get:
    \begin{align}\label{eq:matrix_2main_estimate}
            \Big|N_{(\tilde{\alpha},\tilde{\beta}),(\tilde{\gamma},\tilde{\delta})}&\partial^{\omega_2}\big ({\mathfrak{Op}^A}\big )^{-1}\Big(\mathcal{T}_{\tilde{\alpha},\tilde{\beta}}^A\bowtie\overline{\mathcal{T}_{\tilde{\gamma},\tilde{\delta}}^A}\Big)(X,Y)\Big|\nonumber\\
        &\leq C_21_{\frac{1}{2}\left(\alpha+\gamma,\beta+\delta\right)+(-1,1)^{2d}}(x,y)\bigg(\bigg\langle\xi-\frac{\alpha'+\gamma'}{2}\bigg\rangle\bigg\langle\eta-\frac{\beta'+\delta'}{2}\bigg\rangle\bigg)^{-a-d-1}\nonumber\\
        &\quad \times M\Bigg(\frac{(\tilde{\alpha},\tilde{\beta})+(\tilde{\gamma},\tilde{\delta})}{2}\Bigg)\big(\langle\alpha-\gamma\rangle\langle\alpha'-\gamma'\rangle\langle\beta-\delta\rangle\langle\beta'-\delta'\rangle\big)^{-d-1}\\
        &\leq C_31_{\frac{1}{2}\left(\alpha+\gamma,\beta+\delta\right)+(-1,1)^{2d}}(x,y) M(X,Y)\nonumber\\
        &\quad\times\bigg(\bigg\langle\xi-\frac{\alpha'+\gamma'}{2}\bigg\rangle\bigg\langle\eta-\frac{\beta'+\delta'}{2}\bigg\rangle\langle\alpha-\gamma\rangle\langle\alpha'-\gamma'\rangle\langle\beta-\delta\rangle\langle\beta'-\delta'\rangle\bigg)^{-d-1}\nonumber
        \end{align}
    for all $X,Y\in\R^{2d}$ with $C_2,C_3>0$ depending on $\omega_2\in\N_0^{4d}$ and $a>0$ stemming from \eqref{eq:peetre} for $M$. The bound \eqref{eq:matrix_2main_estimate} is uniform in $(\tilde{\alpha},\tilde{\beta}),(\tilde{\gamma},\tilde{\delta})\in\Z^{4d}$ and shows that the series
    \begin{equation*}
        \sum_{(\tilde{\alpha},\tilde{\beta}),(\tilde{\gamma},\tilde{\delta})\in\Z^{4d}}N_{(\tilde{\alpha},\tilde{\beta}),(\tilde{\gamma},\tilde{\delta})}\big ({\mathfrak{Op}^A}\big )^{-1}\Big(\mathcal{T}_{\tilde{\alpha},\tilde{\beta}}^A\bowtie\overline{\mathcal{T}_{\tilde{\gamma},\tilde{\delta}}^A}\Big)
    \end{equation*}
    and its term-wise derivatives converge uniformly on compacts, thus the limit is a $C^\infty(\R^{4d})$-function. From \eqref{eq:matrix_2main_estimate} it also follows that all partial sums are uniformly bounded when multiplied with $M^{-1}(X,Y)$, which also extends to the sums of term-wise derivatives. This implies that the sum converges unconditionally to a member of $S_0(M)$. The statement on the semi-norms of the limit follows from the logic behind \eqref{eq:matrix_2main_estimate}.
\end{proof}

\begin{remark}
    We will in the following repeat the phrase ``tracking the relevant estimate ...", or something similar, when e.g. proving continuity of a product. What we in general allude to is the tedious process of using mainly Theorem \ref{thm:matrix_rep}, where in \ref{i_matrix_rep} we bound the matrix elements of $\mathfrak{Op}^A(\Phi)$ by, among other things, a $S_0(M)$-semi-norm of $\Phi$, and in \ref{ii_matrix_rep} state that $S_0(M)$-semi-norm of the sum \eqref{eq:dequant_sum} is bounded by, among other things, factors of the kind \eqref{eq:matrix}. The same strategy was used in \cite{Thorn2026}.
\end{remark}

\section{Super Calculus of Hörmander Symbols}\label{sec:calc}

With most of the definitions and the frame decomposition out of the way, we begin proving useful results for the magnetic super Weyl quantization. First we are interested in the magnetic semi-super Weyl product and super Weyl product, where the former concerns itself with products of the kind;
\begin{equation*}
    \mathfrak{Op}^A(\Phi)\mathfrak{op}^A(\psi)=\mathfrak{op}^A(\Phi\mathbin{\bullet^B}\psi)
\end{equation*}
and the latter with products of the kind:
\begin{equation*}
    \mathfrak{Op}^A(\Phi)\mathfrak{Op}^A(\Psi)=\mathfrak{Op}^A(\Phi\mathbin{\#^B}\Psi)
\end{equation*}
The definitions of these products and their corresponding Moyal algebras stem from \cite{LeeLein2022} were the authors also proved several results for some classical Hörmander super classes, see \cite[Lemma V.5, Lemma V.9, Proposition VI.4, and Proposition VI.5]{LeeLein2022}. We extend these to include the Hörmander super classes $S_0(M)$.

The first step is to use Theorem \ref{thm:matrix_rep} to prove the following:

\begin{lemma}\label{lem:invariance_smoothing}
    For $\Phi\in S_0(\infty)$, the magnetic pseudo-differential super operator $\mathfrak{Op}^A(\Phi)$ has range in the smoothing operators and it is bounded as an operator on the space of smoothing operators, that is:
    \begin{equation*}
        \mathfrak{Op}^A(\Phi)\in\mathcal{B}\Big(\mathcal{B}\big(\mathscr{S}'(\R^d),\mathscr{S}(\R^d)\big)\Big)
    \end{equation*}
    Moreover, for any tempered weight $M$, $\mathfrak{Op}^A$ maps $S_0(M)$ continuously into the super operator space $\mathcal{B}\big(\mathcal{B}(\mathscr{S}'(\R^d),\mathscr{S}(\R^d))\big)$.
\end{lemma}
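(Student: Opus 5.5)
We fix a tempered weight $M$ with constants $C_M, a$ as in \eqref{eq:peetre} and $\Phi\in S_0(M)$, and work entirely in the frame picture. Let $S\in\mathcal{B}(\mathscr{S}'(\R^d),\mathscr{S}(\R^d))$ and set $s_{\tilde{\gamma},\tilde{\delta}}=\langle S,\overline{\mathcal{T}_{\tilde{\gamma},\tilde{\delta}}^A}\rangle_{\mathcal{B}',\mathcal{B}}$. By Lemma \ref{lem:convergence_schwartz}\ref{i_convergence_schwartz} these coefficients decay faster than any power of $\langle(\tilde{\gamma},\tilde{\delta})\rangle$. Moreover, by Schwartz' kernel theorem and \cite[Lemma 2.2]{Thorn2026}, each weighted supremum $\sup\langle(\tilde{\gamma},\tilde{\delta})\rangle^{n}|s_{\tilde{\gamma},\tilde{\delta}}|$ is bounded by a continuous seminorm of $S$. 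Using \eqref{eq:supermatrix}, the coefficients of $\mathfrak{Op}^A(\Phi)S$ in the frame are
\begin{equation*}
    t_{\tilde{\alpha},\tilde{\beta}}=\sum_{(\tilde{\gamma},\tilde{\delta})\in\Z^{4d}}\mathbb{M}^A_{(\tilde{\alpha},\tilde{\beta}),(\tilde{\gamma},\tilde{\delta})}[\mathfrak{Op}^A(\Phi)]\,s_{\tilde{\gamma},\tilde{\delta}}.
\end{equation*}
This is legitimate because $S$ is also in $\mathcal{B}(\mathscr{S}(\R^d),\mathscr{S}'(\R^d))$, where the expansion holds.

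The key estimate comes from Theorem \ref{thm:matrix_rep}\ref{i_matrix_rep}. For each $N$ it gives
\begin{equation*}
    |\mathbb{M}^A_{\mathbf{a},\mathbf{c}}|\le C\Vert\Phi\Vert_{S_0(M),k}\,\langle\mathbf{a}-\mathbf{c}\rangle^{-N}M\big(\tfrac{\mathbf{a}+\mathbf{c}}{2}\big),
\end{equation*}
where $\mathbf{a}=(\tilde{\alpha},\tilde{\beta})$ and $\mathbf{c}=(\tilde{\gamma},\tilde{\delta})$. Peetre's inequality then yields $M(\tfrac{\mathbf{a}+\mathbf{c}}{2})\le C_M M(0)\langle\tfrac{\mathbf{a}+\mathbf{c}}{2}\rangle^{a}\le C'\langle\mathbf{a}\rangle^a\langle\mathbf{c}\rangle^a$. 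Applying Peetre once more, $\langle\mathbf{a}\rangle^n\le 2^{n/2}\langle\mathbf{a}-\mathbf{c}\rangle^n\langle\mathbf{c}\rangle^n$. Consequently
\begin{equation*}
    \langle\mathbf{a}\rangle^n|t_{\mathbf{a}}|\le C''\Vert\Phi\Vert_{S_0(M),k}\sum_{\mathbf{c}}\langle\mathbf{a}-\mathbf{c}\rangle^{n+a-N}\langle\mathbf{c}\rangle^{n+2a}|s_{\mathbf{c}}|.
\end{equation*}
Choosing $N> n+a+4d$, the right-hand side is bounded uniformly in $\mathbf{a}$ by $\Vert\Phi\Vert_{S_0(M),k}\sup_{\mathbf{c}}\langle\mathbf{c}\rangle^{n+2a+4d+1}|s_{\mathbf{c}}|$. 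Hence $(t_{\mathbf{a}})$ is rapidly decreasing, and Lemma \ref{lem:convergence_schwartz}\ref{ii_convergence_schwartz} shows that $\sum t_{\mathbf{a}}\mathcal{T}^A_{\mathbf{a}}$ converges absolutely in $\mathcal{B}(\mathscr{S}'(\R^d),\mathscr{S}(\R^d))$.

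Two identifications remain. First, this limit must equal $\mathfrak{Op}^A(\Phi)S$, originally defined in $\mathcal{B}(\mathscr{S}(\R^d),\mathscr{S}'(\R^d))$. To see this we pair with $\mathcal{T}^A_{\mathbf{a}}$ and use the expansion of $\mathfrak{Op}^A(\Phi)S$ in the tempered setting. The interchange of summation is justified by the absolute convergence just established, and uniqueness of coefficients follows from pairing. Second, we need continuity. The topology of $\mathcal{B}(\mathscr{S}'(\R^d),\mathscr{S}(\R^d))$ is equivalent, via $\mathfrak{int}^{-1}$ and the frame characterization of $\mathscr{S}(\R^{2d})$, to the one given by the weighted sup-seminorms of the coefficients. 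Precisely, the synthesis map in Lemma \ref{lem:convergence_schwartz}\ref{ii_convergence_schwartz} is continuous from rapidly decreasing sequences, since its absolute convergence comes with seminorm bounds. The estimate above then reads $p_n(\mathfrak{Op}^A(\Phi)S)\le C\Vert\Phi\Vert_{S_0(M),k}\,q_{n'}(S)$ for suitable seminorms $p_n,q_{n'}$. This gives boundedness of $\mathfrak{Op}^A(\Phi)$ on the smoothing operators. It also gives continuity of $\Phi\mapsto\mathfrak{Op}^A(\Phi)$ from $S_0(M)$ into $\mathcal{B}(\mathcal{B}(\mathscr{S}',\mathscr{S}))$ with the topology of uniform convergence on bounded sets, because bounded sets of $S$ have bounded $q_{n'}$.

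The main obstacle I expect is the bookkeeping in the second step. One must make sure that the seminorm comparisons between $\mathcal{B}(\mathscr{S}'(\R^d),\mathscr{S}(\R^d))$ and the coefficient sequences are genuinely two-sided and continuous, not just finiteness statements as in Lemma \ref{lem:convergence_schwartz}. I would extract this from the proof of \cite[Lemma 2.2]{Thorn2026}, or deduce it via the closed graph theorem on Fréchet spaces.
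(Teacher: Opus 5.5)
Your proposal is correct and follows the paper's proof. Both arguments bound the matrix elements polynomially via Theorem~\ref{thm:matrix_rep}\ref{i_matrix_rep} and Peetre's inequality, absorb that growth into the rapidly decaying coefficients of $S$ from Lemma~\ref{lem:convergence_schwartz}\ref{i_convergence_schwartz}, and conclude with Lemma~\ref{lem:convergence_schwartz}\ref{ii_convergence_schwartz}; the continuity seminorm bounds, which the paper only says follow by "tracking estimates," you spell out explicitly.

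One clarification on your identification step: it needs only continuity of $\mathfrak{Op}^A(\Phi)$ from $\mathcal{B}(\mathscr{S}'(\R^d),\mathscr{S}(\R^d))$ into $\mathcal{B}(\mathscr{S}(\R^d),\mathscr{S}'(\R^d))$, applied to the smoothing-convergent expansion of $S$. This gives $t_{\mathbf{a}}=\langle\mathfrak{Op}^A(\Phi)S,\overline{\mathcal{T}^A_{\mathbf{a}}}\rangle_{\mathcal{B}',\mathcal{B}}$, so \eqref{eq:matrix_decomp} applies directly. No uniqueness of frame coefficients is involved, and none is available, since this Parseval frame is redundant.
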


\begin{proof}
    Let $S\in\mathcal{B}\left(\mathscr{S}'(\R^d),\mathscr{S}(\R^d)\right)$. Using Lemma \ref{lem:convergence_schwartz} \ref{i_convergence_schwartz} and Theorem \ref{thm:matrix_rep} \ref{i_matrix_rep} we see that
    \begin{equation*}
        N_{\tilde{\alpha},\tilde{\beta}}=\sum_{(\tilde{\gamma},\tilde{\beta})\in\Z^{4d}}\Big\langle \mathfrak{Op}^A(\Phi)\mathcal{T}_{\tilde{\gamma},\tilde{\delta}}^A,\overline{\mathcal{T}_{\tilde{\alpha},\tilde{\beta}}^A}\Big\rangle_{\mathcal{B}',\mathcal{B}}\,\Big\langle S,\overline{\mathcal{T}_{\tilde{\gamma},\tilde{\delta}}^A}\Big\rangle_{\mathcal{B}',\mathcal{B}}
    \end{equation*}
    for $\tilde{\alpha},\tilde{\beta}\in\Z^{2d}$ defines a collection of complex numbers satisfying Lemma \ref{lem:convergence_schwartz} \ref{ii_convergence_schwartz}, whence Lemma \ref{lem:convergence_schwartz} \ref{ii_convergence_schwartz} and \eqref{eq:supermatrix} tells us that $\mathfrak{Op}^A(\Phi)S$ is a smoothing operator. In more detail, using Theorem \ref{thm:matrix_rep} \ref{i_matrix_rep}, Peetre's inequality, and \eqref{eq:peetre} we get
    \begin{equation*}
         \Big|\Big\langle \mathfrak{Op}^A(\Phi)\mathcal{T}_{\tilde{\gamma},\tilde{\delta}}^A,\overline{\mathcal{T}_{\tilde{\alpha},\tilde{\beta}}^A}\Big\rangle_{\mathcal{B}',\mathcal{B}}\Big|\leq C\langle (\tilde{\alpha},\tilde{\beta})\rangle^{-n}\langle (\tilde{\gamma},\tilde{\delta})\rangle^k
    \end{equation*}
    for arbitrary $n\in\N_0$ with $C,k>0$ depending on $n$. So multiplying $N_{\tilde{\alpha},\tilde{\beta}}$ by $\langle (\tilde{\alpha},\tilde{\beta})\rangle$ for some $l\in\N_0$, we absorb $\langle (\tilde{\alpha},\tilde{\beta})\rangle^l$ into $|\langle \mathfrak{Op}^A(\Phi)\mathcal{T}_{\tilde{\gamma},\tilde{\delta}}^A,\overline{\mathcal{T}_{\tilde{\alpha},\tilde{\beta}}^A}\rangle_{\mathcal{B}',\mathcal{B}}|$ generating $\langle (\tilde{\gamma},\tilde{\delta})\rangle^{k}$ for some $k\in\N_0$ depending on $l$, which is then absorbed into $|\langle S,\overline{\mathcal{T}_{\tilde{\gamma},\tilde{\delta}}^A}\rangle_{\mathcal{B}',\mathcal{B}}|$ using Lemma \ref{lem:convergence_schwartz} \ref{i_convergence_schwartz}, leaving an absolute convergent sum, uniformly in $(\tilde{\alpha},\tilde{\beta})\in\Z^{4d}$.
    
    Tracking the relevant estimates we get the last two assertions on continuity.
\end{proof}

\subsection{Magnetic Semi-super Weyl Product}\label{sec:calc_semi}

The magnetic semi-super Weyl product between a tempered distribution $\Phi\in\mathscr{S}'(\R^{4d})$ and a Schwartz function $\psi\in\mathscr{S}(\R^{2d})$ is defined as the tempered distribution:
\begin{equation}\label{eq:semi_weyl}
    \Phi\mathbin{\bullet^B}\psi=\big ({\mathfrak{op}^A}\big )^{-1}\big(\mathfrak{Op}^A(\Phi)\mathfrak{op}^A(\psi)\big)=\big ({k^A}\big )^{-1}\mathfrak{Int}(K^A\Phi)k^A\psi=\mathfrak{Int}(\tilde{K}^B\Phi)\psi
\end{equation}
The question is then, for which $\Phi$ can we extend the above formula to every tempered distribution $\psi$, that is, does $\Phi$ belong to the magnetic semi-super Moyal space $\mathfrak{m}^B(\R^{4d})$ \cite[Definition V.3]{LeeLein2022}. Here we work with the formal transpose \cite{Grubb2009,ReedSimonI1980}, i.e. given $\Phi\in\mathscr{S}'(\R^{4d})$ is there a map $T_\Phi\in\mathcal{B}(\mathscr{S}(\R^{2d}))$ such that
\begin{equation*}
    \langle \Phi\mathbin{\bullet^B}\psi,\chi\rangle_{\mathscr{S}',\mathscr{S}}=\langle\psi,T_\Phi\chi\rangle_{\mathscr{S}',\mathscr{S}}
\end{equation*}
for all $\psi,\chi\in\mathscr{S}(\R^{2d})$. If the answer is affirmative, then $\Phi\in\mathfrak{m}^B(\R^{4d})$ and the map $\mathscr{S}(\R^{2d})\ni\psi\mapsto\Phi\mathbin{\bullet^B}\psi$ is extendable to a map in $\mathcal{B}(\mathscr{S}'(\R^{2d}))$.

The maps $\big ({k^A}\big )^{-1},k^A$ pose no problems since they are known to have simple formal transposes. So considering \eqref{eq:semi_weyl}, $\Phi\in \mathfrak{m}^B(\R^{4d})$ if and only if $\mathfrak{Int}(K^A\Phi)$ has a formal transpose. Here we note that a formal transpose of $\mathfrak{Int}(K^A\Phi)$ would have kernel $(K^A\Phi)^t=\overline{K^A\overline{\Phi}}$, where $\Psi^t(X_L,X_R)=\Psi(X_R,X_L)$ for Schwartz functions which is then extended to tempered distributions.

\begin{proposition}\label{prop:calc_super_semi}
    The space of Hörmander super classes $S_0(\infty)$ is contained in the magnetic semi-super Moyal space $\mathfrak{m}^B(\R^{4d})$.
    
    Furthermore, for any tempered weights $M$ and $m$ on $\R^{4d}$ and $\R^{2d}$ respectively, $S_0(M)\mathbin{\bullet^B}s_0(m)\subseteq s_0(\tilde{m})$ where
    \begin{equation*}
        \tilde{m}\colon\R^{2d}\ni X\mapsto M(X,X)m(X),
    \end{equation*}
    and the bilinear map $S_0(M)\times s_0(m)\ni(\Phi,\psi)\mapsto\Phi\mathbin{\bullet^B}\psi\in s_0(\tilde{m})$ is continuous.
\end{proposition}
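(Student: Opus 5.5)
Both assertions will be proved with the frame machinery of Section~\ref{sec:decompositions}: the matrix of $\mathfrak{Op}^A(\Phi)$ from Theorem~\ref{thm:matrix_rep} \ref{i_matrix_rep} is combined with the non-super matrix characterization of $s_0(m)$ from \cite[Theorem 2.4]{Thorn2026}.

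\emph{Membership in $\mathfrak{m}^B(\R^{4d})$.} By the discussion preceding the proposition, it suffices to show that $\mathfrak{Int}(K^A\Phi)$ has a formal transpose in $\mathcal{B}(\mathscr{S}(\R^{2d}))$. Its candidate kernel is $(K^A\Phi)^t=\overline{K^A\overline{\Phi}}$. Swapping the $L$ and $R$ variables and conjugating maps $S_0(M)$ into $S_0(M')$, where $M'(X_L,X_R)=M(X_R,X_L)$ is again a tempered weight. Hence
\[
\Phi':=\overline{\big(K^A\big)^{-1}\big((K^A\overline{\Phi})^t\big)}
\]
is, up to swapping and conjugation, an element of $S_0(\infty)$, and $\mathfrak{Int}((K^A\Phi)^t)$ is conjugate to $\mathfrak{Int}(K^A\Phi')$. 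The latter is $\mathfrak{int}^{-1}\circ\mathfrak{Op}^A(\Phi')\circ\mathfrak{int}$, which Lemma~\ref{lem:invariance_smoothing} shows is bounded on $\mathscr{S}(\R^{2d})$, because $\mathfrak{int}$ identifies $\mathscr{S}(\R^{2d})$ with the smoothing operators. The one point to check is that swapping $L$ and $R$ in \eqref{eq:superweyltransform} matches the conjugated kernel. This holds because the phase $\xi_L\cdot(x_L-y_L)-\xi_R\cdot(x_R-y_R)+\varphi(x_L,y_L)-\varphi(x_R,y_R)$ changes sign under $L\leftrightarrow R$.

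\emph{The product estimate.} For $\Phi\in S_0(M)$ and $\psi\in s_0(m)$, I write $\mathfrak{op}^A(\psi)=\sum_{\tilde\gamma,\tilde\delta}n_{\tilde\gamma,\tilde\delta}\mathcal{T}^A_{\tilde\gamma,\tilde\delta}$. By \cite[Theorem 2.4]{Thorn2026},
\[
|n_{\tilde\gamma,\tilde\delta}|\lesssim \langle\tilde\gamma-\tilde\delta\rangle^{-N}m\big(\tfrac{\tilde\gamma+\tilde\delta}{2}\big).
\]
Applying $\mathfrak{Op}^A(\Phi)$ termwise (justified by the tempered-distribution expansion and continuity) gives matrix coefficients
\[
c_{\tilde\alpha,\tilde\beta}=\sum_{\tilde\gamma,\tilde\delta}\mathbb{M}^A_{(\tilde\alpha,\tilde\beta),(\tilde\gamma,\tilde\delta)}[\mathfrak{Op}^A(\Phi)]\,n_{\tilde\gamma,\tilde\delta}.
\]
The goal is the bound
\[
|c_{\tilde\alpha,\tilde\beta}|\lesssim\langle\tilde\alpha-\tilde\beta\rangle^{-N}\tilde m\big(\tfrac{\tilde\alpha+\tilde\beta}{2}\big).
\]
Once this holds, the converse part of \cite[Theorem 2.4]{Thorn2026} gives $\Phi\mathbin{\bullet^B}\psi\in s_0(\tilde m)$, with seminorm control. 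Tracking constants through Theorem~\ref{thm:matrix_rep} \ref{i_matrix_rep} then yields continuity of the bilinear map. We also need $\tilde m$ to be a tempered weight, which follows directly from \eqref{eq:peetre} for $M$ and $m$.

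\emph{The main obstacle} is this summation estimate. Theorem~\ref{thm:matrix_rep} bounds the matrix entry by $\langle(\tilde\alpha,\tilde\beta)-(\tilde\gamma,\tilde\delta)\rangle^{-n}M\big(\tfrac{(\tilde\alpha,\tilde\beta)+(\tilde\gamma,\tilde\delta)}{2}\big)$. Using Peetre's inequality I will move the argument of $M$ to $(\tfrac{\tilde\alpha+\tilde\beta}{2},\tfrac{\tilde\alpha+\tilde\beta}{2})$ and that of $m$ to $\tfrac{\tilde\alpha+\tilde\beta}{2}$. The cost is polynomial factors in $(\tilde\alpha-\tilde\gamma)$, $(\tilde\beta-\tilde\delta)$, $(\tilde\gamma-\tilde\delta)$ and $(\tilde\alpha-\tilde\beta)$.

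The key observation is that the first argument of $M$ is $\tfrac{\tilde\alpha+\tilde\gamma}{2}$ and the second is $\tfrac{\tilde\beta+\tilde\delta}{2}$. Each differs from $\tfrac{\tilde\alpha+\tilde\beta}{2}$ by a combination of $\tilde\alpha-\tilde\gamma$, $\tilde\gamma-\tilde\delta$ and $\tilde\delta-\tilde\beta$, so every loss is absorbed by the arbitrary decay $n$ and $N$. Finally,
\[
\langle\tilde\alpha-\tilde\beta\rangle\le\langle\tilde\alpha-\tilde\gamma\rangle\langle\tilde\gamma-\tilde\delta\rangle\langle\tilde\delta-\tilde\beta\rangle,
\]
which produces the off-diagonal decay. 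The remaining sum over $(\tilde\gamma,\tilde\delta)$ then converges absolutely, uniformly in $(\tilde\alpha,\tilde\beta)$.
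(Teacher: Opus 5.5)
Your product estimate is the paper's argument. You expand $\mathfrak{op}^A(\psi)$ in the frame, bound the super matrix by Theorem~\ref{thm:matrix_rep}~\ref{i_matrix_rep} and the coefficients by the non-super matrix characterization of $s_0(m)$, and use Peetre's inequality to move the arguments of $M$ and $m$ to $\frac{\tilde\alpha+\tilde\beta}{2}$. The off-diagonal decay then comes from the chain $\tilde\alpha\to\tilde\gamma\to\tilde\delta\to\tilde\beta$. Your check that $\tilde m$ is tempered is a point the paper leaves implicit.

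The membership part also has the paper's structure, but the step you call ``the one point to check'' targets the wrong involution. The formal transpose of $\mathfrak{Int}(K)$ on $\mathscr{S}(\R^{2d})$ has kernel $K(\mathbf{y},\mathbf{x})$, with $\mathbf{x}=(x_L,x_R)$ and $\mathbf{y}=(y_L,y_R)$. It exchanges output and input, not left and right. The identity $(K^A\Phi)^t=\overline{K^A\overline{\Phi}}$ you quote holds for this exchange. Under $\mathbf{x}\leftrightarrow\mathbf{y}$ the phase in \eqref{eq:superweyltransform} flips sign, using $\varphi(y,x)=-\varphi(x,y)$ for the reversed segment. The midpoints $\frac{x_L+y_L}{2},\frac{x_R+y_R}{2}$, and hence the arguments of $\Phi$, are unchanged. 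So the transpose is $\psi\mapsto\overline{\mathfrak{Int}(K^A\overline{\Phi})\overline{\psi}}$, and Lemma~\ref{lem:invariance_smoothing} applied to $\overline{\Phi}\in S_0(\infty)$ concludes. No swapped weight $M'$ is needed, and the right choice is simply $\Phi'=\overline{\Phi}$. Your displayed formula for $\Phi'$ puts the conjugation in the wrong place, and $K^A$ does not commute with complex conjugation.

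The $L\leftrightarrow R$ flip you actually verify also permutes the arguments of $\Phi$. It shows that the $L/R$-flipped kernel equals $\overline{K^A\overline{\Phi^{\mathrm{sw}}}}$, where $\Phi^{\mathrm{sw}}(X_L,X_R)=\Phi(X_R,X_L)$. The operator with that kernel is just $U\,\mathfrak{Int}(K^A\Phi)\,U$, where $(U\kappa)(x_L,x_R)=\kappa(x_R,x_L)$. Its boundedness on $\mathscr{S}(\R^{2d})$ follows immediately from Lemma~\ref{lem:invariance_smoothing} and says nothing about a transpose. So the verification you give does not support the identity you need, and as written the first part does not establish $\Phi\in\mathfrak{m}^B(\R^{4d})$.

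This also affects your second part. There you apply $\mathfrak{Op}^A(\Phi)$ termwise to the expansion of $\mathfrak{op}^A(\psi)$, which converges in $\mathcal{B}(\mathscr{S}(\R^d),\mathscr{S}'(\R^d))$ rather than among smoothing operators. That step uses exactly the continuous extension that the correct transpose provides.
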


\begin{proof}
    By the above analysis, we need to prove that $\mathfrak{Int}(\overline{K^A\overline{\Phi}})\in\mathcal{B}(\mathscr{S}(\R^{2d}))$ for $\Phi\in S_0(\infty)$. But
    \begin{equation*}
        \mathfrak{Int}(\overline{K^A\overline{\Phi}})\psi=\overline{\mathfrak{Int}(K^A\overline{\Phi})\overline{\psi}}
    \end{equation*}
    and $\overline{\Phi}\in S_0(\infty)$ implies that $\mathfrak{Int}(K^A\overline{\Phi})\in \mathcal{B}(\mathscr{S}(\R^{2d}))$ by Lemma \ref{lem:invariance_smoothing}. Thus $\Phi\in\mathfrak{m}^B(\R^{4d})$.

    To prove that $S_0(M)\mathbin{\bullet^B}s_0(m)\subseteq s_0(\tilde{m})$ we study the coordinates of $\mathfrak{Op}^A(\Phi)\mathfrak{op}^A(\psi)$ in the frame $(\mathcal{T}_{\tilde{\alpha},\tilde{\beta}}^A)_{\tilde{\alpha},\tilde{\beta}\in\Z^{2d}}$ for $\Phi\in S_0(M)$ and $\psi\in s_0(m)$. The result \cite[Theorem 2.4]{Thorn2026}, an analog of Theorem \ref{thm:matrix_rep} for the classes $s_0(m)$, would then imply that $\Phi\mathbin{\bullet^B}\psi\in s_0(\tilde{m})$, and tracking estimates would prove continuity.

    We have
    \begin{align*}
        \Big\langle\mathfrak{Op}^A&(\Phi)\mathfrak{op}^A(\psi),\overline{\mathcal{T}_{\tilde{\alpha},\tilde{\beta}}^A}\Big\rangle_{\mathcal{B}',\mathcal{B}}\\
        &=\sum_{(\tilde{\gamma},\tilde{\delta})\in\Z^{4d}}\Big\langle\mathfrak{Op}^A(\Phi)\mathcal{T}_{\tilde{\gamma},\tilde{\delta}}^A,\overline{\mathcal{T}_{\tilde{\alpha},\tilde{\beta}}^A}\Big\rangle_{\mathcal{B}',\mathcal{B}}\,\Big\langle\mathfrak{op}^A(\psi),\overline{\mathcal{T}_{\tilde{\gamma},\tilde{\delta}}^A}\Big\rangle_{\mathcal{B}',\mathcal{B}}
    \end{align*}
    using the decomposition \eqref{eq:matrix_decomp} on $\mathfrak{op}^A(\psi)$. Theorem \ref{thm:matrix_rep} \ref{i_matrix_rep} gives the estimate
    \begin{equation}\label{eq:semi_matrix_est}
        \Big|\Big\langle\mathfrak{Op}^A(\Phi)\mathcal{T}_{\tilde{\gamma},\tilde{\delta}}^A,\overline{\mathcal{T}_{\tilde{\alpha},\tilde{\beta}}^A}\Big\rangle_{\mathcal{B}',\mathcal{B}}\Big|\leq C\Big(\langle\tilde{\alpha}-\tilde{\gamma}\rangle\langle\tilde{\beta}-\tilde{\delta}\rangle\Big)^{-n}M\Bigg(\frac{(\tilde{\alpha},\tilde{\beta})+(\tilde{\gamma},\tilde{\delta})}{2}\Bigg)
    \end{equation}
    and \cite[Theorem 2.4 1]{Thorn2026} gives the estimate
    \begin{equation}\label{eq:semi_coor_est}
        \Big|\Big\langle\mathfrak{op}^A(\psi),\overline{\mathcal{T}_{\tilde{\gamma},\tilde{\delta}}^A}\Big\rangle_{\mathcal{B}',\mathcal{B}}\Big|\leq \tilde{C}\langle\tilde{\gamma}-\tilde{\delta}\rangle^{-\tilde{n}}m\Bigg(\frac{\tilde{\gamma}+\tilde{\delta}}{2}\Bigg),
    \end{equation}
    where $n,\tilde{n}\in\N_0$ are arbitrary but $C,\tilde{C}>0$ depends on $n,\tilde{n}$ respectively. From \eqref{eq:peetre} we get
    \begin{align*}
        M&\Bigg(\frac{(\tilde{\alpha},\tilde{\beta})+(\tilde{\gamma},\tilde{\delta})}{2}\Bigg)m\Bigg(\frac{\tilde{\gamma}+\tilde{\delta}}{2}\Bigg)\\
        &\leq C_1M\Bigg(\frac{(\tilde{\alpha}+\tilde{\beta},\tilde{\alpha}+\tilde{\beta})}{2}\Bigg)m\Bigg(\frac{\tilde{\alpha}+\tilde{\beta}}{2}\Bigg)\big(\langle \tilde{\beta}-\tilde{\gamma}\rangle\langle \tilde{\alpha}-\tilde{\delta}\rangle\big)^{a_1+a_2}
    \end{align*}
    with $C_1>0$ and $a_1,a_2>0$ determined by $M,m$ respectively. Using Peetre's inequality we get
    \begin{equation*}
        \big(\langle \tilde{\beta}-\tilde{\gamma}\rangle\langle \tilde{\alpha}-\tilde{\delta}\rangle\big)^{a_1+a_2}\leq C_2\big(\langle\tilde{\beta}-\tilde{\delta}\rangle\langle\tilde{\delta}-\tilde{\gamma}\rangle\langle\tilde{\alpha}-\tilde{\gamma}\rangle\langle\tilde{\gamma}-\tilde{\delta}\rangle\big)^{a_1+a_2}
    \end{equation*}
    for some $C_2>0$, so by \eqref{eq:semi_matrix_est} and \eqref{eq:semi_coor_est}:
    \begin{align*}
        \Big|\Big\langle\mathfrak{Op}^A&(\Phi)\mathfrak{op}^A(\psi),\overline{\mathcal{T}_{\tilde{\alpha},\tilde{\beta}}^A}\Big\rangle_{\mathcal{B}',\mathcal{B}}\Big|\\
        &\leq C_3\tilde{m}\Bigg(\frac{\tilde{\alpha}+\tilde{\beta}}{2}\Bigg)\sum_{(\tilde{\gamma},\tilde{\delta})\in\Z^{4d}}\big(\langle\tilde{\alpha}-\tilde{\gamma}\rangle\langle\tilde{\gamma}-\tilde{\delta}\rangle\langle\tilde{\delta}-\tilde{\beta}\rangle\big)^{-n}
    \end{align*}
    with $n\in\N_0$ arbitrary, $C_3>0$ depending on $n$, and $\tilde{m}$ defined as in the proposition. Thus, using Peetre's inequality again, we obtain:
    \begin{align*}
        \sup_{\tilde{\alpha},\tilde{\beta}\in\Z^{2d}}\langle\tilde{\alpha}-\tilde{\beta}\rangle^{\tilde{n}}\tilde{m}\Bigg(\frac{\tilde{\alpha}+\tilde{\beta}}{2}\Bigg)^{-1}\Big|\Big\langle\mathfrak{Op}^A&(\Phi)\mathfrak{op}^A(\psi),\overline{\mathcal{T}_{\tilde{\alpha},\tilde{\beta}}^A}\Big\rangle_{\mathcal{B}',\mathcal{B}}\Big|\\
        &\leq C_4\sum_{(\tilde{\gamma},\tilde{\delta})\in\Z^{4d}}\langle(\tilde{\gamma},\tilde{\delta})\rangle^{-4d-1}<\infty
    \end{align*}
    for some $C_4>0$. This estimate and \cite[Theorem 2.4 2]{Thorn2026} then implies that $\Phi\mathbin{\bullet^B}\psi\in s_0(\tilde{m})$, and we are done.
\end{proof}

\begin{remark}
    In \cite{Thorn2026} several results were proven that related properties of the tempered weight $m$ to properties of $\mathfrak{op}^A(s_0(m))$, e.g. if $m\in L^1(\R^{2d})$, then $\mathfrak{op}^A(s_0(m))$ has a continuous embedding into $\mathcal{B}_1(L^2(\R^d))$ \cite[Theorem 4.3]{Thorn2026}.
    
    One example of the applications of the above result is classifying how certain super operators $\mathfrak{Op}^A(\Phi)$ behaves on spaces $\mathfrak{op}^A(s_0(m))$, e.g. if $m\in L^1(\R^{2d})$ and $\Phi\in S_0(1)$, then $\mathfrak{Op}^A(\Phi)\mathfrak{op}^A(s_0(m))\subseteq\mathfrak{op}^A(s_0(m))\subseteq \mathcal{B}_1(L^2(\R^d))$, showing that the super operator $\mathfrak{Op}^A(\Phi)$ is invariant on some subspaces of trace-class operators.
\end{remark}

\subsection{Magnetic Super Weyl Product}

Much like the magnetic Moyal product \cite{Mantoiu2004,Iftimie2007}, the magnetic super Weyl product \cite{LeeLein2022} exist between any tempered distribution $\Phi\in\mathscr{S}'(\R^{4d})$ and Schwartz function $\Psi\in\mathscr{S}(\R^{4d})$:
\begin{equation*}
    \Phi\mathbin{\#^B}\Psi=\big ({\mathfrak{Op}^A}\big )^{-1}\big(\mathfrak{Op}^A(\Phi)\mathfrak{Op}^A(\Psi)\big)
\end{equation*}
and
\begin{equation*}
    \Psi\mathbin{\#^B}\Phi=\big ({\mathfrak{Op}^A}\big )^{-1}\big(\mathfrak{Op}^A(\Psi)\mathfrak{Op}^A(\Phi)\big)
\end{equation*}
Note that it is often the case that $\Phi\mathbin{\#^B}\Psi \neq\Psi\mathbin{\#^B}\Phi$. Again we have the question of when the magnetic super Weyl product between two tempered distributions can be defined, i.e. which tempered distributions lie in the magnetic super Moyal algebra $\mathscr{M}^B(\R^{4d})$ \cite[Definition V.8]{LeeLein2022}:

\begin{proposition}\label{prop:calc_super_moyal}
    The space of Hörmander super classes $S_0(\infty)$ is contained in the magnetic super Moyal space $\mathscr{M}^B(\R^{4d})$.
    
    Furthermore, for any two tempered weights $M_1$ and $M_2$, $S_0(M_1)\mathbin{\#^B}S_0(M_2)\subseteq S_0(M_1M_2)$ and the bilinear map $S_0(M_1)\times S_0(M_2)\ni(\Phi,\Psi)\mapsto\Phi\mathbin{\#^B}\Psi\ni S_0(M_1M_2)$ is continuous.
\end{proposition}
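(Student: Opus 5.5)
The plan follows the proof of Proposition \ref{prop:calc_super_semi}, with the frame coordinates of an operator replaced by the matrix of a super operator.

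\emph{Membership in $\mathscr{M}^B(\R^{4d})$.} By definition, $\Phi\in\mathscr{M}^B(\R^{4d})$ amounts to $\Psi\mapsto\Phi\mathbin{\#^B}\Psi$ and $\Psi\mapsto\Psi\mathbin{\#^B}\Phi$ mapping $\mathscr{S}(\R^{4d})$ continuously into itself; the extension to $\mathscr{S}'(\R^{4d})$ then follows by the formal transpose argument used before Proposition \ref{prop:calc_super_semi}. Since the $\mathfrak{Op}^A$-images of Schwartz super symbols are exactly the super operators in $\mathcal{B}\big(\mathcal{B}(\mathscr{S},\mathscr{S}'),\mathcal{B}(\mathscr{S}',\mathscr{S})\big)$, we need:
\begin{itemize}
\item $\mathfrak{Op}^A(\Phi)\mathfrak{Op}^A(\Psi)$ lies in this class. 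This holds because $\mathfrak{Op}^A(\Psi)$ maps into smoothing operators and, by Lemma \ref{lem:invariance_smoothing}, $\mathfrak{Op}^A(\Phi)$ preserves smoothing operators.
\item $\mathfrak{Op}^A(\Psi)\mathfrak{Op}^A(\Phi)$ lies in this class. Here we first need $\mathfrak{Op}^A(\Phi)$ to extend continuously to $\mathcal{B}(\mathscr{S},\mathscr{S}')$. This follows by duality: the transpose super kernel $(K^A\Phi)^t=\overline{K^A\overline{\Phi}}$ corresponds to a symbol in $S_0(\infty)$, so Lemma \ref{lem:invariance_smoothing} applies to it.
\end{itemize}
Continuity in $\Psi$ follows by tracking seminorms. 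Alternatively, the whole statement follows from the second part together with $\mathscr{S}=\bigcap_M S_0(M)$, choosing $M$ decaying faster than any polynomial.

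\emph{Inclusion $S_0(M_1)\mathbin{\#^B}S_0(M_2)\subseteq S_0(M_1M_2)$.} Write $\mathbf{a}=(\tilde{\alpha},\tilde{\beta})$, $\mathbf{b}$, $\mathbf{c}\in\Z^{4d}$. Inserting the decomposition \eqref{eq:matrix_decomp} between the two super operators gives
\begin{equation*}
\mathbb{M}^A_{\mathbf{a},\mathbf{c}}[\mathfrak{Op}^A(\Phi)\mathfrak{Op}^A(\Psi)]=\sum_{\mathbf{b}\in\Z^{4d}}\mathbb{M}^A_{\mathbf{a},\mathbf{b}}[\mathfrak{Op}^A(\Phi)]\,\mathbb{M}^A_{\mathbf{b},\mathbf{c}}[\mathfrak{Op}^A(\Psi)].
\end{equation*}
This is justified by applying \eqref{eq:matrix_decomp} to the smoothing operator $\mathfrak{Op}^A(\Psi)\mathcal{T}^A_{\mathbf{c}}$ and using the continuity of $\mathfrak{Op}^A(\Phi)$ on smoothing operators (Lemma \ref{lem:invariance_smoothing}).

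By Theorem \ref{thm:matrix_rep} \ref{i_matrix_rep}, each summand is bounded by
\begin{equation*}
C\langle\mathbf{a}-\mathbf{b}\rangle^{-n}\langle\mathbf{b}-\mathbf{c}\rangle^{-n}M_1\Big(\tfrac{\mathbf{a}+\mathbf{b}}{2}\Big)M_2\Big(\tfrac{\mathbf{b}+\mathbf{c}}{2}\Big).
\end{equation*}
By \eqref{eq:peetre},
\begin{equation*}
M_1\Big(\tfrac{\mathbf{a}+\mathbf{b}}{2}\Big)M_2\Big(\tfrac{\mathbf{b}+\mathbf{c}}{2}\Big)\leq C'(M_1M_2)\Big(\tfrac{\mathbf{a}+\mathbf{c}}{2}\Big)\big(\langle\mathbf{b}-\mathbf{c}\rangle\langle\mathbf{a}-\mathbf{b}\rangle\big)^{a_1+a_2},
\end{equation*}
since $\tfrac{\mathbf{a}+\mathbf{b}}{2}-\tfrac{\mathbf{a}+\mathbf{c}}{2}=\tfrac{\mathbf{b}-\mathbf{c}}{2}$ and similarly for the other factor. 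Note that $M_1M_2$ is again a tempered weight. Peetre's inequality gives $\langle\mathbf{a}-\mathbf{c}\rangle^{\tilde{n}}\leq 2^{\tilde n}\langle\mathbf{a}-\mathbf{b}\rangle^{\tilde{n}}\langle\mathbf{b}-\mathbf{c}\rangle^{\tilde{n}}$. Taking $n$ large, the sum over $\mathbf{b}$ converges, and we obtain \eqref{eq:matrix} for $N=\mathbb{M}^A[\mathfrak{Op}^A(\Phi)\mathfrak{Op}^A(\Psi)]$ with weight $M_1M_2$, for every $\tilde n$.

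Theorem \ref{thm:matrix_rep} \ref{ii_matrix_rep} then produces $\Xi\in S_0(M_1M_2)$ with
\begin{equation*}
\Xi=\sum_{\mathbf{a},\mathbf{c}}N_{\mathbf{a},\mathbf{c}}\,\big(\mathfrak{Op}^A\big)^{-1}\big(\mathcal{T}^A_{\mathbf{a}}\bowtie\overline{\mathcal{T}^A_{\mathbf{c}}}\big).
\end{equation*}
Continuity of the bilinear map follows by tracking the constants: each bound above is a product of $S_0(M_1)$- and $S_0(M_2)$-seminorms.

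\emph{Main obstacle.} The step I expect to be delicate is identifying $\Xi$ with $\Phi\mathbin{\#^B}\Psi$, i.e. showing $\mathfrak{Op}^A(\Xi)=\mathfrak{Op}^A(\Phi)\mathfrak{Op}^A(\Psi)$. Theorem \ref{thm:matrix_rep} only asserts convergence uniformly on compacts. I would upgrade this to convergence in $\mathscr{S}'(\R^{4d})$, using the uniform polynomial bounds from \eqref{eq:matrix_2main_estimate} and dominated convergence. Then, by continuity of $\mathfrak{Op}^A$ on $\mathscr{S}'(\R^{4d})$, both super operators have the same matrix and hence agree on every $\mathcal{T}^A_{\mathbf{c}}$. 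Finally, by \eqref{eq:supermatrix} they agree on all smoothing operators, and by the homeomorphism property of $\mathfrak{Op}^A$ this gives $\Xi=\Phi\mathbin{\#^B}\Psi$.
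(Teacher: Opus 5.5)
Your proposal is correct and is essentially the paper's proof. The matrix of $\mathfrak{Op}^A(\Phi)\mathfrak{Op}^A(\Psi)$ comes from expanding $\mathfrak{Op}^A(\Psi)\mathcal{T}^A_{\mathbf{c}}$ via \eqref{eq:matrix_decomp}; Peetre's inequality moves both weights to $(M_1M_2)\big(\tfrac{\mathbf{a}+\mathbf{c}}{2}\big)$; Theorem \ref{thm:matrix_rep} \ref{ii_matrix_rep} gives the symbol in $S_0(M_1M_2)$; and membership in $\mathscr{M}^B(\R^{4d})$ follows as in your ``alternative'' from $\mathscr{S}(\R^{4d})=\bigcap_M S_0(M)$. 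Your direct duality argument for membership and your explicit identification of the reconstructed symbol with $\Phi\mathbin{\#^B}\Psi$, which the paper leaves implicit, are sound additions.

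One correction to that alternative: no tempered weight decays faster than every polynomial, since $M(\mathbf{X})\geq C^{-1}M(0)\langle\mathbf{X}\rangle^{-a}$. The argument should instead use $\Psi\in\mathscr{S}(\R^{4d})\subseteq S_0(M_1^{-1}M)$ for every tempered weight $M$ (e.g.\ $M=\langle\cdot\rangle^{-k}$ for all $k$), which gives $\Phi\mathbin{\#^B}\Psi\in\bigcap_M S_0(M)=\mathscr{S}(\R^{4d})$.
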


\begin{proof}
    Let $\Phi\in S_0(M_1)$ and $\Psi\in S_0(M_2)$ for two tempered weights $M_1,M_2$. By Lemma \ref{lem:invariance_smoothing}, the product $\mathfrak{Op}^A(\Phi)\mathfrak{Op}^A(\Psi)$ is a well-defined super operator. Now, if we can prove that $\big ({\mathfrak{Op}^A}\big )^{-1}\big(\mathfrak{Op}^A(\Phi)\mathfrak{Op}^A(\Psi)\big)\in S_0(M_1M_2)$, then it automatically follows  that $\Phi,\Psi\in\mathscr{M}^B(\R^{4d})$ by the following reasoning:

    We have $\mathscr{S}(\R^{4d})=\bigcap_{M}S_0(M)$, so if $\Phi\in S_0(M_1)$ and $\Psi\in \mathscr{S}(\R^{4d})$, then $\Psi\in S_0(M_1^{-1}M)$ for any tempered weight $M$, so
    \begin{equation*}
        \Phi\#^B\Psi=\big ({\mathfrak{Op}^A}\big )^{-1}\big(\mathfrak{Op}^A(\Phi)\mathfrak{Op}^A(\Psi)\big)\in S_0(M).
    \end{equation*}
    Since $M$ is arbitrary, $\Phi\#^B\Psi\in\bigcap_{M}S_0(M)=\mathscr{S}(\R^{4d})$, and similar reasoning would prove $\Psi\#^B\Phi\in\mathscr{S}(\R^{4d})$, whence $\Phi\in\mathscr{M}^B(\R^{4d})$.

    Thus we need to prove that $\big ({\mathfrak{Op}^A}\big )^{-1}\big(\mathfrak{Op}^A(\Phi)\mathfrak{Op}^A(\Psi)\big)\in S_0(M_1M_2)$ for $\Phi\in S_0(M_1)$ and $\Psi\in S_0(M_2)$. By Theorem \ref{thm:matrix_rep} \ref{ii_matrix_rep} it is enough to study the matrix elements of $\mathfrak{Op}^A(\Phi)\mathfrak{Op}^A(\Psi)$ in the frame $(\mathcal{T}_{\tilde{\alpha},\tilde{\beta}}^A)_{\tilde{\alpha},\tilde{\beta}\in\Z^{2d}}$. These are given by
    \begin{align*}
        \mathbb{M}_{(\tilde{\alpha},\tilde{\beta}),(\tilde{\gamma},\tilde{\delta})}^A[&\mathfrak{Op}^A(\Phi)\mathfrak{Op}^A(\Psi)]=\Big\langle \mathfrak{Op}^A(\Phi)\mathfrak{Op}^A(\Psi)\mathcal{T}_{\tilde{\gamma},\tilde{\delta}}^A,\overline{\mathcal{T}_{\tilde{\alpha},\tilde{\beta}}^A}\Big\rangle_{\mathcal{B}',\mathcal{B}}\\
        &=\sum_{(\tilde{\mu},\tilde{\nu})\in\Z^{4d}}\Big\langle \mathfrak{Op}^A(\Psi)\mathcal{T}_{\tilde{\gamma},\tilde{\delta}}^A,\overline{\mathcal{T}_{\tilde{\mu},\tilde{\nu}}^A}\Big\rangle_{\mathcal{B}',\mathcal{B}}\,\Big\langle \mathfrak{Op}^A(\Phi)\mathcal{T}_{\tilde{\mu},\tilde{\nu}}^A,\overline{\mathcal{T}_{\tilde{\alpha},\tilde{\beta}}^A}\Big\rangle_{\mathcal{B}',\mathcal{B}}
    \end{align*}
    using the decomposition \eqref{eq:matrix_decomp} on $\mathfrak{Op}^A(\Psi)\mathcal{T}_{\tilde{\gamma},\tilde{\delta}}^A$. Using Theorem \ref{thm:matrix_rep} \ref{i_matrix_rep}, Peetre's inequality, and \eqref{eq:peetre}, we obtain that
    \begin{equation*}
        \Big|\mathbb{M}_{(\tilde{\alpha},\tilde{\beta}),(\tilde{\gamma},\tilde{\delta})}^A[\mathfrak{Op}^A(\Phi)\mathfrak{Op}^A(\Psi)]\Big|\leq C \big(M_1M_2\big)\Bigg(\frac{(\tilde{\alpha},\tilde{\beta})+(\tilde{\gamma},\tilde{\beta})}{2}\Bigg)\langle(\tilde{\alpha},\tilde{\beta})-(\tilde{\gamma},\tilde{\beta})\rangle^{-n}
    \end{equation*}
    with $n\in\N_0$ arbitrary large and $C>0$ depending on $n$. This shows that $\mathfrak{Op}^A(\Phi)\mathfrak{Op}^A(\Psi)$ has a super symbol in $S_0(M_1M_2)$ by Theorem \ref{thm:matrix_rep} \ref{ii_matrix_rep}.

    The continuity assertion again follows from tracking the relevant estimates.
\end{proof}

\section{Boundedness}\label{sec:bound}

The second big topic, beside the calculi, is continuity and regularity of super operators. By Proposition \ref{prop:calc_super_semi}, $\mathfrak{Op}^A(\Phi)$ is extendable to the whole of $\mathcal{B}(\mathscr{S}(\R^d),\mathscr{S}'(\R^d))$ when $\Phi\in S_0(\infty)$. Since spaces of bounded operators and Schatten-classes between different Lebesgue spaces and magnetic Sobolev spaces are included in $\mathcal{B}(\mathscr{S}(\R^d),\mathscr{S}'(\R^d))$, a natural question is when $\mathfrak{Op}^A(\Phi)$ possesses some extra properties like boundedness when restricted to such spaces.

We will give three results. First we give criteria for the tempered weight $M$ ensuring boundedness of $\mathfrak{Op}^A(\Phi)$, $\Phi\in S_0(M)$, between bounded operators and trace-class operators. This will be based on simple techniques and we assume one can obtain more comprehensive results, meaning assumptions on $M$ leading to boundedness between other spaces of operators. Secondly, we consider super symbols that are limits of tensor products of Hörmander symbols, in a sense extending \cite[Lemma IV.6]{LeeLein2022}. The third type of results are mainly on Schatten-class properties of the super operators $\mathfrak{Op}^A(\Phi)$ when restricted to Hilbert-Schmidt operators between magnetic Sobolev spaces, in a way extending \cite[Theorem 5.3 and Corollary 5.8]{LeeLein2025}.

Our main focus will be on operators between the magnetic Sobolev spaces $H_A^s(\R^d)$ from \cite[Definition 3.4 and 3.10]{Iftimie2007}. Based on the discussion in \cite[Section 5.2]{LeeLein2025}, which in turn is based on results from \cite{Mantoiu2007,Iftimie2010}, we can give a simple description of these spaces: 

\begin{definition}\label{defhc1}
There exists a family of functions $(\chi_s)_{s\in\R}$ in $s_0(\infty)$ such that we can identify $H_A^s(\R^d)=\mathfrak{op}^A(\chi_{-s})L^2(\R^d)$, $s\in\R$, with norm
\begin{equation*}
    \Vert f\Vert_{H_A^s}\coloneq\Vert\mathfrak{op}^A(\chi_s) f\Vert_{L^2},\quad f\in H_A^s(\R^d),
\end{equation*}
and $H_A^{-s}(\R^d)$ is the anti-dual of $H_A^s(\R^d)$ with the duality bracket:
\begin{equation*}
    \langle g,f\rangle_{H_A^{-s},H_A^{s}}\coloneq\langle\mathfrak{op}^A(\chi_{-s})g,\mathfrak{op}^A(\chi_s)f\rangle_{L^2},\quad f\in H_A^s(\R^d),g\in H_A^{-s}(\R^d).
\end{equation*}
Moreover, $\chi_s\in s_0(m_0^s)$ with the tempered weight 
\begin{equation}\label{hc2}
m_0\colon \R^{2d}\ni X=(x,\xi)\mapsto\langle\xi\rangle
\end{equation}
and $\chi_s\mathbin{\star^B}\chi_{-s}=1$ with $\star^B$ being the magnetic Moyal product \cite{Mantoiu2004,Iftimie2007}.
\end{definition}
Other spaces one could consider are magnetic Sobolev spaces of $p$-integrable functions and other fractional magnetic Sobolev space \cite{Nguyen2018,Nguyen2020}, magnetic analogs of the Sobolev space $H(m)$ with $m$ a tempered weight on $\R^{2d}$, see \cite[Definition 1.5.2]{Nicola2010} or \cite[Section 8.3]{Zworski2012}, or any of these but having weighted Lebesgue spaces $L_w^p(\R^d)=L^p(\R^d,w(x)dx)$ with $w$ a tempered weight on $\R^d$.

\subsection{Boundedness on Operator Spaces I}

First we present a simple result on boundedness.

\begin{proposition}\label{prop:opspaces}
    Let $s_L,s_R,s_L',s_R'\in\R$. If $M$ is a tempered weight such that
    \begin{equation*}
        \int \mathrm{d}\mathbf{X}\,M(\mathbf{X})m_0^{s_L'-s_L}(X_L)m_0^{s_R-s_R'}(X_R)<\infty,
    \end{equation*}
    that is $M(m_0^{s_L'-s_L}\otimes m_0^{s_R-s_R'})\in L^1(\R^{4d})$, then $\mathfrak{Op}^A(\Phi)$ with $\Phi\in S_0(M)$ is an operator in
    \begin{equation*}
        \mathcal{B}\Big(\mathcal{B}\big(H_A^{s_R}(\R^d),H_A^{s_L}(\R^d)\big),\mathcal{B}_1\big(H_A^{s_R'}(\R^d),H_A^{s_L'}(\R^d)\big)\Big)
    \end{equation*}
    and $\mathfrak{Op}^A$ maps $S_0(M)$ continuously into this space of super operators.
\end{proposition}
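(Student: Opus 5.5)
First I would reduce to $L^2$ by conjugating with the Sobolev symbols. Write $\Lambda_s=\mathfrak{op}^A(\chi_s)$; by Definition \ref{defhc1}, $\Lambda_s$ is a unitary from $H^s_A(\R^d)$ onto $L^2(\R^d)$ with inverse $\Lambda_{-s}$. Then $T\in\mathcal{B}(H_A^{s_R},H_A^{s_L})$ if and only if $\tilde T=\Lambda_{s_L}T\Lambda_{-s_R}\in\mathcal{B}(L^2)$, with equal norms, and similarly for $\mathcal{B}_1(H_A^{s_R'},H_A^{s_L'})$ with $\tilde{S}=\Lambda_{s_L'}S\Lambda_{-s_R'}$. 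The conjugated super operator is
\begin{equation*}
\tilde T\mapsto \Lambda_{s_L'}\,\mathfrak{Op}^A(\Phi)\big(\Lambda_{-s_L}\tilde T\Lambda_{s_R}\big)\,\Lambda_{-s_R'}.
\end{equation*}
I would compose the outer and inner sandwiches with $\mathfrak{Op}^A(\Phi)$ via Proposition \ref{prop:calc_super_moyal}, writing $\Lambda_a\odot\Lambda_b=\mathfrak{Op}^A(\chi_a\otimes\chi_b)$ by \eqref{eq:odot_identity}. Here $\chi_a\otimes\chi_b\in S_0(m_0^a\otimes m_0^b)$, and tensor products of tempered weights are tempered. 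This yields a super symbol $\tilde\Phi\in S_0(\tilde M)$ with $\tilde M=M\,(m_0^{s_L'-s_L}\otimes m_0^{s_R-s_R'})$, depending continuously on $\Phi$. By hypothesis $\tilde M\in L^1(\R^{4d})$. So it suffices to prove that $L^1$ weights $\tilde M$ give $\mathfrak{Op}^A(S_0(\tilde M))\subseteq\mathcal{B}(\mathcal{B}(L^2),\mathcal{B}_1(L^2))$ continuously.

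For the $L^2$ case I would use the matrix representation \eqref{eq:supermatrix}. For $T\in\mathcal{B}(L^2)$,
\begin{equation*}
\mathfrak{Op}^A(\tilde\Phi)T=\sum \mathbb{M}^A_{(\tilde\alpha,\tilde\beta),(\tilde\gamma,\tilde\delta)}\,\langle T,\overline{\mathcal{T}^A_{\tilde\gamma,\tilde\delta}}\rangle_{\mathcal{B}',\mathcal{B}}\,\mathcal{T}^A_{\tilde\alpha,\tilde\beta}.
\end{equation*}
Each $\mathcal{T}^A_{\tilde\alpha,\tilde\beta}$ is rank one with trace norm $\|\mathcal{G}_{\tilde\alpha}^A\|\|\mathcal{G}_{\tilde\beta}^A\|=C$ uniformly. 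The coefficient satisfies $|\langle T,\overline{\mathcal{T}^A_{\tilde\gamma,\tilde\delta}}\rangle|=|\langle \mathcal{G}^A_{\tilde\gamma}, T\mathcal{G}^A_{\tilde\delta}\rangle|\le C\|T\|$, up to conjugation conventions. Hence
\begin{equation*}
\|\mathfrak{Op}^A(\tilde\Phi)T\|_{\mathcal{B}_1}\le C\|T\|\sum_{(\tilde\alpha,\tilde\beta),(\tilde\gamma,\tilde\delta)}|\mathbb{M}^A_{(\tilde\alpha,\tilde\beta),(\tilde\gamma,\tilde\delta)}|.
\end{equation*}
By Theorem \ref{thm:matrix_rep} \ref{i_matrix_rep}, this double sum is bounded by
\begin{equation*}
C\|\tilde\Phi\|_{S_0(\tilde M),k}\sum\langle \mathbf{u}-\mathbf{v}\rangle^{-n}\tilde M\big(\tfrac{\mathbf{u}+\mathbf{v}}{2}\big).
\end{equation*}
Substituting $\mathbf{w}=\mathbf{u}-\mathbf{v}$ (the centres $(\mathbf{u}+\mathbf{v})/2$ run over a half-lattice), the sum reduces to $\sum_{\mathbf{c}\in\frac12\Z^{8d}}\tilde M(\mathbf{c})$ times a convergent sum. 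The sum over the half-lattice is controlled by $\int\tilde M$ through Peetre's inequality \eqref{eq:peetre}, since $\tilde M(\mathbf{c})\le C\tilde M(\mathbf{X})$ for $|\mathbf{X}-\mathbf{c}|\le1$. Convergence of the series in $\mathcal{B}_1$ is absolute. That the series agrees with the extension of $\mathfrak{Op}^A(\tilde\Phi)$ to $\mathcal{B}(\mathscr{S},\mathscr{S}')\supseteq\mathcal{B}(L^2)$ from Section \ref{sec:calc_semi} follows from the expansion \eqref{eq:matrix_decomp} of $T$ in $\mathcal{B}(\mathscr{S},\mathscr{S}')$ and continuity.

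Continuity of $\mathfrak{Op}^A$ from $S_0(M)$ into the super-operator space then follows by chaining three facts: the bound above, continuity of $\#^B$ in Proposition \ref{prop:calc_super_moyal}, and fixed $\chi_s$.

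The main obstacle I expect is justifying the conjugation identity on all of $\mathcal{B}(H^{s_R}_A,H^{s_L}_A)$, not just on smoothing operators. That is, one needs $\Lambda_{s_L'}\mathfrak{Op}^A(\Phi)(\Lambda_{-s_L}\,\cdot\,\Lambda_{s_R})\Lambda_{-s_R'}=\mathfrak{Op}^A(\tilde\Phi)$ as maps on $\mathcal{B}(\mathscr{S},\mathscr{S}')$. I would first verify it on smoothing operators using Proposition \ref{prop:calc_super_moyal}. Then I would extend by the weak-$*$ continuity in $\mathcal{B}(\mathscr{S},\mathscr{S}')$ furnished by the formal transpose construction of Proposition \ref{prop:calc_super_semi}, together with density of smoothing operators there.
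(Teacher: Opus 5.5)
Your proposal is correct and follows the paper's proof essentially step by step. You conjugate by the Sobolev sandwiches and use Proposition \ref{prop:calc_super_moyal} to reduce to an $L^1$ weight on $L^2$; you then use the frame matrix expansion, with uniformly trace-norm-bounded rank-one terms and an $\ell^1$ matrix coming from Theorem \ref{thm:matrix_rep}.

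Your extra care on two points fills in what the paper handles only implicitly or via Lemma \ref{lem:app_lebesgue}. These are extending the conjugation identity from smoothing operators by weak-$*$ density, and passing from $\int\tilde M<\infty$ to summability over the centres, which lie in $\tfrac12\Z^{4d}$ (not $\tfrac12\Z^{8d}$).
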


\begin{proof}
    The magnetic pseudo-differential super operator
    $$\mathfrak{op}^A(\chi_{-s_L})\odot\mathfrak{op}^A(\chi_{s_R})=\mathfrak{Op}^A(\chi_{-s_L}\otimes \chi_{s_R})$$
    is a linear homeomorphism of $\mathcal{B}(L^2(\R^d))$ into $\mathcal{B}\big(H_A^{s_R}(\R^d),H_A^{s_L}(\R^d)\big)$ and of $\mathcal{B}_1(L^2(\R^d))$ into $\mathcal{B}_1\big(H_A^{s_R}(\R^d),H_A^{s_L}(\R^d)\big)$, and its inverse is given by $\mathfrak{op}^A(\chi_{s_L})\odot\mathfrak{op}^A(\chi_{-s_R})$. So
    \begin{equation*}
        \mathfrak{Op}^A(\Phi)\in\mathcal{B}\Big(\mathcal{B}\big(H_A^{s_R}(\R^d),H_A^{s_L}(\R^d)\big),\mathcal{B}_1\big(H_A^{s_R'}(\R^d),H_A^{s_L'}(\R^d)\big)\Big)
    \end{equation*}
    if and only if
    \begin{equation*}
        \Big(\mathfrak{op}^A(\chi_{s_L'})\odot\mathfrak{op}^A(\chi_{-s_R'})\Big)\mathfrak{Op}^A(\Phi)\Big(\mathfrak{op}^A(\chi_{-s_L})\odot\mathfrak{op}^A(\chi_{s_R})\Big)\in\mathcal{B}\Big(\mathcal{B}\big(L^2(\R^d)\big),\mathcal{B}_1\big(L^2(\R^d)\big)\Big).
    \end{equation*}
    But if $\Phi\in S_0(M)$ with $M(m_0^{s_L'-s_L}\otimes m_0^{s_R-s_R'})\in L^1(\R^{4d})$, then
    \begin{align*}
        \Big(\mathfrak{op}^A(\chi_{s_L'})&\odot\mathfrak{op}^A(\chi_{-s_R'})\Big)\mathfrak{Op}^A(\Phi)\Big(\mathfrak{op}^A(\chi_{-s_L})\odot\mathfrak{op}^A(\chi_{s_R})\Big)\\
        &=\mathfrak{Op}^A((\chi_{s_L'}\otimes \chi_{-s_R'})\mathbin{\#^B}\Phi\mathbin{\#^B}(\chi_{-s_L}\otimes\chi_{s_R}))=\mathfrak{Op}^A(\tilde{\Phi})
    \end{align*}
    with $\tilde{\Phi}\in S_0(\tilde{M})$ where $\tilde{M}\in L^1(\R^{4d})$. Hence the proposition is proven if we can show that $M\in L^1(\R^{4d})$ implies that:
    \begin{equation*}
        \mathfrak{Op}^A(S_0(M))\hookrightarrow\mathcal{B}\Big(\mathcal{B}\big(L^2(\R^d)\big),\mathcal{B}_1\big(L^2(\R^d)\big)\Big)
    \end{equation*}
    
    Let $\Phi\in S_0(M)$ and $S\in\mathcal{B}(L^2(\R^d))$. Using Lemma \ref{lem:convergence_tempered} we expand $\mathfrak{Op}^A(\Phi)S$ similarly to \eqref{eq:supermatrix}, getting:
    \begin{equation}\label{eq:supermatrix_imposter}
        \begin{aligned}
            \mathfrak{Op}^A(\Phi)S&=\sum_{(\tilde{\alpha},\tilde{\beta}),(\tilde{\gamma},\tilde{\delta})\in\Z^{4d}}\mathbb{M}_{(\tilde{\alpha},\tilde{\beta}),(\tilde{\gamma},\tilde{\delta})}^A[ \mathfrak{Op}^A(\Phi)]\big(\mathcal{T}_{\tilde{\alpha},\tilde{\beta}}^A\bowtie\overline{\mathcal{T}_{\tilde{\gamma},\tilde{\delta}}^A}\big) S
        \end{aligned}
    \end{equation}
    Here
    \begin{align*}
        \big\Vert\big(\mathcal{T}_{\tilde{\alpha},\tilde{\beta}}^A\bowtie\overline{\mathcal{T}_{\tilde{\gamma},\tilde{\delta}}^A}\big) S\big\Vert_{\mathcal{B}_1(L^2)}=\Vert\mathcal{T}_{\tilde{\alpha},\tilde{\beta}}^A\Vert_{\mathcal{B}_1(L^2)}|\langle S\mathcal{G}_{\tilde{\delta}}^A,\overline{\mathcal{G}_{\tilde{\gamma}}^A}\rangle_{L^2}|\leq\Vert\mathcal{G}_0^A\Vert_{L^2}^4\Vert S\Vert_{\mathcal{B}(L^2)}
    \end{align*}
    and $\mathbb{M}^A[\mathfrak{Op}^A(\Phi)]\in\ell^1(\Z^{8d})$ by the assumption $M\in L^1(\R^{4d})$ and Theorem \ref{thm:matrix_rep}. Thus the sum in \eqref{eq:supermatrix_imposter} is absolutely convergent in $\mathcal{B}_1(L^2(\R^d))$, and we have
    \begin{equation*}
        \Vert\mathfrak{Op}^A(\Phi)S\Vert_{\mathcal{B}_1(L^2(\R^d))}\leq C\Vert \Phi\Vert_{S_0(M),k}\Vert S\Vert_{\mathcal{B}(L^2(\R^d))}
    \end{equation*}
    with $C>0,k\in\N_0$.
\end{proof}

\begin{remark}\label{rem:app1}
    Note that we needed the condition $\mathbb{M}^A[\mathfrak{Op}^A(\Phi)]\in\ell^1(\Z^{8d})$ and assumed the stronger condition $M\in L^1(\R^{4d})$. But this is equivalent for tempered weights up to measurability, see Lemma \ref{lem:app_lebesgue}, which can then be fixed by using Lemma \ref{lem:app_smoorth_tempered}.
\end{remark}

\begin{remark}
    The above crude method of proving boundedness extends in a trivial manner to other spaces, e.g.:
    \begin{itemize}
        \item Suppose $q_L,q_R,q_L',q_R'\in[1,\infty)$. Then, if $M\in L^1(\R^{4d})$ is a tempered weight, then $\mathfrak{Op}^A(\Phi)$ is an operator in $\mathcal{B}\big(\mathcal{B}(L^{q_R}(\R^d),L^{q_L}(\R^d)),\mathcal{B}(L^{q_R'}(\R^d),L^{q_L'}(\R^d))\big)$ and $\mathfrak{Op}^A$ maps $S_0(M)$ continuously into $\mathcal{B}\big(\mathcal{B}(L^{q_R}(\R^d),L^{q_L}(\R^d)),\mathcal{B}(L^{q_R'}(\R^d),L^{q_L'}(\R^d))\big)$.
        \item For Hilbert spaces $V,W$ we have the continuous injections:
        \begin{equation*}
            \mathcal{B}_q(V,W)\hookrightarrow\mathcal{B}_{q'}(V,W)\hookrightarrow\mathcal{B}(V,W)
        \end{equation*}
        with $0<q<q'\leq\infty$. Thus the above result also shows that, as an example, if $M\in L^1(\R^{4d})$, then $\mathfrak{Op}^A(\Phi)$ lies in $\mathcal{B}\big(\mathcal{B}_q\big(H_A^s(\R^d)\big)\big)$ for any $q\in(0,\infty]$ and $s\in\R$.
    \end{itemize}
\end{remark}

\subsection{Boundedness on Operator Spaces II}

Next we extend \cite[Lemma IV.6]{LeeLein2022} by considering tensor products of Hörmander classes.

We take two tempered weights $m_1,m_2$ on $\R^{2d}$ and consider $s_0(m_1)\mathbin{\otimes} s_0(m_2)$. We equip $s_0(m_1)\mathbin{\otimes} s_0(m_2)$ with the projective topology \cite{Treves2006}, i.e. the topology induced by the semi-norms $\Vert\cdot\Vert_{s_0(m_1),n}\otimes\Vert\cdot\Vert_{s_0(m_2),k}$ for $n,k\in\N_0$, and denote the completion of $s_0(m_1)\mathbin{\otimes} s_0(m_2)$ by $s_0(m_1)\mathbin{\widehat{\otimes}} s_0(m_2)$. Note that $s_0(m_1)\mathbin{\widehat{\otimes}} s_0(m_2)\hookrightarrow S_0(m_1\otimes m_2)$.

\begin{proposition}\label{prop:continuoustensor}
    Let $s_L,s_R,s_L',s_R'\in\R$ and let $m_0$ be as in \eqref{hc2}. 
    Then for every super symbol $\Phi\in s_0(m_0^{s_L-s_L'})\mathbin{\widehat{\otimes}} s_0(m_0^{s_R'-s_R})$, the magnetic pseudo-differential super operator $\mathfrak{Op}^A(\Phi)$ is a bounded operator from $\mathcal{B}(H_A^{s_R}(\R^d),H_A^{s_L}(\R^d))$ to $\mathcal{B}(H_A^{s_R'}(\R^d),H_A^{s_L'}(\R^d))$, and from $\mathcal{B}_p(H_A^{s_R}(\R^d),H_A^{s_L}(\R^d))$ to $\mathcal{B}_p(H_A^{s_R'}(\R^d),H_A^{s_L'}(\R^d))$ where $p\in[1,\infty]$.
\end{proposition}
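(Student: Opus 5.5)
The plan is to treat elementary tensors first and then pass to the completion by continuity.

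\emph{Elementary tensors.} Let $\Phi=\phi\otimes\psi$ with $\phi\in s_0(m_0^{s_L-s_L'})$ and $\psi\in s_0(m_0^{s_R'-s_R})$. By \eqref{eq:odot_identity}, $\mathfrak{Op}^A(\phi\otimes\psi)T=\mathfrak{op}^A(\phi)T\mathfrak{op}^A(\psi)$. It therefore suffices to show two bounds:
\begin{equation*}
\mathfrak{op}^A(\phi)\in\mathcal{B}\big(H_A^{s_L}(\R^d),H_A^{s_L'}(\R^d)\big),\qquad \mathfrak{op}^A(\psi)\in\mathcal{B}\big(H_A^{s_R'}(\R^d),H_A^{s_R}(\R^d)\big),
\end{equation*}
with operator norms controlled by finitely many semi-norms of $\phi$ and $\psi$.

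To obtain them, conjugate as in the proof of Proposition \ref{prop:opspaces}. We have $\mathfrak{op}^A(\chi_{s_L'})\mathfrak{op}^A(\phi)\mathfrak{op}^A(\chi_{-s_L})=\mathfrak{op}^A(\chi_{s_L'}\star^B\phi\star^B\chi_{-s_L})$. The symbol on the right lies in $s_0(m_0^{s_L'}m_0^{s_L-s_L'}m_0^{-s_L})=s_0(1)$, by the product result for $s_0(m)$ in \cite{Thorn2026}. This product is continuous, so the map $\phi\mapsto\chi_{s_L'}\star^B\phi\star^B\chi_{-s_L}$ is continuous $s_0(m_0^{s_L-s_L'})\to s_0(1)$. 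The magnetic Calderón--Vaillancourt theorem from \cite{Thorn2026} (or \cite{Iftimie2007}) then gives $\Vert\mathfrak{op}^A(\phi)\Vert_{\mathcal{B}(H_A^{s_L},H_A^{s_L'})}\le C\Vert\phi\Vert_{s_0(m_0^{s_L-s_L'}),n}$. The bound for $\psi$ is identical with the roles of the indices reversed.

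With these bounds, the estimate for $T\in\mathcal{B}_p(H_A^{s_R},H_A^{s_L})$ and $p\in[1,\infty]$ follows from the ideal property of Schatten classes under composition with bounded operators:
\begin{equation*}
\Vert\mathfrak{op}^A(\phi)T\mathfrak{op}^A(\psi)\Vert_{\mathcal{B}_p(H_A^{s_R'},H_A^{s_L'})}\le C\Vert\phi\Vert_{s_0,n}\Vert\psi\Vert_{s_0,k}\Vert T\Vert_{\mathcal{B}_p(H_A^{s_R},H_A^{s_L})}.
\end{equation*}
The same estimate holds in operator norm, which is the case $p=\infty$ with $\mathcal{B}$ in place of $\mathcal{B}_\infty$.

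\emph{Extension to the completion.} The bilinear map $(\phi,\psi)\mapsto\mathfrak{Op}^A(\phi\otimes\psi)$ into the Banach space $\mathcal{B}(\mathcal{B}_p(\dots),\mathcal{B}_p(\dots))$ is bounded by a product of continuous semi-norms. By the universal property of the projective tensor product, it therefore induces a continuous linear map on $s_0(m_0^{s_L-s_L'})\otimes s_0(m_0^{s_R'-s_R})$, and this map extends uniquely to the completion $\widehat{\otimes}$.

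\emph{Main obstacle.} The extension must agree with $\mathfrak{Op}^A(\Phi)$, which is already defined because $\widehat{\otimes}\hookrightarrow S_0(m_0^{s_L-s_L'}\otimes m_0^{s_R'-s_R})$ and Proposition \ref{prop:calc_super_semi} applies. I would check this by duality. Take $\Phi_j\to\Phi$ in $\widehat{\otimes}$ with each $\Phi_j$ a finite sum of elementary tensors. Then $\Phi_j\to\Phi$ in $S_0(M)$, and since $\mathfrak{Op}^A$ is continuous from $S_0(M)$ into super operators on $\mathcal{B}(\mathscr{S}',\mathscr{S})$ (Lemma \ref{lem:invariance_smoothing}, and its dual extension), we have $\mathfrak{Op}^A(\Phi_j)T\to\mathfrak{Op}^A(\Phi)T$ weakly, for instance for smoothing $T$ paired against smoothing test operators.

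By construction, $\mathfrak{Op}^A(\Phi_j)$ also converges in norm to the extended operator, so the two limits coincide on smoothing operators. For $p<\infty$, smoothing operators are dense in $\mathcal{B}_p$, which identifies the extension with $\mathfrak{Op}^A(\Phi)$ there. For $p=\infty$ in the full space $\mathcal{B}$, density fails, so I would argue pointwise for each fixed $T$ instead. For $T$ bounded, $\mathfrak{Op}^A(\Phi_j)T\to\mathfrak{Op}^A(\Phi)T$ in $\mathcal{B}(\mathscr{S},\mathscr{S}')$ by the continuity in Proposition \ref{prop:calc_super_semi}. The norm limit of $\mathfrak{Op}^A(\Phi_j)T$ is then a bounded operator that agrees with this distributional limit, and it inherits the uniform norm bound.
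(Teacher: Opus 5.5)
Your proposal is correct and follows essentially the paper's route: the sandwich formula for elementary tensors, Sobolev boundedness of each factor, the Schatten ideal property, and passage to the completion. Where the paper uses Tr\`eves' representation $\Phi=\sum_n\lambda_n\phi_n\otimes\psi_n$ and cites \cite[Proposition 3.14]{Iftimie2007}, you use the universal property of $\widehat{\otimes}$ and a conjugation argument. Your identification of the limit with $\mathfrak{Op}^A(\Phi)$ is also more careful than the paper's one-line appeal to convergence on smoothing operators.

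One correction. For non-smooth $T\in\mathcal{B}(H_A^{s_R}(\R^d),H_A^{s_L}(\R^d))$, Proposition \ref{prop:calc_super_semi} does not give the continuity of $\Phi\mapsto\mathfrak{Op}^A(\Phi)T$ you need, since its bilinear statement requires $\psi\in s_0(m)$. The continuity comes instead from the transpose construction: the kernel $\overline{K^A\overline{\Phi}}$ depends continuously on $\Phi\in S_0(M)$ by Lemma \ref{lem:invariance_smoothing}. This gives weak* convergence in $\mathcal{B}(\mathscr{S}(\R^d),\mathscr{S}'(\R^d))$, which is enough for your argument.
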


\begin{proof}
    We only prove the proposition for the Schatten-classes $\mathcal{B}_p(H_A^{s_R}(\R^d),H_A^{s_L}(\R^d))$ and $\mathcal{B}_p(H_A^{s_R'}(\R^d),H_A^{s_L'}(\R^d))$, but the proof of boundedness between spaces of bounded operators is the same. 
    
    By \cite[Theorem 45.1]{Treves2006} we may write $\Phi$ as an absolutely convergent sum $\sum_{n=0}^\infty\lambda_n \phi_n\otimes\psi_n$, where $(\lambda_n)_{n\in\N_0}$ are absolutely summable complex numbers, and $(\phi_n)_{n\in\N_0}$ and $(\psi_n)_{n\in\N_0}$ are sequences in $s_0(m_0^{s_L-s_L'})$ and $s_0(m_0^{s_R'-s_R})$ respectively, with both sequences converging to zero in these spaces. Let $\Phi_N\coloneq\sum_{n=0}^N\lambda_n\phi_n\otimes\psi_n$ for $N\in\N_0$. By \cite[Proposition 3.14]{Iftimie2007} we see that $\mathfrak{op}^A(\phi_n)\in \mathcal{B}(H_A^{s_L}(\R^d),H_A^{s_L'}(\R^d))$ and $\mathfrak{op}^A(\psi_n)\in\mathcal{B}(H_A^{s_R'}(\R^d),H_A^{s_R}(\R^d))$ for all $n\in\N_0$, whence
    \begin{equation*}
        \mathfrak{Op}^A(\Phi_N)S=\sum_{n=0}^N\lambda_n\mathfrak{op}^A(\phi_n)S\mathfrak{op}^A(\psi_n)
    \end{equation*}
    is in $\mathcal{B}_p(H_A^{s_R'}(\R^d),H_A^{s_L'}(\R^d))$ whenever $S\in\mathcal{B}_p(H_A^{s_R}(\R^d),H_A^{s_L}(\R^d))$. Furthermore, it is easy to see that the super operators $\mathfrak{Op}^A(\Phi_N)$ are bounded and they form a Cauchy sequence in 
    \begin{equation*}
        \mathcal{B}\Big(\mathcal{B}_p\big(H_A^{s_R}(\R^d),H_A^{s_L}(\R^d)\big),\mathcal{B}_p\big(H_A^{s_R'}(\R^d),H_A^{s_L'}(\R^d)\big)\Big)
    \end{equation*}
    since for some $k\in\N_0,C>0$ we have
    \begin{align*}
        \Vert \mathfrak{Op}^A(\Phi_{N_1})&-\mathfrak{Op}^A(\Phi_{N_2})\Vert_{\mathcal{B}\big(\mathcal{B}_p(H_A^{s_R},H_A^{s_L}),\mathcal{B}_p(H_A^{s_R'},H_A^{s_L'})\big)}\\
        &\leq\sum_{n=\min\{N_1,N_2\}+1}^\infty|\lambda_n|\Vert\mathfrak{op}^A(\phi_n)\Vert_{\mathcal{B}(H_A^{s_L},H_A^{s_L'})}\Vert\mathfrak{op}^A(\psi_n)\Vert_{\mathcal{B}(H_A^{s_R'},H_A^{s_R})}\\
        &\leq C\sum_{n=\min\{N_1,N_2\}+1}^\infty|\lambda_n|\Vert \phi_n\Vert_{s_0(1),k}\Vert \psi_n\Vert_{s_0(1),k}
    \end{align*}
    which goes to zero when $N_1,N_2\rightarrow\infty$. The super operators $\mathfrak{Op}^A(\Phi_N)$ already converge to $\mathfrak{Op}^A(\Phi)$ in the space
    \begin{equation*}
        \mathcal{B}\Big(\mathcal{B}\big(\mathscr{S}'(\R^d),\mathscr{S}(\R^d)\big),\mathcal{B}\big(\mathscr{S}(\R^d),\mathscr{S}'(\R^d)\big)\Big),
    \end{equation*}
    so we conclude that $\mathfrak{Op}^A(\Phi)$ is a bounded super operator from $\mathcal{B}_p(H_A^{s_R}(\R^d),H_A^{s_L}(\R^d))$ to $\mathcal{B}_p(H_A^{s_R'}(\R^d),H_A^{s_L'}(\R^d))$.
\end{proof}

\begin{remark}
    Proposition \ref{prop:continuoustensor} could be generalized to boundedness between Schatten-classes using \cite[Theorem 4.1, 4.2, and 4.3, and Remark 4.4]{Thorn2026}. E.g. the super quantization of a super symbol $\Phi\in s_0(m)\mathbin{\widehat{\otimes}}s_0(1)$ with $m\in L^p(\R^{2d})$ is bounded from $\mathcal{B}(L^2(\R^d))$ to $\mathcal{B}_p(L^2(\R^d))$ for every $p\in[1,\infty)$.
\end{remark}

\subsection{Schatten-class Properties on Hilbert-Schmidt Operators}

We now extend \cite[Theorem 5.3 and Corollary 5.8]{LeeLein2025} to Schatten-class properties on Hilbert-Schmidt operators, similarly to how the magnetic Calderón-Vaillancourt theorem \cite[Theorem 3.1]{Iftimie2007} was extended in \cite{Thorn2026}.

\begin{theorem}\label{thm:caldvail}
    Let $s_L,s_R,s_L',s_R'\in\R$. If $M$ is a tempered weight in $\R^{4d}$ satisfying one of the following assumptions:
    \begin{enumerate}
        \item $M(m_0^{s_L'-s_L}\otimes m_0^{s_R-s_R'})\in L^\infty(\R^{4d})$
        \item $\sup_{\Vert\mathbf{X}\Vert>R}M(m_0^{s_L'-s_L}\otimes m_0^{s_R-s_R'})(\mathbf{X})\stackrel{R\rightarrow\infty}{\rightarrow}0$
        \item $M(m_0^{s_L'-s_L}\otimes m_0^{s_R-s_R'})\in L^p(\R^{4d})$ with $p\in(0,\infty)$
    \end{enumerate}
    Then for $\Phi\in S_0(M)$ we have correspondingly:
    \begin{enumerate}
        \item $\mathfrak{Op}^A(\Phi)\in \mathcal{B}\Big(\mathcal{B}_2\big(H_A^{s_R}(\R^d),H_A^{s_L}(\R^d)\big),\mathcal{B}_2\big(H_A^{s_R'}(\R^d),H_A^{s_L'}(\R^d)\big)\Big)$
        \item $\mathfrak{Op}^A(\Phi)\in \mathcal{B}_\infty\Big(\mathcal{B}_2\big(H_A^{s_R}(\R^d),H_A^{s_L}(\R^d)\big),\mathcal{B}_2\big(H_A^{s_R'}(\R^d),H_A^{s_L'}(\R^d)\big)\Big)$
        \item $\mathfrak{Op}^A(\Phi)\in \mathcal{B}_p\Big(\mathcal{B}_2\big(H_A^{s_R}(\R^d),H_A^{s_L}(\R^d)\big),\mathcal{B}_2\big(H_A^{s_R'}(\R^d),H_A^{s_L'}(\R^d)\big)\Big)$
    \end{enumerate}
    In every case, the magnetic super Weyl quantization $\mathfrak{Op}^A$ maps $S_0(M)$ continuously into the super operator space.
\end{theorem}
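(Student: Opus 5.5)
First, I reduce to the case $s_L=s_R=s_L'=s_R'=0$, exactly as in the proof of Proposition \ref{prop:opspaces}. The super operators $\mathfrak{op}^A(\chi_{-s_L})\odot\mathfrak{op}^A(\chi_{s_R})$ and $\mathfrak{op}^A(\chi_{s_L'})\odot\mathfrak{op}^A(\chi_{-s_R'})$ are unitary identifications. The first maps $\mathcal{B}_2(L^2(\R^d))$ onto $\mathcal{B}_2(H_A^{s_R},H_A^{s_L})$, and the second maps $\mathcal{B}_2(H_A^{s_R'},H_A^{s_L'})$ onto $\mathcal{B}_2(L^2(\R^d))$. This is because, by Definition \ref{defhc1}, $\mathfrak{op}^A(\chi_s)\colon H_A^s\to L^2$ is unitary, and Hilbert--Schmidt norms are preserved under composition with unitaries. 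Membership in $\mathcal{B}$, $\mathcal{B}_\infty$ or $\mathcal{B}_p$ is invariant under composition with unitaries. By Proposition \ref{prop:calc_super_moyal}, the conjugated super operator is $\mathfrak{Op}^A(\tilde\Phi)$ with
\[
\tilde\Phi=(\chi_{s_L'}\otimes\chi_{-s_R'})\mathbin{\#^B}\Phi\mathbin{\#^B}(\chi_{-s_L}\otimes\chi_{s_R})\in S_0(\tilde M),\qquad \tilde M=M\,(m_0^{s_L'-s_L}\otimes m_0^{s_R-s_R'}),
\]
and this map is continuous in $\Phi$. So it suffices to show the following: for $\tilde M$ in $L^\infty$, vanishing at infinity, or in $L^p$, the map $\mathfrak{Op}^A$ sends $S_0(\tilde M)$ continuously into $\mathcal{B}$, $\mathcal{B}_\infty$ or $\mathcal{B}_p$ of $\mathcal{B}_2(L^2(\R^d))$, respectively.

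Next I pass to matrices. By Lemma \ref{lem:convergence_l2}, the family $(\mathcal{T}^A_{\tilde\alpha,\tilde\beta})$ is a Parseval frame of the Hilbert space $\mathcal{H}=\mathcal{B}_2(L^2(\R^d))$. Let $U\colon\mathcal{H}\to\ell^2(\Z^{4d})$ be the analysis operator, which is an isometry, with $U^*$ the synthesis operator. Then \eqref{eq:supermatrix} says $\mathfrak{Op}^A(\Phi)=U^*\,\mathbb{M}^A[\mathfrak{Op}^A(\Phi)]\,U$ on smoothing operators, and hence on $\mathcal{H}$ once the matrix is bounded. Since $\|U\|=\|U^*\|=1$ and $\mathcal{B}_p$, $\mathcal{B}_\infty$ are ideals, it suffices to show that the matrix $\mathbb{M}=\mathbb{M}^A[\mathfrak{Op}^A(\Phi)]$ defines an operator in the corresponding class on $\ell^2(\Z^{4d})$, with norm bounded by a seminorm of $\Phi$. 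Theorem \ref{thm:matrix_rep}\ref{i_matrix_rep} gives
\[
|\mathbb{M}_{\mathbf{a},\mathbf{c}}|\le C\|\Phi\|_{S_0(M),k}\langle\mathbf{a}-\mathbf{c}\rangle^{-n}M\Bigl(\tfrac{\mathbf{a}+\mathbf{c}}{2}\Bigr),
\]
where $\mathbf{a},\mathbf{c}\in\Z^{4d}$ and $n$ is arbitrary.

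The three cases then go as follows.

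\textbf{Case (1).} Use Peetre's inequality \eqref{eq:peetre} to bound $M((\mathbf a+\mathbf c)/2)\le C\|M\|_\infty\langle\mathbf a-\mathbf c\rangle^{a}$. Schur's test then gives boundedness, since the sums over rows and columns of $\langle\mathbf a-\mathbf c\rangle^{-n+a}$ are finite for $n$ large. Here $\|M\|_\infty$ should be read as $\sup M$ on the lattice points, which Lemma \ref{lem:app_lebesgue} controls.

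\textbf{Case (2).} Truncate: set $\mathbb{M}^{(R)}_{\mathbf a,\mathbf c}=\mathbb{M}_{\mathbf a,\mathbf c}1_{|\mathbf a|,|\mathbf c|\le R}$, which is finite rank. In the remainder, either $|\mathbf a-\mathbf c|$ is large, or both $\mathbf a$ and $\mathbf c$ are far out, in which case $M((\mathbf a+\mathbf c)/2)$ is small. Splitting $\langle\mathbf a-\mathbf c\rangle^{-n}=\langle\cdot\rangle^{-n/2}\langle\cdot\rangle^{-n/2}$, Schur's test shows that the remainder tends to $0$ in norm. Hence $\mathbb{M}$ is compact.

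\textbf{Case (3).} This is the main obstacle. For $p\le 1$ I use the elementary bound $\|\mathbb{M}\|_{\mathcal{B}_p}^p\le\sum_{\mathbf a,\mathbf c}|\mathbb{M}_{\mathbf a,\mathbf c}|^p$, valid for $0<p\le1$ because each matrix unit is rank one with $p$-quasi-norm equal to its modulus. Combining this with Peetre's inequality and the substitution $\mathbf c=\mathbf a-\mathbf e$ gives
\[
\sum_{\mathbf a,\mathbf e}\langle\mathbf e\rangle^{-np+ap}M(\mathbf a)^p<\infty,
\]
since $\sum_{\mathbf a}M(\mathbf a)^p<\infty$ is equivalent to $M\in L^p$ by Lemma \ref{lem:app_lebesgue}. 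For $p\ge 2$ I would decompose by diagonals: $\mathbb{M}=\sum_{\mathbf e}D_{\mathbf e}S_{\mathbf e}$, where $S_{\mathbf e}$ is the shift and $D_{\mathbf e}$ is the diagonal with entries $\mathbb{M}_{\mathbf a,\mathbf a-\mathbf e}$. Then $\|D_{\mathbf e}S_{\mathbf e}\|_{\mathcal{B}_p}=\|(\mathbb{M}_{\mathbf a,\mathbf a-\mathbf e})_{\mathbf a}\|_{\ell^p}\le C\langle\mathbf e\rangle^{-n+a}\|M\|_{\ell^p}$, and summing over $\mathbf e$ with the triangle inequality works for $p\ge1$. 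This diagonal argument in fact covers all $p\in[1,\infty)$, and the case $p<1$ is handled by the $p$-triangle inequality $\|\sum T_j\|_p^p\le\sum\|T_j\|_p^p$ applied to the same decomposition. So a single diagonal-decomposition argument handles every $p\in(0,\infty)$, with the $p<1$ case using the $p$-triangle inequality.

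Continuity in $\Phi$ follows in all three cases because every bound is linear in $\|\Phi\|_{S_0(M),k}$.
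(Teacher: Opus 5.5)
Your proof is correct. The reduction to $\mathcal{B}_2(L^2(\R^d))$ and the Schur-test argument for case 1 are essentially the paper's; the paper applies the Schur test to the Gram matrix $\mathbb{M}^*\mathbb{M}$ rather than to $\mathbb{M}$, which makes no difference.

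The routes differ in cases 2 and 3.

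For compactness, the paper factors the middle matrix into two pieces. One is the Schur-bounded matrix $M(\mathbf a)^{-1}\mathbb{M}_{\mathbf a,\mathbf c}$; the other is the diagonal multiplication by $M(\mathbf a)\to 0$. You instead truncate to finite rank and control the remainder by Schur's test. Both work.

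For the Schatten classes the arguments genuinely diverge. The paper first reduces $p>2$ to $p\le 2$ via $\Vert T\Vert_{\mathcal{B}_p}=\Vert T^*T\Vert_{\mathcal{B}_{p/2}}^{1/2}$. This needs the super Moyal calculus (Proposition \ref{prop:calc_super_moyal}) together with the Lee--Lein adjoint formula, to see that $\mathfrak{Op}^A(\Phi)^*\mathfrak{Op}^A(\Phi)$ has a symbol in $S_0(M^2)$ with $M^2\in L^{p/2}$. It then invokes Zhu's column-norm criterion $\Vert T\Vert_{\mathcal{B}_p}^p\le\sum_{\mathbf c}\Vert T\mathcal{T}^A_{\mathbf c}\Vert^p$ for $p\le 2$.

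Your weighted-shift decomposition $\mathbb{M}=\sum_{\mathbf e}D_{\mathbf e}S_{\mathbf e}$, with $\Vert D_{\mathbf e}S_{\mathbf e}\Vert_{\mathcal{B}_p}=\Vert(\mathbb{M}_{\mathbf a,\mathbf a-\mathbf e})_{\mathbf a}\Vert_{\ell^p}$ and the ($p$-)triangle inequality, works entirely at the matrix level. It covers every $p\in(0,\infty)$ at once. It uses only the entrywise bound of Theorem \ref{thm:matrix_rep} \ref{i_matrix_rep}, Peetre's inequality and Lemma \ref{lem:app_lebesgue}. It gives a bound linear in $\Vert\Phi\Vert_{S_0(M),k}$ without iterating products or first proving compactness. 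The same decomposition would also treat cases 1 and 2 uniformly, with $\ell^\infty$ and $c_0$ in place of $\ell^p$.

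In exchange, the paper's argument needs only column $\ell^2$-norms in the range $p\le 2$ and exhibits the algebra structure of $S_0(\infty)$. Its cost is reliance on external results: Zhu's theorem, applied to a Parseval frame rather than an orthonormal basis, and the adjoint formula.
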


\begin{remark}
    Note that the first case, i.e. when $M(m_0^{s_L'-s_L}\otimes m_0^{s_R-s_R'})\in L^\infty(\R^{4d})$, is exactly \cite[Corollary 5.8]{LeeLein2025}, which is restated here for completeness.
\end{remark}

\begin{proof}
    We may reduce the statement to that of operator spaces on $L^2(\R^d)$ using the same methods as in the proof of Proposition \ref{prop:opspaces}. Also note that the smoothing operators $\mathcal{B}(\mathscr{S}'(\R^d),\mathscr{S}(\R^d))$ are dense in $\mathcal{B}_2(L^2(\R^d))$.

    \paragraph{Proof of 1.} We use the Schur test similarly to the proof of \cite[Theorem 5.3]{LeeLein2025}, originally based on the proof of \cite[Theorem 3.7]{CorneanHelfferPurice2024}.

    Using Lemma \ref{lem:convergence_l2} and Lemma \ref{lem:convergence_schwartz} we get the following expansion for smoothing operators $S$:
    \begin{equation*}
        \begin{aligned}
            \Vert\mathfrak{Op}^A(\Phi)S\Vert_{\mathcal{B}_2(L^2)}^2&=\sum_{(\tilde{\alpha},\tilde{\beta})\in\Z^{4d}}\Big\langle\mathfrak{Op}^A(\Phi)S,\mathcal{T}_{\tilde{\alpha},\tilde{\beta}}^A\Big\rangle_{\mathcal{B}_2(L^2)}\Big\langle\mathcal{T}_{\tilde{\alpha},\tilde{\beta}}^A,\mathfrak{Op}^A(\Phi)S\Big\rangle_{\mathcal{B}_2(L^2)}\\
            &=\sum_{(\tilde{\alpha},\tilde{\beta}),(\tilde{\gamma},\tilde{\delta}),(\tilde{\mu},\tilde{\nu})\in\Z^{4d}}\overline{\Big\langle\mathcal{T}_{\tilde{\gamma},\tilde{\delta}}^A,S\Big\rangle_{\mathcal{B}_2(L^2)}\mathbb{M}_{(\tilde{\alpha},\tilde{\beta}),(\tilde{\gamma},\tilde{\delta})}^A[ \mathfrak{Op}^A(\Phi)]}\\
            &\qquad\times\mathbb{M}_{(\tilde{\alpha},\tilde{\beta}),(\tilde{\mu},\tilde{\nu})}^A[ \mathfrak{Op}^A(\Phi)]\Big\langle \mathcal{T}_{\tilde{\mu},\tilde{\nu}}^A,S\Big\rangle_{\mathcal{B}_2(L^2)}
        \end{aligned}
    \end{equation*}
    Lemma \ref{lem:convergence_l2} tells us that $\big(\big\langle \mathcal{T}_{\tilde{\gamma},\tilde{\delta}}^A,S\big\rangle_{\mathcal{B}_2(L^2)}\big)_{(\tilde{\gamma},\tilde{\delta})\in\Z^{4d}}$ is a member of $\ell^2(\Z^{4d})$, and Theorem \ref{thm:matrix_rep} tells us that the matrix
    \begin{equation*}
        \mathbb{M}\coloneq\left(\sum_{(\tilde{\alpha},\tilde{\beta})\in\Z^{4d}}\overline{\mathbb{M}_{(\tilde{\alpha},\tilde{\beta}),(\tilde{\gamma},\tilde{\delta})}^A[ \mathfrak{Op}^A(\Phi)]}\mathbb{M}_{(\tilde{\alpha},\tilde{\beta}),(\tilde{\mu},\tilde{\nu})}^A[ \mathfrak{Op}^A(\Phi)]\right)_{(\tilde{\gamma},\tilde{\delta}),(\tilde{\mu},\tilde{\nu})\in\Z^{4d}}
    \end{equation*}
    satisfies the Schur test \cite{Hedenmalm2000} on $\ell^2(\Z^{4d})$, so
    \begin{equation*}
        \begin{aligned}
            \Vert\mathfrak{Op}^A(\Phi)S\Vert_{\mathcal{B}_2(L^2)}^2&\leq C\Vert\Phi\Vert_{S_0(M),k}^2\big\Vert\big(\big\langle \mathcal{T}_{\tilde{\gamma},\tilde{\delta}}^A,S\big\rangle_{\mathcal{B}_2(L^2)}\big)_{(\tilde{\gamma},\tilde{\delta})\in\Z^{4d}} \big\Vert_{\ell^2}^2\\
            &=C\Vert\Phi\Vert_{S_0(M),k}^2\Vert S \Vert_{\mathcal{B}_2(L^2)}^2
        \end{aligned}
    \end{equation*}
    for some $C>0,k\in\N_0$ from \eqref{eq:matrix_specific}.
    
    \paragraph{Proof of 2.} The assumption on $M$ implies that $M\in L^\infty(\R^{4d})$, and using Lemma \ref{lem:convergence_l2} twice we get an expansion of the kind \eqref{eq:supermatrix} but for Hilbert-Schmidt operators $S$ with convergence in $\mathcal{B}_2(L^2(\R^d))$-norm. This allows us to decompose $\mathfrak{Op}^A(\Phi)$ into the composition of three bounded operators, as done in the proof of \cite[Theorem 4.2]{Thorn2026}:
    \begin{equation*}
        \begin{aligned}
            \mathcal{B}(L^2(\R^d))\ni S&\mapsto \big(\big\langle \mathcal{T}_{\tilde{\gamma},\tilde{\delta}}^A,S\big\rangle_{L^2}\big)_{(\tilde{\gamma},\tilde{\delta})\in\Z^{4d}}\in\ell^2(\Z^{4d})\\
            \ell^2(\Z^{4d})\ni(c_{\tilde{\gamma},\tilde{\delta}})_{(\tilde{\gamma},\tilde{\delta})\in\Z^{4d}}&\mapsto\left(\sum_{(\tilde{\gamma},\tilde{\delta})\in\Z^{4d}}\mathbb{M}_{(\tilde{\alpha},\tilde{\beta}),(\tilde{\gamma},\tilde{\delta})}^A[\mathfrak{Op}_t^A(\Phi)]c_{\tilde{\gamma},\tilde{\delta}}\right)_{(\tilde{\alpha},\tilde{\beta})\in\Z^{2d}}\in\ell^2(\Z^{4d})\\
            \ell^2(\Z^{4d})\ni(c_{\tilde{\alpha},\tilde{\beta}})_{(\tilde{\alpha},\tilde{\beta})\in\Z^{4d}}&\mapsto\sum_{(\tilde{\alpha},\tilde{\beta})\in\Z^{4d}}c_{\tilde{\alpha},\tilde{\beta}}\mathcal{T}_{\tilde{\alpha},\tilde{\beta}}^A\in \mathcal{B}(L^2(\R^d))
        \end{aligned}
    \end{equation*}
    We prove that the map in the middle is compact, whence it follows that $\mathfrak{Op}^A(\Phi)$ is compact. This map can again be decomposed into the following two operators on $\ell^2(\Z^{4d})$:
    \begin{equation*}
        \begin{aligned}
            (c_{\tilde{\gamma},\tilde{\delta}})_{(\tilde{\gamma},\tilde{\delta})\in\Z^{4d}}&\mapsto\left(M(\tilde{\alpha},\tilde{\beta})^{-1}\sum_{(\tilde{\gamma},\tilde{\delta})\in\Z^{4d}}\mathbb{M}_{(\tilde{\alpha},\tilde{\beta}),(\tilde{\gamma},\tilde{\delta})}^A[\mathfrak{Op}_t^A(\Phi)]c_{\tilde{\gamma},\tilde{\delta}}\right)_{(\tilde{\alpha},\tilde{\beta})\in\Z^{2d}}\\
            (c_{\tilde{\alpha},\tilde{\beta}})_{(\tilde{\alpha},\tilde{\beta})\in\Z^{4d}}&\mapsto (M(\tilde{\alpha},\tilde{\beta})c_{\tilde{\alpha},\tilde{\beta}})
        \end{aligned}
    \end{equation*}
    Using Theorem \ref{thm:matrix_rep}, Peetre's inequality \eqref{eq:peetre}, and the Schur test \cite{Hedenmalm2000}, we see that the first operator is bounded. The second is a multiplication operator in $\ell^2(\Z^{4d})$ with multiplier converging to zero at infinity, hence compact.
    
    \paragraph{Proof of 3.} We first note that $M\in L^p(\R^{4d})$ with $M$ being a tempered weight implies the assumption in \textit{2.}, so $\mathfrak{Op}^A(\Phi)$ is compact on $\mathcal{B}_2(L^2(\R^d))$.
    
    Secondly, we reduce the statement to $p\in(0,2]$. Suppose $p\in(2,\infty)$. For a compact operator $S$ on a Hilbert space, being in the $p$-Schatten-class is equivalent to $S^*S$ being in the $\frac{p}{2}$-Schatten-class, and the following holds:
    \begin{equation*}
        \Vert S\Vert_{\mathcal{B}_p}=\Vert S^*S\Vert_{\mathcal{B}_{\frac{p}{2}}}^\frac{1}{2}
    \end{equation*}
    Considering Proposition \ref{prop:calc_super_moyal} on the magnetic super Moyal algebra and \cite[Proposition IV.7]{LeeLein2022} on $\mathcal{B}_2(L^2(\R^d))$-adjoints, if we could prove the case $p\in(0,2]$, then using the above fact repeatedly would also prove the case $p\in(2,\infty)$.

    Now to prove the case $p\in(0,2]$ we use \cite[Theorem B]{Zhu2013}: Essentially, $\mathfrak{Op}^A(\Phi)$ is in the $p$-Schatten-class if
    \begin{equation*}
        \sum_{\tilde{\gamma},\tilde{\delta}\in\Z^{2d}}\Vert\mathfrak{Op}^A(\Phi)\mathcal{T}_{\tilde{\gamma},\tilde{\delta}}^A\Vert_{\mathcal{B}_2(L^2)}^p<\infty
    \end{equation*}
    and
    \begin{equation*}
        \Vert\mathfrak{Op}^A(\Phi)\Vert_{\mathcal{B}_p(\mathcal{B}_2(L^2))}^p\leq\sum_{\tilde{\gamma},\tilde{\delta}\in\Z^{2d}}\Vert\mathfrak{Op}^A(\Phi)\mathcal{T}_{\tilde{\gamma},\tilde{\delta}}^A\Vert_{\mathcal{B}_2(L^2)}^p.
    \end{equation*}
    We estimate using Theorem \ref{thm:matrix_rep}:
    \begin{equation*}
        \begin{aligned}
            \sum_{\tilde{\gamma},\tilde{\delta}\in\Z^{2d}}\Vert\mathfrak{Op}^A(\Phi)\mathcal{T}_{\tilde{\gamma},\tilde{\delta}}^A\Vert_{\mathcal{B}_2(L^2)}^p&=\sum_{\tilde{\gamma},\tilde{\delta}\in\Z^{2d}}\left(\sum_{\tilde{\alpha},\tilde{\beta}\in\Z^{2d}}\bigg|\mathbb{M}_{(\tilde{\alpha},\tilde{\beta}),(\tilde{\gamma},\tilde{\delta})}^A[\mathfrak{Op}^A(\Phi)]\bigg|^2\right)^\frac{p}{2}\\
            &=\sum_{\tilde{\gamma},\tilde{\delta}\in\Z^{2d}}\left(\sum_{\tilde{\alpha},\tilde{\beta}\in\Z^{2d}}\bigg|\mathbb{M}_{(\tilde{\alpha},\tilde{\beta}),(\tilde{\gamma},\tilde{\delta})}^A[\mathfrak{Op}^A(\Phi)]\bigg|^{p\frac{2}{p}}\right)^\frac{p}{2}\\
            &\leq\sum_{\tilde{\alpha},\tilde{\beta},\tilde{\gamma},\tilde{\delta}\in\Z^{2d}}\bigg|\mathbb{M}_{(\tilde{\alpha},\tilde{\beta}),(\tilde{\gamma},\tilde{\delta})}^A[\mathfrak{Op}^A(\Phi)]\bigg|^p\leq C\Vert\Phi\Vert_{S_0(M),k}
        \end{aligned}
    \end{equation*}
    for some $C>0,k\in\N_0$, where in the second to last step we used that $2/p>1$. This shows that $\mathfrak{Op}^A(\Phi)\in\mathcal{B}_p\big(\mathcal{B}_2(L^2(\R^d))\big)$ and the associated continuity assertion.
\end{proof}

\begin{remark}
    It was strictly only necessary to have the assumptions hold for $M$ on some lattices, but like our Remark \ref{rem:app1}, this is equivalent to the given assumptions up to measurability, see Lemma \ref{lem:app_lebesgue} and \ref{lem:app_tozero}.
\end{remark}

\section{A Beals-type Super Commutator Criterion}\label{sec:commutator}

The magnetic Beals criterion was initially considered in \cite[Theorem 1.1]{Iftimie2010}, while in  \cite{CorneanHelfferPurice2018,CorneanHelfferPurice2024} the subject was revisited using frames. Here we extend those  results to super operators.

\begin{theorem}\label{thm:beals}
    Let $s_L$, $s_R$, $s_L'$, $s_R'\in\R$ and consider a bounded super operator $$\mathfrak{T}\colon\mathcal{B}(\mathscr{S}'(\R^d),\mathscr{S}(\R^d))\rightarrow\mathcal{B}(\mathscr{S}(\R^d),\mathscr{S}'(\R^d)).$$ 
    The following two statements are equivalent:
    \begin{enumerate}[label={\rm(\roman*)}, ref={\rm(\roman*)}]
        \item\label{i_beals} All commutators of the kind
        \begin{equation}\label{eq:commutators}
            [\mathfrak{S}_n,[\cdots [\mathfrak{S}_2,[\mathfrak{S}_1,\mathfrak{T}]]\cdots]],
        \end{equation}
        with $n\in\N_0$ and $\mathfrak{S}_j\in\{\mathrm{Id}\odot Q_1,Q_1\odot\mathrm{Id},\dots,Q_d\odot\mathrm{Id},\mathrm{Id}\odot P_1^A,P_1^A\odot\mathrm{Id},\dots,P_d^A\odot\mathrm{Id}\}$ for $j=1,\dots,n$, are bounded from $\mathcal{B}_2\big(H_A^{s_R}(\R^d),H_A^{s_L}(\R^d)\big)$ into $\mathcal{B}_2\big(H_A^{s_R'}(\R^d),H_A^{s_L'}(\R^d)\big)$.
        
        \item\label{ii_beals} $\mathfrak{T}$ has a super symbol in $S_0(m_0^{s_L-s_L'}\otimes m_0^{s_R'-s_R})$.
    \end{enumerate}
\end{theorem}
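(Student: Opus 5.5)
First reduce to $s_L=s_R=s_L'=s_R'=0$, exactly as in the proof of Proposition \ref{prop:opspaces}. Conjugate $\mathfrak{T}$ by the super operators $\mathfrak{op}^A(\chi_{s_L'})\odot\mathfrak{op}^A(\chi_{-s_R'})$ and $\mathfrak{op}^A(\chi_{-s_L})\odot\mathfrak{op}^A(\chi_{s_R})$, which are isomorphisms between the relevant Hilbert--Schmidt spaces. By Proposition \ref{prop:calc_super_moyal}, this conjugation moves super symbols in $S_0(m_0^{s_L-s_L'}\otimes m_0^{s_R'-s_R})$ bijectively to $S_0(1)$, since $\chi_s\star^B\chi_{-s}=1$. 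We also need the conjugation to preserve the commutator condition. The commutator of $Q_j\odot\mathrm{Id}$ or $P_j^A\odot\mathrm{Id}$ with $\mathfrak{op}^A(\chi_s)\odot\mathfrak{op}^A(\chi_{-s'})$ equals $\mathfrak{op}^A(\{\cdot\})\odot\mathfrak{op}^A(\chi_{-s'})$, where the new symbol lies in $s_0(m_0^{s})$, and similarly on the right. Expanding iterated commutators of the conjugated operator by the Leibniz rule therefore gives finite sums of conjugated iterated commutators of $\mathfrak{T}$, and these are bounded on $\mathcal{B}_2(L^2)$ by hypothesis. After the reduction, (i) means that all iterated commutators are bounded on $\mathcal{B}_2(L^2(\R^d))$, and (ii) means that the super symbol lies in $S_0(1)$.

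For (ii)$\Rightarrow$(i), use the symbol calculus. One has $Q_j\odot\mathrm{Id}=\mathfrak{Op}^A(x_{L,j})$ and $P_j^A\odot\mathrm{Id}=\mathfrak{Op}^A(\xi_{L,j})$, and analogously on the right. The magnetic super Moyal commutator of a linear symbol with $\Phi\in S_0(1)$ is then, up to constants, a derivative $\partial_{\xi}\Phi$ or $\partial_x\Phi$ plus terms of the form $B\cdot\partial_\xi\Phi$, all again in $S_0(1)$. This is the super analogue of the standard magnetic identity and can be checked on tensor products $\phi\otimes\psi$ via \eqref{eq:odot_identity} and extended by density and continuity. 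Iterating, every commutator \eqref{eq:commutators} has symbol in $S_0(1)$, and Theorem \ref{thm:caldvail}, case 1, gives boundedness.

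For (i)$\Rightarrow$(ii), use the matrix characterization of Theorem \ref{thm:matrix_rep} \ref{ii_matrix_rep} with $M=1$. It suffices to show that
\begin{equation*}
\sup\langle(\tilde\alpha,\tilde\beta)-(\tilde\gamma,\tilde\delta)\rangle^n\big|\mathbb{M}^A_{(\tilde\alpha,\tilde\beta),(\tilde\gamma,\tilde\delta)}[\mathfrak{T}]\big|<\infty
\end{equation*}
for all $n$. The reconstruction \eqref{eq:supermatrix} then identifies $\mathfrak{T}$ with $\mathfrak{Op}^A$ of a symbol in $S_0(1)$, using injectivity of $\mathfrak{Op}^A$ on $\mathscr{S}'$ and that $\mathfrak{T}$ is determined by its matrix, via Lemma \ref{lem:convergence_schwartz}.

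The key computation follows \cite{CorneanHelfferPurice2024}. Position differences are obtained from commutators with $Q_j$: since $\mathcal{G}^A_{\tilde\alpha}$ is supported near $\alpha$, we have $(Q_j-\alpha_j)\mathcal{G}^A_{\tilde\alpha}=\mathcal{G}'$, a similar frame-type function with uniformly bounded norm. Thus $(\alpha_j-\gamma_j)\mathbb{M}$ equals the matrix element of $[Q_j\odot\mathrm{Id},\mathfrak{T}]$ tested against modified frame elements, plus terms bounded by $\|\mathfrak{T}\|$. Momentum differences $\alpha'-\gamma'$ are obtained from $P^A_j$: the gauge factor $e^{i\varphi(x,\alpha)}$ and \eqref{eq:magpotential} give $(P^A_j-\alpha'_j)\mathcal{G}^A_{\tilde\alpha}=e^{i\varphi(\cdot,\alpha)}e^{i\alpha'\cdot(\cdot-\alpha)}\big[(-i\partial_j\mathfrak{g})(\cdot-\alpha)-A_j(\cdot;\alpha)\mathfrak{g}(\cdot-\alpha)\big]$. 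Here $A_j(\cdot;\alpha)$ is bounded with all derivatives on the support, uniformly in $\alpha$. The right-hand variables are handled by the $\mathrm{Id}\odot Q_j$ and $\mathrm{Id}\odot P_j^A$ commutators, with complex conjugations.

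Iterating gives bounds on $\prod(\text{differences})^{k}\,|\mathbb{M}|$ by finite sums of $\mathcal{B}_2$-operator norms of iterated commutators, tested on Hilbert--Schmidt operators of the form $\mathfrak{int}(f\otimes\bar g)$ with $f,g$ from uniformly $L^2$-bounded families. The main obstacle is bookkeeping. The modified functions are no longer exactly frame elements, so the estimate has to be phrased as bounds on $|\langle \mathfrak{T}'\,\mathfrak{int}(f_1\otimes\bar f_2),\mathfrak{int}(\bar f_3\otimes f_4)\rangle|$ uniform over such families. Each $P^A$-commutator also produces, through the $A_j(\cdot;\alpha)$ terms, lower-order contributions that must be controlled inductively in the number of commutators. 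I would set up an induction on $n$ with a class of admissible localized functions closed under multiplication by $(Q-\alpha)$ and under application of $(P^A-\alpha')$.
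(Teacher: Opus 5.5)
Your (i)$\Rightarrow$(ii) is essentially the paper's argument: the identities \eqref{eq:beals_position}--\eqref{eq:beals_momentum}, the localized class $e^{i\varphi(x,\alpha)}g(x,\alpha)e^{i\alpha'\cdot(x-\alpha)}$, and Theorem \ref{thm:matrix_rep} \ref{ii_matrix_rep}. Note that $Q_j-\alpha_j$ and $P_j^A-\alpha_j'$ map this class exactly into itself, so no lower-order terms actually arise.

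Where you depart from the paper there is a genuine gap. Your (ii)$\Rightarrow$(i) rests on the claim that the commutator of $P_j^A\odot\mathrm{Id}$ with $\mathfrak{Op}^A(\Phi)$ has symbol $\partial_x\Phi$ plus terms $B\cdot\partial_\xi\Phi$. That is only true for constant $B$. Using \eqref{eq:magpotential}, the kernel of $[P_j^A\odot\mathrm{Id},\mathfrak{Op}^A(\Phi)]$ is the phase factor times the sum of two terms:
\begin{itemize}
\item $-i(\partial_{x_{L,j}}+\partial_{y_{L,j}})$ applied to the reduced kernel;
\item the reduced kernel multiplied by $A_j(y_L;x_L)-A_j(x_L;y_L)$, i.e.\ by $\sum_k(x_L-y_L)_k$ times an average of $B_{jk}$ over the segment $[y_L,x_L]$.
\end{itemize}
This coefficient depends on $x_L-y_L$ and not only on $\frac{x_L+y_L}{2}$. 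On the symbol side it is therefore a nonlocal operator in $\xi_L$ acting on $\partial_{\xi_L}\Phi$, not multiplication by $B$. Checking on tensor products and extending by density identifies the commutator symbol in $\mathscr S'(\R^{4d})$. But membership in $S_0(1)$ is precisely what remains to be shown, and your formula does not give it.

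The same unproved claim enters your reduction to $s_L=s_R=s_L'=s_R'=0$. That reduction needs iterated commutators of $Q_j,P_j^A$ with $\mathfrak{op}^A(\chi_s)$ to have symbols in $s_0(m_0^s)$. This is a non-super lemma you do not prove, and the paper remarks it has so far only been established for $s_0(1)$.

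The fix uses tools you already have, and is what the paper does in Proposition \ref{prop:not_quite_beals}:
\begin{enumerate}
\item Write the matrix elements of $[\mathfrak S,\mathfrak{Op}^A(\Phi)]$ as two pairings of $K^A\Phi$ with your modified frame functions, plus a term like $(\delta_1-\beta_1)\,\mathbb{M}^A[\mathfrak{Op}^A(\Phi)]$, which the off-diagonal decay absorbs.
\item Bound each piece as in Theorem \ref{thm:matrix_rep} \ref{i_matrix_rep}.
\item Conclude with Theorem \ref{thm:matrix_rep} \ref{ii_matrix_rep}.
\end{enumerate}
This works for every tempered weight $M$, so (ii)$\Rightarrow$(i) follows from Theorem \ref{thm:caldvail} case 1 with no reduction at all.

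For (i)$\Rightarrow$(ii) you can likewise skip the conjugation. Insert the $\chi$'s into each pairing as the paper does, and use $\Vert\mathfrak{op}^A(\chi_s)G_{\tilde\gamma}\Vert_{L^2}\le C\, m_0^s(\tilde\gamma)$ together with Peetre's inequality. Then commutators with $\mathfrak{op}^A(\chi_s)$ are never needed.
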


\begin{proof}[Proof that \ref{i_beals} implies \ref{ii_beals}]
    We start by showing that the assumption about the commutators \eqref{eq:commutators} implies that the super symbol of $\mathfrak{T}$ belongs to $S_0(m_0^{s_L-s_L'}\otimes m_0^{s_R'-s_R})$. 

    Suppose $\mathfrak{T}$ is a super operator for which all the commutators of the form \eqref{eq:commutators} are bounded from $\mathcal{B}_2\big(H_A^{s_R}(\R^d),H_A^{s_L}(\R^d)\big)$ into $\mathcal{B}_2\big(H_A^{s_R'}(\R^d),H_A^{s_L'}(\R^d)\big)$. We will show that this implies that the matrix elements of $\mathfrak{T}$ in the frame $(\mathcal{T}_{\tilde{\alpha},\tilde{\beta}}^A)_{\tilde{\alpha},\tilde{\beta}\in\Z^{2d}}$ satisfy the condition in \eqref{eq:matrix} with $M= m_0^{s_L-s_L'}\otimes m_0^{s_R'-s_R}$, which then implies that the super symbol of $\mathfrak{T}$ must be in $S_0(m_0^{s_L-s_L'}\otimes m_0^{s_R'-s_R})$ by Theorem \ref{thm:matrix_rep} \ref{ii_matrix_rep}.

    Let us first deal with the arbitrary decay in the diagonal of $\mathbb{M}^A[\mathfrak{T}]$. Following the strategy of \cite{CorneanHelfferPurice2018}, we see that
    \begin{equation}\label{eq:beals_position}
        \begin{aligned}
            (\alpha_j-\gamma_j)\mathbb{M}_{(\tilde{\alpha},\tilde{\beta}),(\tilde{\gamma},\tilde{\delta})}^A[\mathfrak{T}]&=\Big\langle[(Q_j\odot \mathrm{Id}),\mathfrak{T}]\mathcal{T}_{\tilde{\gamma},\tilde{\delta}}^A,\overline{\mathcal{T}_{\tilde{\alpha},\tilde{\beta}}^A}\Big\rangle_{\mathcal{B}',\mathcal{B}}\\
            &\quad-\Big\langle\mathfrak{T}\mathcal{T}_{\tilde{\gamma},\tilde{\delta}}^A,\big((Q_j\odot \mathrm{Id})-\alpha_j\big)\overline{\mathcal{T}_{\tilde{\alpha},\tilde{\beta}}^A}\Big\rangle_{\mathcal{B}',\mathcal{B}}\\
            &\quad+\Big\langle\mathfrak{T}\big((Q_j\odot \mathrm{Id})-\gamma_j\big)\mathcal{T}_{\tilde{\gamma},\tilde{\delta}}^A,\overline{\mathcal{T}_{\tilde{\alpha},\tilde{\beta}}^A}\Big\rangle_{\mathcal{B}',\mathcal{B}}
        \end{aligned}
    \end{equation}
    and
    \begin{equation}\label{eq:beals_momentum}
        \begin{aligned}
            (\alpha_j'-\gamma_j')\mathbb{M}_{(\tilde{\alpha},\tilde{\beta}),(\tilde{\gamma},\tilde{\delta})}^A[\mathfrak{T}]&=\Big\langle[(P_j^A\odot \mathrm{Id}),\mathfrak{T}]\mathcal{T}_{\tilde{\gamma},\tilde{\delta}}^A,\overline{\mathcal{T}_{\tilde{\alpha},\tilde{\beta}}^A}\Big\rangle_{\mathcal{B}',\mathcal{B}}\\
            &\quad-\Big\langle\mathfrak{T}\mathcal{T}_{\tilde{\gamma},\tilde{\delta}}^A,\big((P_j^A\odot \mathrm{Id})-\alpha_j'\big)\overline{\mathcal{T}_{\tilde{\alpha},\tilde{\beta}}^A}\Big\rangle_{\mathcal{B}',\mathcal{B}}\\
            &\quad+\Big\langle\mathfrak{T}\big((P_j^A\odot \mathrm{Id})-\gamma_j'\big)\mathcal{T}_{\tilde{\gamma},\tilde{\delta}}^A,\overline{\mathcal{T}_{\tilde{\alpha},\tilde{\beta}}^A}\Big\rangle_{\mathcal{B}',\mathcal{B}}
        \end{aligned}
    \end{equation}
    with $j=1,\dots,d$ and similar decompositions holding for $\beta_j-\delta_j$ and $\beta_j'-\delta_j'$ using $\mathrm{Id}\odot Q_j$ and $\mathrm{Id}\odot P_j^A$. All three terms of the right hand side have the same structure: The super operators $\mathfrak{T}$, $[Q_j\odot\mathrm{Id},\mathfrak{T}]$, and $[P_j^A\odot\mathrm{Id},\mathfrak{T}]$ are by assumption bounded between $\mathcal{B}_2\big(H_A^{s_R}(\R^d),H_A^{s_L}(\R^d)\big)$ and $\mathcal{B}_2\big(H_A^{s_R'}(\R^d),H_A^{s_L'}(\R^d)\big)$, while $\mathcal{T}_{\tilde{\alpha},\tilde{\beta}}^A$,
    \begin{equation*}
        \big((Q_j\odot \mathrm{Id})-\alpha_j\big)\mathcal{T}_{\tilde{\alpha},\tilde{\beta}}^A=\mathfrak{int}\Big(\big[\big(Q_j-\alpha_j\big)\mathcal{G}_{\tilde{\alpha}}^A\big]\otimes\overline{\mathcal{G}_{\tilde{\beta}}^A}\Big),
    \end{equation*}
    and
    \begin{equation*}
        \big((P_j^A\odot \mathrm{Id})-\alpha_j'\big)\mathcal{T}_{\tilde{\alpha},\tilde{\beta}}^A=\mathfrak{int}\Big(\big[\big(P_j^A-\alpha_j'\big)\mathcal{G}_{\tilde{\alpha}}^A\big]\otimes\overline{\mathcal{G}_{\tilde{\beta}}^A}\Big)
    \end{equation*}
    are rank one integral operators. Here
    \begin{equation*}
        \big[\big(Q_j-\alpha_j\big)\mathcal{G}_{\tilde{\alpha}}^A\big](x)=(2\pi)^{-\frac{d}{2}}e^{i\varphi(x,\alpha)}(x_j-\alpha_j)\mathfrak{g}(x-\alpha)e^{i\alpha'\cdot(x-\alpha)}
    \end{equation*}
    and using \eqref{eq:magpotential}
    \begin{equation*}
        \big[\big(P_j^A-\alpha_j'\big)\mathcal{G}_{\tilde{\alpha}}^A\big](x)=(2\pi)^{-\frac{d}{2}}e^{i\varphi(x,\alpha)}\big(-A_j(x;\alpha)\mathfrak{g}(x-\alpha)-i\partial_j\mathfrak{g}(x-\alpha)\big)e^{i\alpha'\cdot(x-\alpha)},
    \end{equation*}
    where $(x_j-\alpha_j)\mathfrak{g}(x-\alpha)$ and $-A_j(x;\alpha)\mathfrak{g}(x-\alpha)-i\partial_j\mathfrak{g}(x-\alpha)$ are uniformly bounded functions in $x$ and $\alpha$.

    By repeatedly using \eqref{eq:beals_position} and \eqref{eq:beals_momentum} in combination with the decompositions for $\beta_j-\delta_j$ and $\beta_j'-\delta_j'$ we find that
    \begin{equation*}
        \langle(\tilde{\alpha},\tilde{\beta})-(\tilde{\gamma},\tilde{\delta})\rangle^{2k}\mathbb{M}_{(\tilde{\alpha},\tilde{\beta}),(\tilde{\gamma},\tilde{\delta})}^A[\mathfrak{T}]
    \end{equation*}
    with $k\in\N_0$ equals a finite sum with every term having the structure
    \begin{equation}\label{eq:beals_form_of_term}
        \Big\langle\tilde{\mathfrak{T}}\, \mathfrak{int}(G_{\tilde{\gamma},1}\otimes G_{\tilde{\delta},2}),\mathfrak{int}(G_{\tilde{\alpha},3}\otimes G_{\tilde{\beta},4})\Big\rangle_{\mathcal{B}',\mathcal{B}},
    \end{equation}
    where $\tilde{\mathfrak{T}}$ is bounded from $\mathcal{B}_2\big(H_A^{s_R}(\R^d),H_A^{s_L}(\R^d)\big)$ into $\mathcal{B}_2\big(H_A^{s_R'}(\R^d),H_A^{s_L'}(\R^d)\big)$, and
    \begin{equation*}
        G_{\tilde{\alpha},3}(x)=e^{i\varphi(x,\alpha)}g_3(x,\alpha)e^{i\alpha'\cdot(x-\alpha)},
    \end{equation*}
    where $g_3$ is smooth and uniformly bounded, with $G_{\tilde{\gamma},1}, G_{\tilde{\delta},2},G_{\tilde{\beta},4}$ being of the same form. Each of the terms \eqref{eq:beals_form_of_term} can be estimated as follows:
    \begin{align*}
        \Big|\Big\langle&\tilde{\mathfrak{T}}\, \mathfrak{int}(G_{\tilde{\gamma},1}\otimes G_{\tilde{\delta},2}),\mathfrak{int}(G_{\tilde{\alpha},3}\otimes G_{\tilde{\beta},4})\Big\rangle_{\mathcal{B}',\mathcal{B}}\Big|\\
        &=\Big|\Big\langle\mathfrak{Op}^A(\chi_{s_L'}\otimes\chi_{-s_R'})\, \tilde{\mathfrak{T}}\, \mathfrak{Op}^A(\chi_{-s_L}\otimes\chi_{s_R})\mathfrak{Op}^A(\chi_{s_L}\otimes\chi_{-s_R})\mathfrak{int}(G_{\tilde{\gamma},1}\otimes G_{\tilde{\delta},2}),\\
    &\qquad\qquad\qquad\mathfrak{Op}^A(\chi_{-s_L'}\otimes\chi_{s_R'})\mathfrak{int}(G_{\tilde{\alpha},3}\otimes G_{\tilde{\beta},4})\Big\rangle_{\mathcal{B}',\mathcal{B}}\Big|\\
        &\leq C_1 \Vert\tilde{\mathfrak{T}} \Vert_{\mathcal{B}} \big\Vert\mathfrak{Op}^A(\chi_{s_L}\otimes\chi_{-s_R})\mathfrak{int}(G_{\tilde{\gamma},1}\otimes G_{\tilde{\delta},2})\big\Vert_{\mathcal{B}_2(L^2)}\\
&\quad\times\big\Vert\mathfrak{Op}^A(\chi_{-s_L'}\otimes\chi_{s_R'})\mathfrak{int}(G_{\tilde{\alpha},3}\otimes G_{\tilde{\beta},4})\big\Vert_{\mathcal{B}_2(L^2)}\\
        &\leq C_1\Vert\tilde{\mathfrak{T}}\Vert_{\mathcal{B}} \Vert\mathfrak{op}^A(\chi_{s_L})G_{\tilde{\gamma},1}\Vert_{L^2}\Vert\mathfrak{op}^A(\chi_{-s_R})G_{\tilde{\delta},2}\Vert_{L^2}\Vert\mathfrak{op}^A(\chi_{-s_L'})G_{\tilde{\alpha},3}\Vert_{L^2}\Vert\mathfrak{op}^A(\chi_{s_R'})G_{\tilde{\beta},4}\Vert_{L^2}
    \end{align*}
    with $C_1>0$. Using the same strategy as the proof of \cite[Theorem 3.4 1]{Thorn2026} (when estimating $\langle k^A\phi,\overline{\mathcal{G}_{\tilde{\alpha}}^A}\otimes \mathcal{G}_{\tilde{\beta}}^A\rangle_{\mathscr{S}',\mathscr{S}}$) we get:
    \begin{align*}
        \Vert\mathfrak{op}^A(\chi_{s_L})G_{\tilde{\gamma},1}\Vert_{L^2}&=\sqrt{\langle \mathfrak{op}^A(\chi_{s_L}\mathbin{\star^B}\chi_{s_L})G_{\tilde{\gamma},1},\overline{G_{\tilde{\gamma},1}}\rangle_{\mathscr{S}',\mathscr{S}}}\\
        &=\sqrt{\langle k^A(\chi_{s_L}\mathbin{\star^B}\chi_{s_L}),\overline{G_{\tilde{\gamma},1}}\otimes G_{\tilde{\gamma},1}\rangle_{\mathscr{S}',\mathscr{S}}}\leq C_2m_0^{s_L}(\tilde{\gamma})
    \end{align*}    
    where $C_2>0$.  Similar estimates can be computed for $\Vert\mathfrak{op}^A(\chi_{-s_R})G_{\tilde{\delta},2}\Vert_{L^2}$ and the other factors.

    Collecting all of this and using Peetre's inequality and \eqref{eq:peetre}, we see that for a chosen $k\in\N_0$:
    \begin{align*}
        \Big|\mathbb{M}_{(\tilde{\alpha},\tilde{\beta}),(\tilde{\gamma},\tilde{\delta})}^A[\mathfrak{T}]\Big|&=\langle(\tilde{\alpha},\tilde{\beta})-(\tilde{\gamma},\tilde{\delta})\rangle^{-2k}\langle(\tilde{\alpha},\tilde{\beta})-(\tilde{\gamma},\tilde{\delta})\rangle^{2k}\Big|\mathbb{M}_{(\tilde{\alpha},\tilde{\beta}),(\tilde{\gamma},\tilde{\delta})}^A[\mathfrak{T}]\Big|\\
        &\leq C_3\langle(\tilde{\alpha},\tilde{\beta})-(\tilde{\gamma},\tilde{\delta})\rangle^{-2k}m_0^{s_L}(\tilde{\gamma})m_0^{-s_R}(\tilde{\delta})m_0^{-s_L'}(\tilde{\alpha})m_0^{s_R'}(\tilde{\beta})\\
        &\leq C_4 \langle(\tilde{\alpha},\tilde{\beta})-(\tilde{\gamma},\tilde{\delta})\rangle^{-2k+a}m_0^{s_L-s_L'}\bigg(\frac{\tilde{\alpha}+\tilde{\gamma}}{2}\bigg)m_0^{s_R'-s_R}\bigg(\frac{\tilde{\beta}+\tilde{\delta}}{2}\bigg)
    \end{align*}
    with $a,C_3,C_4>0$ depending on $s_L,s_R,s_L',s_R'$ through Peetre's inequality, and $C_3,C_4$ also depending on $k$.
\end{proof}

To show that \ref{ii_beals} implies \ref{i_beals} we first prove the following:

\begin{proposition}\label{prop:not_quite_beals}
    For any tempered weight $M$ on $\R^{4d}$ and any super symbol $\Phi\in S_0(M)$, all commutators of the kind \eqref{eq:commutators} with $\mathfrak{T}=\mathfrak{Op}^A(\Phi)$ have super symbols in $S_0(M)$.
\end{proposition}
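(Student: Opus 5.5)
By induction on $n$ it suffices to treat a single commutator: we show that for $\Phi\in S_0(M)$ and each $\mathfrak{S}\in\{Q_j\odot\mathrm{Id},\mathrm{Id}\odot Q_j,P_j^A\odot\mathrm{Id},\mathrm{Id}\odot P_j^A\}$, the super operator $[\mathfrak{S},\mathfrak{Op}^A(\Phi)]$ equals $\mathfrak{Op}^A(\Psi)$ with $\Psi\in S_0(M)$. Iterating then gives the statement for all $n$, since the weight $M$ is unchanged at each step.

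\emph{Route one: the super Moyal calculus.} Each $\mathfrak{S}$ is itself a super pseudo-differential operator. For instance $Q_j\odot\mathrm{Id}=\mathfrak{Op}^A(x_{L,j}\otimes 1)$, and $P_j^A\odot\mathrm{Id}=\mathfrak{Op}^A(\xi_{L,j}\otimes 1)$, because $\mathfrak{op}^A(\xi_j)=P_j^A$ in the magnetic Weyl calculus. For the right factor one gets $\mathrm{Id}\odot Q_j=\mathfrak{Op}^A(1\otimes x_j)$, and similarly for $P_j^A$. These symbols lie in $S_0(\langle\mathbf{X}\rangle)$. By Proposition \ref{prop:calc_super_moyal}, the commutator has super symbol $\Sigma\mathbin{\#^B}\Phi-\Phi\mathbin{\#^B}\Sigma\in S_0(\langle\cdot\rangle M)$. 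The crude calculus loses one power of $\langle\mathbf{X}\rangle$, so it does not give the claim directly. We therefore need the extra cancellation coming from the fact that the leading terms of the two products coincide.

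\emph{Route two: the frame matrix (preferred).} We estimate the matrix of the commutator directly and then apply Theorem \ref{thm:matrix_rep} \ref{ii_matrix_rep}. Write
\begin{equation*}
\mathbb{M}^A_{(\tilde{\alpha},\tilde{\beta}),(\tilde{\gamma},\tilde{\delta})}\big[[Q_j\odot\mathrm{Id},\mathfrak{T}]\big]=\Big\langle \mathfrak{T}\mathcal{T}^A_{\tilde{\gamma},\tilde{\delta}},(Q_j\odot\mathrm{Id})\overline{\mathcal{T}^A_{\tilde{\alpha},\tilde{\beta}}}\Big\rangle_{\mathcal{B}',\mathcal{B}}-\Big\langle\mathfrak{T}(Q_j\odot\mathrm{Id})\mathcal{T}^A_{\tilde{\gamma},\tilde{\delta}},\overline{\mathcal{T}^A_{\tilde{\alpha},\tilde{\beta}}}\Big\rangle_{\mathcal{B}',\mathcal{B}} .
\end{equation*}
Rearranging exactly as in \eqref{eq:beals_position}, this equals
\begin{equation*}
(\alpha_j-\gamma_j)\,\mathbb{M}^A[\mathfrak{T}]+\text{two terms of the form }\big\langle \mathfrak{T}\,\mathfrak{int}(G_1\otimes G_2),\mathfrak{int}(G_3\otimes G_4)\big\rangle_{\mathcal{B}',\mathcal{B}},
\end{equation*}
where one of the $G$'s is modified to $(x_j-\gamma_j)\mathfrak{g}(x-\gamma)$ or $(x_j-\alpha_j)\mathfrak{g}(x-\alpha)$ times the usual phases. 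The momentum case is identical, using \eqref{eq:beals_momentum} and \eqref{eq:magpotential}; the modified window there is $-A_j(x;\alpha)\mathfrak{g}(x-\alpha)-i\partial_j\mathfrak{g}(x-\alpha)$. Every such window is smooth and compactly supported in $x-\alpha$, with all derivatives bounded uniformly in $\alpha$, because $A(\cdot;\alpha)$ grows at most linearly in $|x-\alpha|$.

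The first term is harmless. Theorem \ref{thm:matrix_rep} \ref{i_matrix_rep} gives arbitrary off-diagonal decay for $\mathbb{M}^A[\mathfrak{T}]$, so the factor $(\alpha_j-\gamma_j)$ is absorbed by lowering $n$ by one.

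For the other terms we need the analogue of Theorem \ref{thm:matrix_rep} \ref{i_matrix_rep} for generalized windows. This is the main step. The proof of \ref{i_matrix_rep} (coordinate change, integration by parts, Peetre) uses only that the window is of the form $e^{i\varphi(x,\alpha)}g(x,\alpha)e^{i\alpha'\cdot(x-\alpha)}$, with $g$ smooth, supported in $\alpha+(-1,1)^d$, and having derivatives bounded uniformly in $\alpha$. The four windows enter only through $\langle K^A\Phi,G_1\otimes G_2\otimes G_3\otimes G_4\rangle$. I would restate that estimate for this broader window class and check that the constants depend only on finitely many such uniform bounds. This yields the bound
\begin{equation*}
C\,\langle(\tilde{\alpha},\tilde{\beta})-(\tilde{\gamma},\tilde{\delta})\rangle^{-n}\,M\Big(\tfrac{(\tilde{\alpha},\tilde{\beta})+(\tilde{\gamma},\tilde{\delta})}{2}\Big)\,\Vert\Phi\Vert_{S_0(M),k}
\end{equation*}
for each term. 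Theorem \ref{thm:matrix_rep} \ref{ii_matrix_rep} then shows that the commutator has super symbol in $S_0(M)$, with seminorms controlled by those of $\Phi$. That seminorm control is what makes the induction on $n$ close.

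The expected obstacle is verifying that the integration-by-parts estimate of \cite[Theorem 2.4]{Thorn2026} is genuinely insensitive to replacing $\mathfrak{g}$ by these $\alpha$-dependent windows. The key points to check there are that derivatives of $A_j(x;\alpha)$ are bounded on the support, and that the extra phase factors $e^{i\varphi}$ recombine into the magnetic flux $\Gamma$ exactly as before.
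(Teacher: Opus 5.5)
Your Route two is correct and is essentially the paper's proof. The paper also splits the matrix elements of $[\mathfrak{S}_1,\mathfrak{Op}^A(\Phi)]$ into an index-difference multiple of $\mathbb{M}^A[\mathfrak{Op}^A(\Phi)]$ plus two pairings of $K^A\Phi$ against modified windows, estimates each as in Theorem \ref{thm:matrix_rep} \ref{i_matrix_rep}, concludes with Theorem \ref{thm:matrix_rep} \ref{ii_matrix_rep}, and inducts; one small correction is that the induction closes on membership alone, because the base case applies to every $\Phi\in S_0(M)$, so seminorm control is not what makes it work.
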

\begin{proof}
    The proof proceeds by induction on the number of commutators $n$, so let us begin by showing that a single commutator $[\mathfrak{S}_1,\mathfrak{Op}^A(\Phi)]$ dequantizes to a super symbol in $S_0(M)$. Since the strategy is the same for every choice of $\mathfrak{S}_1$ we suppose that $\mathfrak{S}_1=\mathrm{Id}\odot Q_1$.

    By Theorem \ref{thm:matrix_rep} \ref{ii_matrix_rep}, it is enough to show that the matrix elements of $[\mathfrak{S}_1,\mathfrak{Op}^A(\Phi)]$ satisfy \eqref{eq:matrix}. These are given by:
    \begin{align*}
        \mathbb{M}_{(\tilde{\alpha},\tilde{\beta}),(\tilde{\gamma},\tilde{\delta})}^A\big[[\mathfrak{S}_1,\mathfrak{Op}^A(\Phi)]\big]
        &=\Big\langle\mathfrak{Op}^A(\Phi)\mathcal{T}_{\tilde{\gamma},\tilde{\delta}}^A,(\mathfrak{S}_1-\beta_1)\overline{\mathcal{T}_{\tilde{\alpha},\tilde{\beta}}^A}\Big\rangle_{\mathcal{B}',\mathcal{B}}\\
        &\quad-\Big\langle\mathfrak{Op}^A(\Phi)(\mathfrak{S}_1-\delta_1)\mathcal{T}_{\tilde{\gamma},\tilde{\delta}}^A,\overline{\mathcal{T}_{\tilde{\alpha},\tilde{\beta}}^A}\Big\rangle_{\mathcal{B}',\mathcal{B}}\\
        &\quad+(\delta_1-\beta_1)\Big\langle\mathfrak{Op}^A(\Phi)\mathcal{T}_{\tilde{\gamma},\tilde{\delta}}^A,\overline{\mathcal{T}_{\tilde{\alpha},\tilde{\beta}}^A}\Big\rangle_{\mathcal{B}',\mathcal{B}}\\
        &=\big\langle K^A\Phi,\mathcal{G}_{\tilde{\gamma}}^A\otimes\overline{\mathcal{G}_{\tilde{\delta}}^A}\otimes\overline{\mathcal{G}_{\tilde{\alpha}}^A}\otimes(Q_1-\beta_1)\mathcal{G}_{\tilde{\beta}}^A)\big\rangle_{\mathscr{S}',\mathscr{S}}\\
        &\quad-\big\langle K^A\Phi,\mathcal{G}_{\tilde{\gamma}}^A\otimes\overline{(Q_1-\delta_1)\mathcal{G}_{\tilde{\delta}}^A}\otimes\overline{\mathcal{G}_{\tilde{\alpha}}^A}\otimes\mathcal{G}_{\tilde{\beta}}^A\big\rangle_{\mathscr{S}',\mathscr{S}}\\
        &\quad+(\delta_1-\beta_1)\big\langle K^A\Phi,\mathcal{G}_{\tilde{\gamma}}^A\otimes\overline{\mathcal{G}_{\tilde{\delta}}^A}\otimes\overline{\mathcal{G}_{\tilde{\alpha}}^A}\otimes\mathcal{G}_{\tilde{\beta}}^A\big\rangle_{\mathscr{S}',\mathscr{S}}
    \end{align*}
    Each of the three terms can be estimated in the exact same way as the matrix elements  $\mathbb{M}_{(\tilde{\alpha},\tilde{\beta}),(\tilde{\gamma},\tilde{\delta})}^A[\mathfrak{Op}^A(\Phi)]$, i.e. using the outline in the proof of Theorem \ref{thm:matrix_rep} \ref{i_matrix_rep}.

    For the induction step we assume that for all choices of $\mathfrak{S}_1,\dots,\mathfrak{S}_n$ we have 
    \begin{equation*}
        [\mathfrak{S}_n,[\cdots[\mathfrak{S}_2,[\mathfrak{S}_1,\mathfrak{Op}^A(\Phi)]]\cdots]=\mathfrak{Op}^A(\tilde{\Phi})
    \end{equation*}
    with $\tilde{\Phi}\in S_0(M)$. Thus for some $\mathfrak{S}_{n+1}\in\{\mathrm{Id}\odot Q_1,\dots,P_d^A\odot\mathrm{Id}\}$, we consider the super operator:
    \begin{equation*}
        [\mathfrak{S}_{n+1},[\mathfrak{S}_n,[\cdots[\mathfrak{S}_2,[\mathfrak{S}_1,\mathfrak{Op}^A(\Phi)]]\cdots]]=[\mathfrak{S}_{n+1},\mathfrak{Op}^A(\tilde{\Phi})]
    \end{equation*}
    Now to show that this super operator has a super symbol in $S_0(M)$ one uses the exact same method as in the basis step. This also concludes the proof of Theorem \ref{thm:beals}. 
    \end{proof}

\begin{proof}[Proof that \ref{ii_beals} implies \ref{i_beals}]
    This now follows from Proposition \ref{prop:not_quite_beals} and Theorem \ref{thm:caldvail}.
\end{proof}

\begin{remark}
    The analog of Proposition \ref{prop:not_quite_beals} for magnetic pseudo-differential operators has, as far as we know, only been proven for the Hörmander class $s_0(1)=s_{0,0}^0$. With the above techniques and the framework of \cite{Thorn2026} one could prove a similar result holding for all the classes $s_0(m)$:
    
    \begin{proposition}
        For any tempered weight $m$ on $\R^{2d}$ and any symbol $\phi\in S_0(M)$, all commutators of the kind
        \begin{equation*}
            [S_n,[\cdots [S_2,[S_1,\mathfrak{op}^A(\phi)]]\cdots]],
        \end{equation*}
        with $n\in\N_0$ and $S_j\in\{Q_1,\dots,Q_d,P_1^A,\dots,P_d^A\}$, have symbols in $s_0(m)$.
    \end{proposition}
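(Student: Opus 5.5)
The plan is to copy the proof of Proposition \ref{prop:not_quite_beals}, replacing the super frame $(\mathcal{T}_{\tilde{\alpha},\tilde{\beta}}^A)$ by the frame $(\mathcal{G}_{\tilde{\alpha}}^A)_{\tilde{\alpha}\in\Z^{2d}}$ of $L^2(\R^d)$. The two-sided matrix characterization of $s_0(m)$ is \cite[Theorem 2.4]{Thorn2026}, the non-super analogue of Theorem \ref{thm:matrix_rep}. (The statement should read $\phi\in s_0(m)$.) By part 2 of that theorem it suffices to show the following: for every $n$ there is $C$ with
\begin{equation*}
\Big|\big\langle k^A\psi,\overline{\mathcal{G}_{\tilde{\alpha}}^A}\otimes\mathcal{G}_{\tilde{\beta}}^A\big\rangle_{\mathscr{S}',\mathscr{S}}\Big|\leq C\langle\tilde{\alpha}-\tilde{\beta}\rangle^{-n}m\Big(\frac{\tilde{\alpha}+\tilde{\beta}}{2}\Big),
\end{equation*}
where $\mathfrak{op}^A(\psi)$ is the iterated commutator. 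We proceed by induction on the number of commutators. The base case $n=0$ is part 1 of the same theorem.

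For one commutator, say $S_1=Q_j$, insert $\pm\alpha_j,\pm\beta_j$ as in the proof of Proposition \ref{prop:not_quite_beals}:
\begin{align*}
\big\langle\mathcal{G}_{\tilde{\alpha}}^A,[Q_j,\mathfrak{op}^A(\phi)]\mathcal{G}_{\tilde{\beta}}^A\big\rangle&=\big\langle(Q_j-\alpha_j)\mathcal{G}_{\tilde{\alpha}}^A,\mathfrak{op}^A(\phi)\mathcal{G}_{\tilde{\beta}}^A\big\rangle-\big\langle\mathcal{G}_{\tilde{\alpha}}^A,\mathfrak{op}^A(\phi)(Q_j-\beta_j)\mathcal{G}_{\tilde{\beta}}^A\big\rangle\\
&\quad+(\alpha_j-\beta_j)\big\langle\mathcal{G}_{\tilde{\alpha}}^A,\mathfrak{op}^A(\phi)\mathcal{G}_{\tilde{\beta}}^A\big\rangle.
\end{align*}
For $S_1=P_j^A$ we do the same with $\alpha_j',\beta_j'$. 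By \eqref{eq:magpotential}, $(P_j^A-\alpha_j')\mathcal{G}_{\tilde{\alpha}}^A$ has the form $e^{i\varphi(x,\alpha)}g(x,\alpha)e^{i\alpha'\cdot(x-\alpha)}$. Here $g(x,\alpha)=-A_j(x;\alpha)\mathfrak{g}(x-\alpha)-i\partial_j\mathfrak{g}(x-\alpha)$ is supported in $\alpha+(-1,1)^d$, with all $x$-derivatives bounded uniformly in $\alpha$, since $A(\cdot;\alpha)$ grows at most linearly in $|x-\alpha|$. The same holds for $(Q_j-\alpha_j)\mathcal{G}_{\tilde{\alpha}}^A$.

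Each of the three terms is then a pairing $\langle k^A\phi,\overline{G_{\tilde{\alpha}}}\otimes G_{\tilde{\beta}}\rangle$ with generalized frame functions of this type. The estimate in the proof of \cite[Theorem 2.4]{Thorn2026} only uses the phase structure, the support, and uniform $C^\infty$ bounds of the cutoff. It consists of a change of coordinates, integration by parts using $e^{i\alpha'\cdot(x-\alpha)}$ and $e^{i\xi\cdot(x-y)}$ together with the magnetic phase $\varphi(x,y)-\varphi(x,\alpha)-\varphi(\beta,y)$ (a flux term with bounded derivatives on the support), and Peetre's inequality. So each term obeys the bound above with $\langle\tilde{\alpha}-\tilde{\beta}\rangle^{-n}$ for arbitrary $n$. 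In the third term the factor $|\alpha_j-\beta_j|$ is absorbed by raising $n$ by one.

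For the induction step, the $n$-fold commutator equals $\mathfrak{op}^A(\tilde{\phi})$ with $\tilde{\phi}\in s_0(m)$ by hypothesis, and we apply the one-step argument to $\tilde{\phi}$. It is convenient to state the one-step estimate with explicit seminorm control. This gives continuity of each commutator map on $s_0(m)$, though continuity is not needed for mere membership.

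The main obstacle is checking that the matrix-element estimate from \cite[Theorem 2.4]{Thorn2026} really holds uniformly for the modified cutoffs $g(x,\alpha)$, which now depend on $\alpha$ through $A(x;\alpha)$. I would handle this by a direct inspection of the integration-by-parts argument. In it, derivatives fall either on the cutoff, which has uniformly bounded derivatives on a unit cube around $\alpha$, or on the magnetic flux factor, which has polynomially bounded and hence locally uniformly bounded derivatives after recentering at $\alpha$ via the transversal gauge $A(\cdot;\alpha)$.
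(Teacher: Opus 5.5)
Your proposal is correct and follows the route the paper indicates: rerun the induction and the $\pm\alpha_j,\pm\beta_j$ (resp.\ $\pm\alpha_j',\pm\beta_j'$) splitting from the proof of Proposition~\ref{prop:not_quite_beals}, with the frame $(\mathcal{G}_{\tilde{\alpha}}^A)_{\tilde{\alpha}\in\Z^{2d}}$ and \cite[Theorem 2.4]{Thorn2026} replacing $(\mathcal{T}_{\tilde{\alpha},\tilde{\beta}}^A)$ and Theorem~\ref{thm:matrix_rep}; you are also right that the hypothesis should read $\phi\in s_0(m)$. One correction to your description: on the supports, the derivatives of the flux phase are not uniformly bounded but grow polynomially in $\langle\alpha-\beta\rangle$, which is harmless because additional integrations by parts in $\xi$ absorb this growth, exactly as in \cite{Thorn2026}.
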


    An interesting question is whether such a strengthening of results exists in the converse statement, e.g. if all the commutators \eqref{eq:commutators} of a super operator $\mathfrak{T}$ are compact on $\mathcal{B}_2(L^2(\R^d))$ is it then true that the super symbol of $\mathfrak{T}$ and all of its derivatives decay to zero at infinity? If the answer is yes, does the super symbol belong to the class $S_0(M)$ were perhaps $M(\tilde{\alpha},\tilde{\beta})\approx\big|\big\langle\mathfrak{T}\mathcal{T}_{\tilde{\alpha},\tilde{\beta}}^A,\overline{\mathcal{T}_{\tilde{\alpha},\tilde{\beta}}^A}\big\rangle_{\mathcal{B}',\mathcal{B}}\big|$?
    
    Of course a similar question could be stated for symbols of operators with certain commutator properties.
\end{remark}

\section{Complete Positivity and Trace Preservation}\label{sec:positivity}

We now move on to our last topic of complete positivity \cite{Stinespring1955} and trace preservation, where our main result is heavily inspired by the papers of Hellwig and Kraus \cite{Kraus1970,Kraus1971}. These papers study super operators that are sums of the ``simple" super operators $S_n\odot S_n^*$, which would correspond to super symbols that are sums of the ``simple" super symbols $\phi_n\otimes\overline{\phi_n}$ in the magnetic pseudo-differential super calculus. Their main assumption on the operators $(S_n)_{n\in\N_0}$ is that $\sum_{n=0}^\infty S_n^*S_n$ converges to the identity in some sense. Translating this into conditions on symbols we get the following:

\begin{theorem}\label{thm:cptp}
    Let $(\phi_n)_{n\in\N_0}$ be a sequence of symbols in $s_0(1)$ for which the partial sums of\footnote{Recall that $\star^B$ denotes the magnetic Moyal product \cite{Mantoiu2004,Iftimie2007}.} $\sum_{n=0}^\infty \overline{\phi_n}\mathbin{\star^B}\phi_n$ are uniformly bounded in $s_0(1)$ and $\sum_{n=0}^\infty \overline{\phi_n}\mathbin{\star^B}\phi_n=1$ with pointwise convergence.

    Then $\sum_{n=0}^\infty\phi_n\otimes\overline{\phi_n}$ converges to a tempered distribution $\Phi$ in the weak*-topology of $\mathscr{S}'(\R^{4d})$ and the super operator $\mathfrak{Op}^A(\Phi)$ is bounded, completely positive, and trace preserving on $\mathcal{B}_1(L^2(\R^d))$.
\end{theorem}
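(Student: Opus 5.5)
Set $S_n\coloneq\mathfrak{op}^A(\phi_n)$. Since $\phi_n\in s_0(1)$, the magnetic Calderón--Vaillancourt theorem gives $S_n\in\mathcal{B}(L^2(\R^d))$. Because $\mathfrak{op}^A(\overline{\phi})=\mathfrak{op}^A(\phi)^*$, we get $\mathfrak{op}^A(\overline{\phi_n}\mathbin{\star^B}\phi_n)=S_n^*S_n$, and by \eqref{eq:odot_identity} the partial sums $\Phi_N=\sum_{n\le N}\phi_n\otimes\overline{\phi_n}$ satisfy $\mathfrak{Op}^A(\Phi_N)T=\sum_{n\le N}S_nTS_n^*$. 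The plan is to reduce everything to the Kraus form $\mathcal{E}(T)=\sum_n S_nTS_n^*$ with $\sum_n S_n^*S_n=\mathrm{Id}$ strongly, and then transfer the convergence back to the super symbol.

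\textbf{Step 1: $\sum_n S_n^*S_n=\mathrm{Id}$ strongly.} Let $\psi_N=\sum_{n\le N}\overline{\phi_n}\mathbin{\star^B}\phi_n$. By hypothesis $\psi_N$ is bounded in $s_0(1)$ and $\psi_N\to 1$ pointwise. Arzelà--Ascoli applied to all derivatives gives $\psi_N\to1$ in $C^\infty$ on compacts with uniform $s_0(1)$ bounds. This implies $\langle \psi_N,\chi\rangle\to\langle 1,\chi\rangle$ for all $\chi\in\mathscr{S}(\R^{2d})$, so $\mathfrak{op}^A(\psi_N)\to\mathrm{Id}$ weakly on $\mathscr{S}$. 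Calderón--Vaillancourt gives $\sup_N\|\mathfrak{op}^A(\psi_N)\|<\infty$, so by density the convergence is weak on $L^2$. The operators $P_N=\sum_{n\le N}S_n^*S_n$ are positive and increasing, so weak convergence upgrades to strong convergence (monotone convergence of operators). In particular $\sum_n\|S_nf\|^2=\|f\|^2$.

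\textbf{Step 2: the Kraus map on $\mathcal{B}_1$.} For $T\ge0$ trace class, $\sum_n\tr(S_nTS_n^*)=\sum_n\tr(T^{1/2}S_n^*S_nT^{1/2})=\tr(T)$ by monotone convergence. So the series converges in $\mathcal{B}_1$, with positive partial sums bounded by the full sum. Decomposing a general $T$ into four positive parts gives absolute convergence in $\mathcal{B}_1$, $\|\mathcal{E}\|\le 4$ (in fact $\le1$), and trace preservation. Complete positivity follows because $\mathcal{E}\otimes\mathrm{Id}_{\C^k}$ has Kraus operators $S_n\otimes \mathrm{Id}$, and each $T\mapsto S_nTS_n^*$ is completely positive; positive limits remain positive.

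\textbf{Step 3: identification with $\mathfrak{Op}^A(\Phi)$.} This is the step I expect to be the main obstacle. Pair $\Phi_N$ with a test function $\Theta\in\mathscr{S}(\R^{4d})$. Since $K^A$ and $\mathfrak{Int}$ are homeomorphisms, $\langle\Phi_N,\Theta\rangle$ equals a pairing $\langle \mathfrak{Op}^A(\Phi_N)R, R'\rangle_{\mathcal{B}',\mathcal{B}}$ summed over a rapidly decaying frame expansion. Concretely, I would write $\Theta$ via $K^A$ and the frame $\mathcal{G}^A\otimes\cdots$ with Schwartz coefficients (Lemma \ref{lem:convergence_schwartz}), giving
\[
\langle\Phi_N,\Theta\rangle=\sum c_{\cdot}\,\langle\mathcal{E}_N(\mathcal{T}^A_{\tilde\gamma,\tilde\delta}),\overline{\mathcal{T}^A_{\tilde\alpha,\tilde\beta}}\rangle .
\]
Each term converges by Step 2 and is bounded by $4\|\mathcal{T}\|_{\mathcal{B}_1}\|\mathcal{T}\|_{\mathcal{B}}$, uniformly in the indices. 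Dominated convergence over the rapidly decaying coefficients then yields weak$^*$ convergence of $\Phi_N$ to some $\Phi$. The same computation shows $\mathfrak{Op}^A(\Phi)$ agrees with $\mathcal{E}$ on smoothing operators. Since smoothing operators are dense in $\mathcal{B}_1$, $\mathfrak{Op}^A(\Phi)$ extends to $\mathcal{B}_1$ as the bounded map $\mathcal{E}$, which is completely positive and trace preserving.
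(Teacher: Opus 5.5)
Your proposal is correct and follows essentially the same route as the paper: reduce to the Kraus form $\sum_n S_nTS_n^*$, upgrade weak convergence of $\mathfrak{op}^A(\psi_N)$ to $\mathrm{Id}$ via monotonicity, get trace-norm Cauchy estimates from a positive decomposition of $T$, and identify the limit through frame matrix elements and dominated convergence. The differences are minor. You bound the matrix elements of $\mathfrak{Op}^A(\Phi_N)$ uniformly in the indices directly from the uniform $\mathcal{B}_1$-bound, rather than through the paper's Banach--Steinhaus polynomial bound. You also obtain $\Phi$ as the weak$^*$ limit, which uses weak$^*$ sequential completeness of $\mathscr{S}'$ or the explicit Schwartz-seminorm bound your estimate gives, instead of first producing it from the limiting super operator via the kernel theorem.
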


\begin{proof}
    Our strategy is to show that the quantizations of $\sum_{n=0}^N\phi_n\otimes\overline{\phi_n}$ converge strongly to a super operator $\mathfrak{T}$ on $\mathcal{B}_1(L^2(\R^d))$, and that $\mathfrak{T}$ is bounded, completely positive, and trace preserving. We then prove that the sum $\sum_{n=0}^\infty\phi_n\otimes\overline{\phi_n}$ converges and that the limit equals the super symbol of $\mathfrak{T}$. But before we are able to do any of this, we need to analyze the quantizations of $\sum_{n=0}^N \overline{\phi_n}\mathbin{\star^B}\phi_n$.

    Let $\psi_N\coloneq\sum_{n=0}^N \overline{\phi_n}\mathbin{\star^B}\phi_n$ for $N\in\N_0$. We show that $\mathfrak{op}^A(\psi_N)\rightarrow\mathrm{Id}$ ultraweakly in $L^2(\R^d)$ as $N\rightarrow\infty$. From the magnetic Calderón-Vaillancourt theorem \cite[Theorem 3.1]{Iftimie2007} we see that $(\mathfrak{op}^A(\psi_N))_{N\in\N_0}$ are bounded operators on $L^2(\R^d)$ with uniformly bounded operator norm. Furthermore, the sequence $(\psi_N)_{N\in\N_0}$ is bounded in $s_0(1)$ and converge pointwise to $1$, so $\langle\mathfrak{op}^A(\psi_N)f,g\rangle_{\mathscr{S}',\mathscr{S}}\rightarrow\langle f,g\rangle_{\mathscr{S}',\mathscr{S}}$ when $N\rightarrow\infty$ for $f,g\in\mathscr{S}(\R^d)$. If $f,g\in L^2(\R^d)$ we can find $f_\varepsilon,g_\varepsilon\in\mathscr{S}(\R^d)$ for which $\Vert f-f_\varepsilon\Vert_{L^2},\Vert g-g_\varepsilon\Vert_{L^2}<\varepsilon<1$, and so:
    \begin{align*}
        |\langle g,(\mathfrak{op}^A(\psi_N)&-\mathrm{Id})f\rangle_{L^2}|\\
        &\leq|\langle g,(\mathfrak{op}^A(\psi_N)-\mathrm{Id})(f-f_\varepsilon)\rangle_{L^2}|+|\langle g-g_\varepsilon,(\mathfrak{op}^A(\psi_N)-\mathrm{Id})f_\varepsilon\rangle_{L^2}|\\
        &\quad+|\langle g_\varepsilon,(\mathfrak{op}^A(\psi_N)-\mathrm{Id})f_\varepsilon\rangle_{L^2}|\\
        &\leq C\varepsilon+|\langle \overline{g_\varepsilon},(\mathfrak{op}^A(\psi_N)-\mathrm{Id})f_\varepsilon\rangle_{\mathscr{S}',\mathscr{S}}|
    \end{align*}
    for some $C>0$ not dependent on $f_\varepsilon,g_\varepsilon$. This implies that:
    \begin{align*}
        \limsup_{N\rightarrow\infty}|\langle g,(\mathfrak{op}^A(\psi_N)-\mathrm{Id})f\rangle_{L^2}|\leq C\varepsilon
    \end{align*}
    Letting $\varepsilon\rightarrow0$ we see that $\mathfrak{op}^A(\psi_N)$ converges weakly to $\mathrm{Id}$ in $L^2(\R^d)$ when $N\rightarrow\infty$. Now the fact that $(\mathfrak{op}^A(\psi_N))_{N\in\N_0}$ is an increasing sequence of non-negative operators together with the weak convergence implies that they converge ultraweakly to $\mathrm{Id}$, see \cite{Kraus1971} and the references therein.

    Next we consider the super quantizations of $\Phi_N\coloneq\sum_{n=0}^N \phi_n\otimes\overline{\phi_n}$ for $N\in\N_0$. Specifically, we show that $\big(\mathfrak{Op}^A(\Phi_N)\big)_{N\in\N_0}$ converge strongly on $\mathcal{B}_1(L^2(\R^d))$ to a bounded super operator. Note that $\mathfrak{Op}^A(\Phi_N)$ are bounded on $\mathcal{B}_1(L^2(\R^d))$ as we saw in the proof of Proposition \ref{prop:continuoustensor}.

    Let $S\in\mathcal{B}_1(L^2(\R^d))$. We write $S$ as a linear combination of positive trace-class operators $S=R_1-R_2+i(R_3-R_4)$ and see that for $N_1>N_2$:
    \begin{align*}
        \Big\Vert \mathfrak{Op}^A(\Phi_{N_1}-\Phi_{N_2})S\Big\Vert_{\mathcal{B}_1(L^2)}&\leq\sum_{j=1}^4\sum_{n=N_2+1}^{N_1}\tr\big(\mathfrak{op}^A(\phi_n)R_j\mathfrak{op}^A(\overline{\phi_n})\big)\\
        &=\sum_{j=1}^4\tr(\mathfrak{op}^A(\psi_{N_1}-\psi_{N_2})R_j)
    \end{align*}
    Since $\mathfrak{op}^A(\psi_N)\rightarrow\mathrm{Id}$ ultraweakly as $N\rightarrow\infty$ we conclude that the right hand side of the above goes to zero as $N_1,N_2\rightarrow\infty$, see \cite{Kraus1971}. So $\big(\mathfrak{Op}^A(\Phi_N)\big)_{N\in\N_0}$ converge strongly to a super operator $\mathfrak{T}$ on trace-class operators. We prove that $\mathfrak{T}$ is bounded by showing that $\big(\mathfrak{Op}^A(\Phi_N)\big)_{N\in\N_0}$ are uniformly bounded: For $S\in\mathcal{B}_1(L^2(\R^d))$ we have
    \begin{align*}
        \big\Vert \mathfrak{Op}^A(\Phi_N)S\big\Vert_{\mathcal{B}_1(L^2)}&=\sup_{\Vert R\Vert_{\mathcal{B}(L^2)}=1}\big|\tr\big(R\, \mathfrak{Op}^A(\Phi_N)S\big)\big|\\
        &=\sup_{\Vert R\Vert_{\mathcal{B}(L^2)}=1}\Bigg|\tr\Bigg(\sum_{n=0}^N\mathfrak{op}^A(\phi_n)^*R\, \mathfrak{op}^A(\phi_n)S\Bigg)\Bigg|\\
        &\leq\Vert S\Vert_{\mathcal{B}_1(L^2)}\sup_{\Vert R\Vert_{\mathcal{B}(L^2)}=1}\Bigg\Vert\sum_{n=0}^N\mathfrak{op}^A(\phi_n)^* R\, \mathfrak{op}^A(\phi_n)\Bigg\Vert_{\mathcal{B}(L^2)}\\
        &\leq \Vert S\Vert_{\mathcal{B}_1(L^2)},
    \end{align*}
    where in the last step we used the fact that 
    $$\Bigg\Vert\sum_{n=0}^N\mathfrak{op}^A(\phi_n)^* R\, \mathfrak{op}^A(\phi_n)\Bigg\Vert_{\mathcal{B}(L^2)}\leq \Vert R\Vert_{\mathcal{B}(L^2)} =1$$
    as long as 
    $\mathfrak{op}^A(\psi_N)=\sum_{n=0}^N\mathfrak{op}^A(\phi_n)^*\mathfrak{op}^A(\phi_n)\leq\mathrm{Id}$. Hence $\big(\mathfrak{Op}^A(\Phi_N)\big)_{N\in\N_0}$ are uniformly bounded by $1$, implying that $\mathfrak{T}$ is a bounded super operator.

    Now, $\mathfrak{T}$ is trace preserving since
    \begin{equation*}
        \tr(\mathfrak{T}S)=\lim_{N\rightarrow\infty}\tr\Big(\mathfrak{Op}^A(\Phi_N)S\Big)=\lim_{N\rightarrow\infty}\tr(\mathfrak{op}^A(\psi_N)S)=\tr(S)
    \end{equation*}
    with the last identity following from the ultraweak convergence of $(\mathfrak{op}^A(\psi_N))_{N\in\N_0}$. It is also completely positive since for $k\in\N$,
    \begin{equation*}
        \big(\mathfrak{T}\otimes\mathrm{Id}_{\mathcal{B}(\C^k)}\big)S=\sum_{n=0}^\infty\big(\mathfrak{op}^A(\phi_n)\otimes\mathrm{Id}_{\C^k}\big)^* S\big(\mathfrak{op}^A(\phi_n)\otimes\mathrm{Id}_{\C^k}\big)
    \end{equation*}
    is positive whenever $S\in\mathcal{B}_1(L^2(\R^d))\otimes\C^k$ is positive.

    As a last step in the proof we show that $\sum_{n=0}^\infty\phi_n\otimes\overline{\phi_n}=\lim_{N\rightarrow\infty}\Phi_N$ exists in some sense and is equal to the symbol of $\mathfrak{T}$. Since $\mathfrak{T}$ is bounded on $\mathcal{B}_1(L^2(\R^d))$, then $\mathfrak{T}$ is also bounded from $\mathcal{B}(\mathscr{S}'(\R^d),\mathscr{S}(\R^d))$ into $\mathcal{B}(\mathscr{S}(\R^d),\mathscr{S}'(\R^d))$. Hence $\mathfrak{T}=\mathfrak{Op}^A(\Phi)$ for some $\Phi\in\mathscr{S}(\R^{4d})$. Let $F\in\mathscr{S}(\R^{4d})$. Then using \cite[Lemma 2.2]{Thorn2026} we get:
    \begin{equation}\label{eq:cptpweak}
        \begin{aligned}
            \langle K^A\Phi,F\rangle_{\mathscr{S}',\mathscr{S}}&=\sum_{\tilde{\alpha},\tilde{\beta},\tilde{\gamma},\tilde{\delta}\in\Z^{2d}}\mathbb{M}_{(\tilde{\alpha},\tilde{\beta}),(\tilde{\gamma},\tilde{\delta})}^A[\mathfrak{Op}^A(\Phi)]\langle F,\overline{\mathcal{G}_{\tilde{\gamma}}^A}\otimes\mathcal{G}_{\tilde{\delta}}^A\otimes\mathcal{G}_{\tilde{\alpha}}^A\otimes\overline{\mathcal{G}_{\tilde{\beta}}^A}\rangle_{\mathscr{S}',\mathscr{S}}\\
            &=\sum_{\tilde{\alpha},\tilde{\beta},\tilde{\gamma},\tilde{\delta}\in\Z^{2d}}\lim_{N\rightarrow\infty}\mathbb{M}_{(\tilde{\alpha},\tilde{\beta}),(\tilde{\gamma},\tilde{\delta})}^A[\mathfrak{Op}^A(\Phi_N)]\langle F,\overline{\mathcal{G}_{\tilde{\gamma}}^A}\otimes\mathcal{G}_{\tilde{\delta}}^A\otimes\mathcal{G}_{\tilde{\alpha}}^A\otimes\overline{\mathcal{G}_{\tilde{\beta}}^A}\rangle_{\mathscr{S}',\mathscr{S}}
        \end{aligned}
    \end{equation}
    Since $\mathfrak{Op}^A(\Phi_N)$ converges strongly on $\mathcal{B}_1(L^2(\R^d))$ to $\mathfrak{Op}^A(\Phi)$ we can use the Banach-Steinhaus theorem \cite[Theorem 4.16]{Osborne2014} to conclude that
    \begin{align*}
        \sup_{N\in\N_0}\bigg|\mathbb{M}_{(\tilde{\alpha},\tilde{\beta}),(\tilde{\gamma},\tilde{\delta})}^A[\mathfrak{Op}^A(\Phi_N)]\bigg|&\leq C_1\sum_{\lambda\in\N_0^{4d},|\lambda|\leq k}\Big\Vert\langle\cdot\rangle^k\partial^\lambda\big(\mathcal{G}_{\tilde{\gamma}}^A\otimes\overline{\mathcal{G}_{\tilde{\delta}}^A}\otimes\overline{\mathcal{G}_{\tilde{\alpha}}^A}\otimes\mathcal{G}_{\tilde{\beta}}^A\big)\Big\Vert_{L^\infty}\\
        &\leq C_2\langle(\tilde{\alpha},\tilde{\beta},\tilde{\gamma},\tilde{\delta})\rangle^{l}
    \end{align*}
    for some $C_1,C_2,k,l>0$ independent of $\tilde{\alpha},\tilde{\beta},\tilde{\gamma},\tilde{\delta}$. \cite[Lemma 2.2]{Thorn2026} states that $\langle F,\overline{\mathcal{G}_{\tilde{\gamma}}^A}\otimes\mathcal{G}_{\tilde{\delta}}^A\otimes\mathcal{G}_{\tilde{\alpha}}^A\otimes\overline{\mathcal{G}_{\tilde{\beta}}^A}\rangle_{\mathscr{S}',\mathscr{S}}$ has super polynomial decay in $\tilde{\alpha},\tilde{\beta},\tilde{\gamma},\tilde{\delta}$, so we may use dominated convergence in \eqref{eq:cptpweak}, resulting in:
    \begin{align*}
        \langle K^A\Phi,F\rangle_{\mathscr{S}',\mathscr{S}}&=\lim_{N\rightarrow\infty}\sum_{\tilde{\alpha},\tilde{\beta},\tilde{\gamma},\tilde{\delta}\in\Z^{2d}}\mathbb{M}_{(\tilde{\alpha},\tilde{\beta}),(\tilde{\gamma},\tilde{\delta})}^A[\mathfrak{Op}^A(\Phi_N)]\langle F,\overline{\mathcal{G}_{\tilde{\gamma}}^A}\otimes\mathcal{G}_{\tilde{\delta}}^A\otimes\mathcal{G}_{\tilde{\alpha}}^A\otimes\overline{\mathcal{G}_{\tilde{\beta}}^A}\rangle_{\mathscr{S}',\mathscr{S}}\\
        &=\lim_{N\rightarrow\infty}\langle K^A\Phi_N,F\rangle_{\mathscr{S}',\mathscr{S}}
    \end{align*}
    So $K^A\Phi_N\rightarrow K^A\Phi$ as $N\rightarrow\infty$ in the weak*-topology. Finally, using the formal transpose\footnote{Recall that the formal transpose for an operator $T$ on Schwartz functions is an operator $S$ on Schwartz functions such that $\langle Tf,g\rangle_{\mathscr{S}',\mathscr{S}}=\langle f,Sg\rangle_{\mathscr{S}',\mathscr{S}}$.} of $K^A$ we conclude that $\Phi_N\rightarrow \Phi$ as $N\rightarrow\infty$ in the weak*-topology.
\end{proof}

\begin{remark}
    If $(\phi_n)_{n\in\N_0}$ are absolutely summable in $s_0(1)$, then $\sum_{n=0}^\infty\phi_n\otimes\overline{\phi_n}$ would converge absolutely in $s_0(1)\mathbin{\widehat{\otimes}}s_0(1)$ and so the limit would be a Hörmander super symbol since $s_0(1)\mathbin{\widehat{\otimes}}s_0(1)\subseteq S_0(1)$.
\end{remark}

\begin{remark}
    It may be quite hard to fulfill the condition:
    \begin{equation*}
        \sum_{n\in\N_0} \overline{\phi_n}\mathbin{\star^B}\phi_n=1
    \end{equation*}
    We mention shortly that if $B=0$ and we only let the $\phi_n$'s depend on the momentum variables, then $\overline{\phi_n}\mathbin{\star^0}\phi_n=|\phi_n|^2$, reducing the task to finding a sequence $(\phi_n)_{n\in\N_0}$ satisfying:
    \begin{equation*}
        \sum_{n=0}^\infty|\phi_n|^2=1
    \end{equation*}
    One example of such a sequence is $\phi_n$'s being a quadratic partition of unity.

    When $B\neq0$, one could perhaps change a sequence $(\phi_n)_{n\in\N_0}$ from the non-magnetic case to the magnetic case by considering the change of quantization $\phi_n^A=(k^A)^{-1}k^0\phi_n$. We will end this discussing for now and return to the problem of constructing suitable sequences of symbols in future work.
\end{remark}

We note that the assumption of pointwise convergence in Theorem \ref{thm:cptp} could be replaced by weak*-convergence in $\mathscr{S}'(\R^{2d})$ or convergence in $C^\infty(\R^{2d})$ with its canonical topology, see \cite[Proposition 1.1.2]{Nicola2010}.

%\section{Discussion}

%Sobolev spaces of tempered weights, extending several results

%inadequacies of frame techniques for proving ellipticity and positivity

%"Sobolev operator spaces", i.e. $\mathfrak{Op}^A(\Phi)\big(\mathcal{B}(L^2(\R^d))\big)$ where $\Phi$ is elliptic

%Lindbladians for the magnetic pseudo-differential super calculus, jump operators as quantizations. How this impacts the corresponding semi-group of cptp maps.

\begin{small}

\paragraph{Acknowledgements.}

This work was supported, in part, by the Danish National Research Foundation (DNRF), through the Center CLASSIQUE, grant nr. 187. HC also acknowledges support from Grant DFF 5281-00046B of the Independent Research Fund Denmark $|$ Natural Sciences.

\end{small}

\appendix

\section{Tempered Weights}\label{app:tempered_weights}

In this appendix we collect some results on tempered weights. As stated in Section \ref{sec:defn}, a tempered weight is a positive function $m\colon\R^n\rightarrow(0,\infty)$ such that there exists $a,C>0$ for which
\begin{equation*}
    m(x+y)\leq Cm(x)\langle y\rangle^a,\quad\forall x,y\in\R^n,
\end{equation*}
holds.

\begin{lemma}\label{lem:app_smoorth_tempered}
    For a tempered weight $m$ on $\R^n$ we may always find a smooth tempered weight $\tilde{m}$ such that there exists $C>0$ for which
    \begin{equation*}
        \frac{1}{C}m(x)\leq\tilde{m}(x)\leq Cm(x)
    \end{equation*}
    for all $x\in\R^n$.
\end{lemma}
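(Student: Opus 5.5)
We will smooth $m$ by convolving it with a mollifier: set
\begin{equation*}
\tilde m = m*\rho, \qquad \rho\in C_0^\infty(\R^n),\quad \rho\geq0,\quad \supp\rho\subseteq B(0,1),\quad \int\rho=1 .
\end{equation*}
A technical point comes first. A tempered weight need not be measurable, and the convolution requires measurability. The defining inequality, used in both directions, gives $C^{-1}m(x)\langle y\rangle^{-a}\leq m(x+y)\leq Cm(x)\langle y\rangle^a$. Hence $m$ is locally bounded above and below by positive constants, with $m(x+y)\sim m(x)$ uniformly for $|y|\leq1$. If measurability is an issue, we first replace $m$ by a measurable comparable weight. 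Take a lattice function $m_\Z(x)\coloneq m(\lfloor x\rfloor)$, where $\lfloor\cdot\rfloor$ is taken coordinatewise. It is piecewise constant, hence Borel, and by the above $m_\Z\sim m$ with constants $C\langle\sqrt n\rangle^a$. It is still tempered, since
\begin{equation*}
m_\Z(x+y)\leq C'm(x+y)\leq C''m(x)\langle y\rangle^a\leq C'''m_\Z(x)\langle y\rangle^a .
\end{equation*}
So we may assume $m$ is measurable and locally bounded.

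Next we prove comparability. For $|y|\leq1$ we have $C^{-1}2^{-a/2}m(x)\leq m(x-y)\leq C2^{a/2}m(x)$. Integrating this against $\rho(y)\,dy$ gives
\begin{equation*}
C^{-1}2^{-a/2}m(x)\leq\tilde m(x)\leq C2^{a/2}m(x).
\end{equation*}
Smoothness follows by differentiating under the integral: $\partial^\gamma\tilde m = m*\partial^\gamma\rho$, which is justified by local boundedness of $m$ and compact support of $\rho$. Positivity is clear.

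Finally, $\tilde m$ is tempered because comparable weights inherit the Peetre inequality:
\begin{equation*}
\tilde m(x+y)\leq C_1m(x+y)\leq C_1Cm(x)\langle y\rangle^a\leq C_2\tilde m(x)\langle y\rangle^a .
\end{equation*}
As a bonus, $|\partial^\gamma\tilde m|\leq\int m(x-y)|\partial^\gamma\rho(y)|\,dy\leq C_\gamma\tilde m(x)$, so $\tilde m\in s_0(\tilde m)$. This is the property underlying the identification $s_0(m)=s_0(\tilde m)$ stated in Section \ref{sec:defn}.

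The only genuine obstacle is the measurability issue, and the lattice replacement handles it. Everything else is routine use of the two-sided Peetre bound.
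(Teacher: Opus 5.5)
Your proof is correct, and it is closer to the paper's than it may appear. Since $m(\lfloor\cdot\rfloor)=\sum_{\alpha\in\Z^n}m(\alpha)\,1_{\alpha+[0,1)^n}$, your weight equals $\sum_{\alpha\in\Z^n}m(\alpha)\,\tau_\alpha h$ with $h:=1_{[0,1)^n}*\rho$. This $h$ is a nonnegative $C_c^\infty$ function satisfying $\sum_\alpha\tau_\alpha h=1*\rho\equiv1$. So your weight is exactly the paper's construction $\tilde m=\sum_\alpha m(\alpha)\tau_\alpha h$, for a particular partition of unity.

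The paper's presentation samples $m$ only on $\Z^n$. That makes your measurability detour unnecessary, and smoothness follows directly from local finiteness of the sum. Your mollifier presentation, in turn, makes the extra estimate $|\partial^\gamma\tilde m|\le C_\gamma\tilde m$ immediate.

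Your route is genuinely better for the Peetre inequality. You deduce it from comparability. The paper's displayed estimate instead compares $m(\alpha)$ and $m(\beta)$ with $m(0)$. It therefore only bounds $\tilde m(x+y)/\tilde m(x)$ by something of order $\langle x+y\rangle^a\langle x\rangle^a$. That is not bounded by a constant times $\langle y\rangle^a$: take $y=0$ and $|x|\to\infty$. The lemma is still true, and the natural repair of the paper's argument is precisely yours: establish $\tilde m\sim m$ first, then let $\tilde m$ inherit temperedness from $m$.
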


\begin{proof}
    Choose a non-negative function $h\in C_c^\infty(\R^n)$ such that $\sum_{\alpha\in\Z^n}\tau_\alpha h\equiv1$ where $\tau_\alpha$ is a translation, i.e. $\big(\tau_\alpha f\big)(x)=f(x-\alpha)$. Then $\tilde{m}\colon\R^n\ni x\mapsto \sum_{\alpha\in\Z^n}m(\alpha)\big(\tau_\alpha h\big)(x)$ is a smooth tempered weight with behavior like $m$.
    
    It is indeed a tempered weight since for $x,y\in\R^n$ we have:
    \begin{align*}
        \tilde{m}(x+y)\tilde{m}(x)^{-1}&=\frac{\sum_{\alpha\in\Z^n}m(\alpha)\big(\tau_\alpha h\big)(x+y)}{\sum_{\beta\in\Z^n}m(\beta)\big(\tau_\beta h\big)(x)}\leq\frac{\sum_{\alpha\in\Z^n}C\langle\alpha\rangle^a\big(\tau_\alpha h\big)(x+y)}{\sum_{\beta\in\Z^n}C^{-1}\langle\beta\rangle^{-a}
        \big(\tau_\beta h\big)(x)}\\
        &\leq C_1\frac{\langle x+y\rangle^a\sum_{\alpha\in\Z^n}\langle x+y-\alpha\rangle^a\big(\tau_\alpha h\big)(x+y)}{\langle x\rangle^{-a}\sum_{\beta\in\Z^n}\langle x-\beta\rangle^{a}
        \big(\tau_\beta h\big)(x)}\leq C_2\langle y\rangle^a
    \end{align*}
    for constants $C_1,C_2>0$ and where $C,a>0$ satisfy \eqref{eq:peetre} with $m$.

    It also has behavior like $m$ since
    \begin{equation*}
        m(x)=\sum_{\alpha\in\Z^n}m(x)\big(\tau_\alpha h\big)(x)\leq C\sum_{\alpha\in\Z^n}\langle x-\alpha\rangle^am(\alpha)\big(\tau_\alpha h\big)(x)\leq C\Big(\sup_{\supp h}\langle\cdot\rangle^a\Big)\tilde{m}(x)
    \end{equation*}
    and
    \begin{equation*}
        \tilde{m}(x)\leq C\sum_{\alpha\in\Z^n}\langle\alpha-x\rangle^a m(x)\big(\tau_\alpha h\big)(x)\leq C\sup_{\supp h}\langle\cdot\rangle^a\sum_{\alpha\in\Z^n} m(x)\big(\tau_\alpha h\big)(x)=C\Big(\sup_{\supp h}\langle\cdot\rangle^a\Big)m(x)
    \end{equation*}
    holds for all $x\in\R^n$.
\end{proof}

\begin{lemma}\label{lem:app_lebesgue}
    For a tempered weight $m$ on $\R^n$ and $p\in(0,\infty]$, $m\in L^p(\R^n)$ if and only if $m$ is measurable and $(m(\alpha))_{\alpha\in\Gamma}\in\ell^p(\Gamma)$ for some lattice $\Gamma\coloneq L\Z^n$ with $L$ a non-singular $n\times n$ matrix.
\end{lemma}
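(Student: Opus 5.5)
The plan is to compare $m$ on each fundamental cell of $\Gamma$ with its value at the corresponding lattice point, using the tempered estimate \eqref{eq:peetre} in both directions. Write $\Gamma=L\Z^n$ and let $Q=L[0,1)^n$ be a fundamental domain, so that $\R^n=\bigsqcup_{\alpha\in\Gamma}(\alpha+Q)$. Since $Q$ is bounded, $c_Q\coloneq\sup_{v\in Q}\langle v\rangle^a<\infty$. For $x=\alpha+v$ with $v\in Q$, \eqref{eq:peetre} applied twice gives
\begin{equation*}
    m(x)\leq Cm(\alpha)\langle v\rangle^a\leq Cc_Q\,m(\alpha),\qquad m(\alpha)\leq Cm(x)\langle -v\rangle^a\leq Cc_Q\,m(x).
\end{equation*}
Hence there is a constant $K>0$ such that $K^{-1}m(\alpha)\leq m(x)\leq Km(\alpha)$ for all $\alpha\in\Gamma$ and $x\in\alpha+Q$.

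With this two-sided comparison, the case $p<\infty$ is immediate once $m$ is known to be measurable. I would integrate cell by cell to get
\begin{equation*}
    \int_{\R^n}m^p=\sum_{\alpha\in\Gamma}\int_{\alpha+Q}m^p,
\end{equation*}
which lies between $K^{-p}|\det L|\sum_{\alpha}m(\alpha)^p$ and $K^{p}|\det L|\sum_{\alpha}m(\alpha)^p$. So $m\in L^p$ if and only if $(m(\alpha))_{\alpha\in\Gamma}\in\ell^p(\Gamma)$. For $p=\infty$ the same comparison gives
\begin{equation*}
    \operatorname*{ess\,sup}m\leq\sup_{\R^n} m\leq K\sup_{\alpha}m(\alpha).
\end{equation*}
Conversely, $m(\alpha)\leq K m(x)$ on a cell of positive measure, so $m(\alpha)\leq K\Vert m\Vert_{L^\infty}$ for every $\alpha$. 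Thus boundedness on the lattice is equivalent to essential boundedness.

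For the forward direction, measurability is part of the hypothesis $m\in L^p$, and the estimate above shows that the lattice sequence lies in $\ell^p$ for every lattice $\Gamma$, in particular for some lattice. For the converse, measurability is assumed explicitly, and then the integral bound applies.

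The only delicate point is that "$m\in L^p$" presupposes measurability. That is exactly why it appears as an explicit hypothesis in the converse, and why Remark \ref{rem:app1} appeals to Lemma \ref{lem:app_smoorth_tempered} to replace $m$ by a smooth comparable weight. Beyond this I expect no real obstacle; the key step is the uniform cellwise comparison, which relies on the boundedness of $Q$.
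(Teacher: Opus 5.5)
Your proof is correct and follows essentially the paper's argument: compare $m(x)$ with $m(\alpha)$ in both directions near each lattice point using \eqref{eq:peetre}, then sum over the lattice. The only differences are that the paper decomposes with a smooth partition of unity $\sum_{\alpha\in\Gamma}\tau_\alpha h\equiv1$ where you use the disjoint cells $\alpha+Q$, and that you also write out the $p=\infty$ case (including the essential-supremum point), which the paper dismisses as trivial.
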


\begin{proof}
    The proof is trivial when $p=\infty$. When $p<\infty$, choose a non-negative function $h\in C_c^\infty(\R^n)$ such that $\sum_{\alpha\in\Gamma}\tau_\alpha h\equiv1$. Then
    \begin{equation*}
        \int m(x)^p\mathrm{d}x=\sum_{\alpha\in\Gamma}\int m(x)^p\big(\tau_\alpha h\big)(x)\mathrm{d}x\leq C\sum_{\alpha\in\Gamma}m(\alpha)^p\int \big(\tau_\alpha(\langle \cdot\rangle^{ap}h)\big)(x)\mathrm{d}x\leq C_1\sum_{\alpha\in\Gamma}m(\alpha)^p
    \end{equation*}
    and
    \begin{equation*}
        \sum_{\alpha\in\Gamma}m(\alpha)^p= C_2\sum_{\alpha\in\Gamma}\int m(\alpha)^p\big(\tau_\alpha h\big)(x)\mathrm{d}x\leq C_3\sum_{\alpha\in\Gamma}\int m(x)^p\big(\tau_\alpha h\big)(x)\mathrm{d}x=C_3\int m(x)^p\mathrm{d}x
    \end{equation*}
    for constants $C_1,C_2,C_3>0$ and where $C,a>0$ satisfy \eqref{eq:peetre} with $m$. Note that we used monotone convergence to interchange the sums and integrals.
\end{proof}

\begin{lemma}\label{lem:app_tozero}
    For a tempered weight $m$ on $\R^n$, $m$ decays to zero at infinity if and only if some enumeration of  $(m(\alpha))_{\alpha\in\Gamma}$ converges to zero for some lattice $\Gamma\coloneq L\Z^n$ with $L$ a non-singular $n\times n$ matrix.
\end{lemma}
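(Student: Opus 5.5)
The proof uses two facts. First, a sequence indexed by a countable set, enumerated in some order, converges to zero if and only if for every $\varepsilon>0$ only finitely many indices $\alpha$ satisfy $m(\alpha)\geq\varepsilon$; this condition does not depend on the enumeration. Second, Peetre-type inequality \eqref{eq:peetre} transfers smallness from lattice points to nearby points of $\R^n$ with a uniform constant. Fix the lattice $\Gamma=L\Z^n$ and let $R_0\coloneq\sup_{x\in\R^n}\operatorname{dist}(x,\Gamma)$. Since $L$ is non-singular, $R_0<\infty$; indeed $R_0$ is at most the diameter of the fundamental parallelepiped $L[0,1)^n$.

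\emph{($\Rightarrow$)} Suppose $\sup_{\Vert x\Vert>R}m(x)\to0$ as $R\to\infty$. Let $(\alpha_k)_{k\in\N}$ be any enumeration of $\Gamma$. Every ball contains only finitely many lattice points, so $\Vert\alpha_k\Vert\to\infty$, and therefore $m(\alpha_k)\to0$. In fact this gives convergence for every enumeration, not just one.

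\emph{($\Leftarrow$)} Suppose some enumeration of $(m(\alpha))_{\alpha\in\Gamma}$ converges to zero. Fix $\varepsilon>0$. By the first fact, the set $F_\varepsilon\coloneq\{\alpha\in\Gamma: m(\alpha)\geq\varepsilon\}$ is finite, so it is contained in some ball of radius $R_\varepsilon$. Now take $x\in\R^n$ and choose $\alpha(x)\in\Gamma$ with $\Vert x-\alpha(x)\Vert\leq R_0$. Applying \eqref{eq:peetre} with $u=\alpha(x)$ and $v=x-\alpha(x)$ gives
\begin{equation*}
m(x)\leq C\,m(\alpha(x))\,\langle x-\alpha(x)\rangle^a\leq C\langle R_0\rangle^a\, m(\alpha(x)).
\end{equation*}
If $\Vert x\Vert>R_\varepsilon+R_0$, then $\Vert\alpha(x)\Vert>R_\varepsilon$, so $\alpha(x)\notin F_\varepsilon$ and $m(\alpha(x))<\varepsilon$. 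Hence
\begin{equation*}
\sup_{\Vert x\Vert>R_\varepsilon+R_0}m(x)\leq C\langle R_0\rangle^a\,\varepsilon.
\end{equation*}
Since $\varepsilon>0$ was arbitrary, $m$ decays to zero at infinity.

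The only point needing care is the reduction from ``some enumeration converges to zero'' to the enumeration-independent statement that each set $F_\varepsilon$ is finite. This holds because a convergent enumeration has only finitely many terms of size at least $\varepsilon$, and each index appears exactly once in an enumeration. Everything else follows directly from \eqref{eq:peetre} and the fact that $\operatorname{dist}(x,\Gamma)\leq R_0$ for all $x$.
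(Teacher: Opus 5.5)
Your proof is correct and follows the same route as the paper: the forward direction uses that any enumeration of the lattice escapes to infinity, and the converse transfers smallness from lattice points to nearby points of $\R^n$ via \eqref{eq:peetre}. Your explicit use of the covering radius $R_0$ is slightly more careful than the paper. The paper's choice $R=1+\sup_{k<N}|\iota(k)|$ tacitly assumes every point of $\R^n$ lies within distance $1$ of $\Gamma$, which need not hold for a general lattice $L\Z^n$.
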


\begin{proof}
    Let $\iota\colon\N\ni k\mapsto\Gamma$ be an enumeration of $\Gamma$. If $m(x)\rightarrow0$ as $x\rightarrow\infty$, then for every $\varepsilon>0$ there exist $R>0$ such that $\sup_{|x|>R}m(x)<\varepsilon$. For large enough $N\in\N$ we have $|\iota(k)|>R$ for all $k\geq N$, so $m(\iota(k))<\varepsilon$ when $k\geq N$. Thus $m(\iota(k))\rightarrow0$ as $k\rightarrow\infty$.

    For the converse, let some enumeration $\iota$ of $\Gamma$ be given such that $m(\iota(k))\rightarrow0$ as $k\rightarrow\infty$. Let $\varepsilon>0$ be given and find $N\in\N$ such that $m(\iota(k))<\varepsilon$ when $k\geq N$. Define $R=1+\sup_{k<N}|\iota(\alpha)|$. Then for $|x|>R$ we have
    \begin{equation*}
        m(x)\leq C\inf_{k>N}m(\iota(k))\langle x-\iota(k)\rangle^a<C_1\varepsilon
    \end{equation*}
    for constants $C_1>0$ and where $C,a>0$ satisfy \eqref{eq:peetre} with $m$.
\end{proof}

\begin{small}
    
\end{small}

\end{document}